\newtheorem{theorem}{Theorem}
\newtheorem{lemma}{Lemma}
\newtheorem{corollary}{Corollary}
\newtheorem{proposition}{Proposition}
\newtheorem{assumption}{Assumption}
\newcommand{\indep}{\perp \!\!\! \perp}
\title{On Identification of Optimal Dynamic Treatment Regimes with Proxies of Hidden Confounders}
\author{Jeffrey Zhang and Eric Tchetgen Tchetgen}
\date{}
\begin{document}

\maketitle

\begin{abstract}
We consider identification of optimal dynamic treatment regimes in a setting where time-varying treatments are confounded by hidden time-varying confounders, but 
proxy variables of the unmeasured confounders are available. We show that, with two independent proxy variables at each time point that are sufficiently relevant for the hidden confounders, identification of the joint distribution of counterfactuals is possible, thereby facilitating identification of an optimal treatment regime.

\end{abstract}
\section{Introduction}
\subsection{Background}
Estimating single time point optimal treatment regimes and multiple time point dynamic treatment regimes has received much attention in a diverse array of fields, including biostatistics, computer science, and economics. When data are obtained from a randomized controlled trial, A/B test, or a sequential multiple assignment randomized trial (SMART), the assumptions required for identification and estimation of optimal rules/regimes can be met by design. Alternatively, when one wishes to estimate the optimal regime from observational data, unverifiable assumptions such as the no unmeasured confounding assumption (NUCA) are typically invoked in point exposure settings and its time-varying analog termed sequential randomization assumption (SRA) is likewise invoked in time-varying treatment settings  (\cite{Robins1986AEFFECT}, \cite{Murphy2003OptimalRegimes}). 
In this article, we do not assume that experimental data are available, e.g. SMARTs, nor do we impose SRA. Instead, we allow for confounding by hidden factors for which variables with special causal structure, called proxies, are available, which we leverage to identify an optimal dynamic treatment regime in an observational study.  
Our results are also relevant beyond observational studies, as even when treatment assignment is randomized, there often is noncompliance. In a SMART with noncompliance, randomization can be viewed either as a valid instrumental variable or a treatment confounding proxy, one of the two types of proxies we leverage for identification. 
\subsection{Related Literature}
There are several strands of research related to our work. The proximal causal inference framework, originally introduced in \cite{Miao2018IdentifyingConfounder} for the point exposure scenario, has been extended in many directions. Notably, \cite{Qi2023ProximalConfounding} and \cite{Shen2022OptimalLearning} solve the point exposure optimal treatment regime problem under unmeasured confounding using proxies. Proximal inference tools have also been applied to more complex regimes, such as mediation analysis \citep{Dukes2023ProximalAnalysis}, longitudinal studies \citep{TchetgenTchetgen2024, Ying2023ProximalStudies,Singh2021AEffects}, and off-policy evaluation  \citep{Bennett2021ProximalProcesses,Miao2022Off-PolicyModels, Shi2022}, respectively. These works, similar to ours, require solving nested integral equations. In contrast, none require the challenging task of identifying the joint distribution of counterfactuals as a crucial step towards identifying an optimal dynamic regime. \cite{Ying2023ProximalStudies} only considers models for marginal counterfactual distributions indexed by static regimes. \cite{Bennett2021ProximalProcesses}, \cite{Shi2022}, and \cite{Miao2022Off-PolicyModels} study off-policy evaluation with unmeasured confounding using proxies, but their approaches are not amenable to finding optimal policies/regimes because the nuisance bridge functions they introduce depend on the policy being evaluated. 

Alternatives to proximal causal inference with longitudinal data subject to unmeasured confounding include instrumental variable approaches as well as partial identification/sensitivity analysis approaches. \cite{Michael2023InstrumentalTreatment} study identification of a static regime using instrumental variables. \cite{Chen2023EstimatingVariable} study improvement of dynamic treatment regimes using instrumental variables given partial identification bounds, though they do not formally state how these should be obtained and the assumptions required. \cite{Han2023OptimalOrdering} estimates a partial order of dynamic treatment regimes using an instrumental variable. \cite{Bruns-Smith2023} study robust Q-learning under a model analogous to the marginal sensitivity model of \cite{Tan2006AScores}. 

A recent work that achieves identification of optimal dynamic treatment regimes under unmeasured confounding is \cite{Han2021IdentificationEffects}, using instrumental variables in addition to structural assumptions grounded in econometrics.  \cite{Shahn2022a} estimates optimal regime structural nested mean models under unmeasured confounding using a parallel trends assumption. Besides \cite{Han2021IdentificationEffects} and \cite{Shahn2022a}, to the best of our knowledge, there do not exist general identification results for optimal dynamic treatment regimes in the presence of unmeasured confounding. We fill this gap by leveraging proxy variables under proxy relevance conditions formally known as completeness conditions,  with similar characteristics as those in \cite{Ying2023ProximalStudies}.

Finally, one work that achieves identification of the joint distribution of potential outcomes is that of \cite{Shpitser2023} (see Appendix B), though the authors did not explicitly mention this as their goal was identification of the marginal, nor did they make the connection to optimal treatment regimes. \cite{Shpitser2023} required solutions to similar integral equations as ours for identification, though our proof technique is markedly different. Moreover, beyond identification of the joint distribution of potential outcomes, we 1) provide closed forms of the necessary solutions to the integral equations in the fully discrete setting, 2) provide a generalization to more than 2 timepoints, and 3) in a setting where the downstream outcomes take on a finite number of values, we outline identification using a different type of bridge function.

\section{Preliminaries}
\subsection{Notation}
To simplify the exposition, we consider a two time point setting, though the results can be extended to any finite number of time points as discussed in Section C of the supplementary material. We suppose the data contains $n$ iid draws of a random vector $(U_{0},Y_{0},Z_1,A_{1},W_1,U_{1},Y_{1},Z_2, A_{2}, W_2, Y_{2})$. Here, the $A$ variables are treatments, $U_0$ and $U_1$ represent time-varying unmeasured confounders of the effects of $(A_{1},A_{2})$, $Y_{0}$ is a baseline measurement of the outcome which may also confound the effect of subsequent treatments, $Y_1$ and $Y_2$ are post-treatment outcome measurements, and $Z$ and $W$ are proxies, whose causal structure is introduced in the next section. The overarching goal is to identify the optimal treatment regime based on past treatments and outcomes. In the two time point setting, we consider regimes at time point 2, ($d_2 = d_2(y_1,a_1,y_0)$) that depend on past outcomes, and treatment; and at time point 1 ($d_1 = d_1(y_0)$) that depend on the pretreatment outcome measurement. We note that allowing the $Y$ outcome process to be multivariate poses no additional theoretical difficulty, though the proxy conditions we introduce later would need to apply to these multivariate outcomes. Moreover, in practice, there typically exist measured time-varying confounders. We can straightforwardly account for and utilize such variables, though we cannot identify optimal regimes based upon them, unless proxies of such variables are measured. We focus on the univariate outcome case without measured time-varying confounders solely to simplify the exposition. We formalize the framework in Section 3. We use the summation notation $\sum$, which in a slight abuse of notation, can also be interpreted as integral notation when random variables are continuous. Also, we adopt the convention that uppercase variables denote random variables and lower case denote realizations of a random variable. At times, we write $f(a)$ instead of $f(A=a)$ to represent the density or probability mass function of random variable $A$ evaluated at $a$, but the abbreviation will be clear from context. $E$ denotes the expectation operator. We adopt the $\overline{A}_k$ notation to represent a random variable and all its predecessors, i.e. $\overline{A}_k = (A_1,\ldots,A_k)$ or $\overline{A}_k = (A_0,A_1,\ldots,A_k)$. We use $|A|$ to denote the cardinality of a discrete random variable $A$.
\subsection{Assumptions}
We next introduce formal causal assumptions underlying the data. First, we assume potential outcomes $Y_2(a_1,a_2)$ and $Y_1(a_1)$ are well-defined and there is no interference, such that the former represents a unit's outcome had, possibly contrary to fact, its treatment history been set to $(a_1,a_2)$; the latter is likewise defined. We also make the standard consistency assumption, so that $Y_2 = Y_2(A_1,A_2)$ and $Y_1 = Y_1(A_1)$. Key to the approach is an assumption that the unobserved factors $U_0,U_1$ suffice to account for time-varying confounding, and the absence of relationships between the proxies and other variables, which we formalize as follows. 
\begin{assumption}[Sequential Proximal Latent Randomization Assumption]
\label{assumption: proxy}
The following hold: $\{Y_{2}(a_{1},a_{2}) ,\overline{W}_{2}\} \indep \{A_{2},\overline{Z}_{2}\}|( \overline{U}_{1},\overline{Y}_{1},A_{1}=a_{1})$, $\{{Y}_{1}(a_{1}), Y_{2}(
a_{1},a_{2}) , W_{1}\} \indep \{A_{1},Z_{1}\}|(
U_{0},Y_{0})$, and $\{\overline{W}_2,Y_1\} \indep Z_{1} \mid A_{1}, U_{0},Y_{0}$.
\end{assumption}

The first two conditional independence statements correspond to the assumptions from \cite{Ying2023ProximalStudies}; specifically Equations 2-4. The third assumption is an additional assumption that requires that $Z_1$ has no direct effect on $W_2$. The canonical situation where these conditions might be expected to hold would be when $W_1$ and $W_2$ are negative control outcomes and $Z_1$ and $Z_2$ are negative control exposures. In other words, $A_1$ and $A_2$ do not have direct effects on $W_1$ and $W_2$, respectively and $Z_1$ and $Z_2$ do not have direct effects on $Y_1$ and $Y_2$, respectively. In addition, $Z$ and $W$ should not affect each other. Note that Assumption \ref{assumption: proxy} does not preclude $A_1$ from having an effect on $W_2$. A Directed Acyclic Graph (DAG) that satisfies Assumption \ref{assumption: proxy} is depicted in Figure \ref{fig:dag} below. This DAG is similar to the one depicted in Figure 2 of \cite{Ying2023ProximalStudies}, with the addition of a pre-treatment confounder $Y_0$ and an intermediate outcome $Y_1$. The $Y_0,U_0$ and $Y_1,U_1$ are grouped together in a single node for increased readability, and represents that they have the same parent and child nodes. Previously, \cite{TchetgenTchetgen2024} and \cite{Ying2023ProximalStudies} reanalyzed a study of the effect of an anti-rheumatic therapy called Methotrexate on average number of tender joints at end of follow-up among patients with rheumatoid arthritis. They utilized erythrocyte sedimentation rate as a negative control exposure and patient's global assessment as a negative control outcome. More generally, good candidates for negative control exposures include randomization in a SMART with noncompliance, or a time-varying instrumental variable in a longitudinal observational study. For more examples of both types of proxy variables, we refer the reader to \cite{Shi2020}. 
\begin{figure}
    \centering
    \begin{tikzpicture}[scale = 0.79, node distance=1cm]
    \tikzstyle{var} = [draw, circle, minimum width=0.5cm, minimum height=0.5cm]
    
    
        \node[var] (A) at (0,0) {\scriptsize $Y_0, U_0$};  
        \node[var] (C) at (2,-2) {$Z_1$};  
        \node[var] (D) at (2,0) {$A_1$}; 
        \node[var] (E) at (2,2) {$W_1$};  
        \node[var] (F) at (4,0) {\scriptsize $Y_1, U_1$}; 
        \node[var] (H) at (6,-2) {$Z_2$}; 
        \node[var] (I) at (6,0) {$A_2$};  
        \node[var] (J) at (6,2) {$W_2$}; 
        \node[var] (K) at (8,0) {$Y_2$};

    \draw[->, line width = 0.8mm] (A) to (1,0);

    \draw[->] (C) -- (D);
    \draw[->] (C) -- (H);
    \draw[->] (C) -- (I);
    \draw[->] (D) -- (F);
    \draw[->] (D) -- (H);
    \draw[->] (D) to[out=45, in=135] (I);
    \draw[->] (D) to[out=320, in=220] (K);
    \draw[->] (E) -- (F);
    \draw[->] (E) -- (J);
    \draw[->] (E) -- (K);
    \draw[->] (F) -- (H);
    \draw[->] (F) -- (I);
    \draw[->] (F) -- (J);
    \draw[->] (F) to[out=35, in=145] (K);
    \draw[->] (H) -- (I);
    \draw[->] (I) -- (K);
    \draw[->] (J) -- (K);
    
\end{tikzpicture}
    \caption{An example of a DAG where Assumption \ref{assumption: proxy} holds. The bold arrow emanating from $Y_0$ and $U_0$ represents arrows to every other node.}
    \label{fig:dag}
\end{figure}

\section{Proximal Identification}
\subsection{Characterizing the optimal regime}
We wish to solve the optimal treatment regime problem, which requires identifying $d_{2}^{opt}(\overline{y}_{1},
\overline{a}_{1})$ and $d_{1}^{opt}(y_{0})$ that solve
\begin{equation}
\label{eq: value function optimal rule}
d_{2}^{opt}(\overline{y}_{1},
\overline{a}_{1}), d_{1}^{opt}(y_{0}) \in \underset{d_1,d_2}{\arg
\max} \ {E}\left\{Y_{2}\left( d_1(Y_0),d_2(Y_1,A_1,Y_0)\right)\right\}.
\end{equation}
In other words, we seek to find the regime at time point 1 based on a baseline outcome $Y_0$ and regime at time point 2 based on a baseline outcome measurement $Y_0$, previous treatment $A_1$, and previous outcome $Y_1$ that will maximize the final outcome $Y_2$. We refer to ${E}\left\{Y_{2}\left( d_1(Y_0),d_2(Y_1,A_1,Y_0)\right)\right\}$ as the value of a regime $(d_1,d_2)$. The regret of an estimated regime is the value of the optimal regime minus the value of the estimated regime. We point out that maximizing the final outcome is without loss of generality, as the intermediate potential outcomes $Y_1(a_1)$ can be used in defining the value function (see Chapter 6.2.1 of \cite{Tsiatis2019DynamicRegimes} for a detailed discussion). In principle, it is possible to find the optimal regime by identifying the value function for any fixed regime and solving the optimization from Equation \ref{eq: value function optimal rule} directly. However, this is typically not computationally feasible, since identification of the value function of a fixed regime using proxies requires estimating nuisance functions that themselves depend on the fixed regime \citep{Bennett2021ProximalProcesses,Shi2022,Miao2022Off-PolicyModels}. This implies that finding the optimal regime directly using Equation \ref{eq: value function optimal rule} would require estimating (on the order of) as many nuisance functions as there are policies, which could be infinite. We turn to the backwards induction approach to circumvent this issue. The backwards induction approach utilizes the well-known equivalent and more convenient characterizations of $d_1^{opt}$ and $d_2^{opt}$ (e.g.  Chapter 7.2 of \cite{Tsiatis2019DynamicRegimes}): 
\begin{equation}
\label{eq: optimal_rules}
\begin{aligned}
d_{2}^{opt}\left(\overline{y}_{1},a_{1}\right) &= \underset{a_{2}}{\arg \max} \ {E}\left\{Y_{2}\left(a_{1},a_{2}\right) |\overline{Y}_{1}\left( a_{1}\right) =\overline{y}_{1}\right\}  \\
V_{2}^{opt}\left( a_{1},\overline{y}_{1}\right)  &=\underset{a_{2}}{\max } \
{E}\left\{Y_2\left( a_{1},a_{2}\right) |\overline{Y}_{1}\left(
a_{1}\right) =\overline{y}_{1}\right\}  \\
d_{1}^{opt}\left( y_{0}\right)  &=\underset{a_{1}}{\arg \max } \ {E}\left\{V_{2}^{opt}\left( a_{1},Y_{1}\left( a_{1}\right) ,Y_{0}\right) |Y_{0}=y_{0}\right\}.
\end{aligned}
\end{equation}
Notice that ${E}
\left\{ V_{2}^{opt}\left( a_{1},Y_{1}\left( a_{1}\right) ,Y_{0}\right)
|Y_{0}=y_{0}\right\} = \sum_{y_1} V_2^{opt}(a_1, y_1, y_0) f\left( Y_{1}\left(
a_{1}\right) =y_{1}|y_{0}\right)$ and ${E}\{ Y_{2}\left( a_{1},a_{2}\right) |\overline{Y}_{1}\left(
a_{1}\right) =\overline{y}_{1}\}  =\sum_{y_{2}}f\left( Y_{2}\left( a_{1},a_{2}\right)
=y_{2},Y_{1}\left( a_{1}\right) =y_{1}|y_{0}\right) /f\left( Y_{1}\left(
a_{1}\right) =y_{1}|y_{0}\right)y_{2}$, 
which clearly indicates that the quantities $f\left( Y_{2}\left( a_{1},a_{2}\right)
=y_{2},Y_{1}\left( a_{1}\right) =y_{1}|y_{0}\right)$ and $f\left( Y_{1}\left(
a_{1}\right) =y_{1}|y_{0}\right)$ are crucial for identification of the optimal regime. The task of identifying optimal regimes with multiple time points is markedly distinct from and more difficult than identification of optimal regimes in the single time point case. This is because identifying the optimal regime using Equation \ref{eq: optimal_rules} requires identification of the joint distribution of the counterfactuals $\{Y_2(a_1,a_2), Y_1(a_1)\}$. Previous work for observational longitidunal studies such as \cite{Michael2023InstrumentalTreatment} and \cite{Ying2023ProximalStudies} provide identification results for functionals of the marginal distribution of the single counterfactual $Y_2(a_1,a_2)$ for some fixed $a_1,a_2$, but not the joint $\{Y_2(a_1,a_2), Y_1(a_1)\}$. By the latent SRA, i.e. Assumption \ref{assumption: proxy}, the joint density of the counterfactuals can be written in terms of observable and the unmeasured confounders by a standard application of Robins' g-formula. Unfortunately, without access to the unmeasured confounders, we must resort to alternative assumptions for identification.
\subsection{Main Result}
Identification of the optimal regime is based on the existence of solutions to certain carefully defined integral equations. Explicitly, we assume the following:
\begin{assumption}[Time-varying Bridge Functions]
\label{assumption: bridge_functs_2}
There exist functions $h_{22}( \overline{y}
_{2},\overline{w}_{2},\overline{a}_{2})$ and $h_{21}( \overline{y}_{1},w_{1},\overline{a}_{2})$ that satisfy
\begin{align*}
&f( Y_{2}=y_{2}|\overline{y}_{1},\overline{u}_{1},\overline{a}%
_{2}) =\sum_{\overline{w}_{2}}h_{22}( \overline{y}_{2},\overline{w%
}_{2},\overline{a}_{2}) f( \overline{w}_{2}|\overline{y}_{1},
\overline{a}_{2},\overline{u}_{1}), \text{ and }\\ &
\sum_{\overline{w}_{2}}h_{22}( \overline{y}_{2},\overline{w}_{2},%
\overline{a}_{2}) f(y_{1}|y_{0},u_{0},a_{1})f( \overline{w}_{2}|%
\overline{y}_{1},a_{1},u_{0}) =\sum_{w_{1}}h_{21}( \overline{y}%
_{2},w_{1},\overline{a}_{2}) f( w_{1}\mid y_{0},a_{1},u_{0}),
\end{align*}
respectively.
\end{assumption}
The integral equation involving the $h_{22}$ function is similar to previous integral equations in the proxy literature (\citealp{Miao2018IdentifyingConfounder,Ying2023ProximalStudies,TchetgenTchetgen2024}), with the left hand side being a conditional density given unmeasured confounders while the right hand side is a conditional mean function. Meanwhile, the integral equation involving $h_{21}$, does not resemble the typical conditional moment integral equations in the proxy literature. Rearranging the integral equation, we get
\begin{equation*}
f(y_{1}|y_{0},u_{0},a_{1}) =\left. \sum_{w_{1}}h_{21}\left( \overline{y}%
_{2},w_{1},\overline{a}_{2}\right) f\left( w_{1}|y_{0},a_{1},u_{0}\right)  \middle/ \sum_{\overline{w}_{2}}h_{22}\left( \overline{y}_{2},\overline{w}_{2},%
\overline{a}_{2}\right) f\left( \overline{w}_{2}|%
\overline{y}_{1},a_{1},u_{0}\right) \right. .
\end{equation*}
This looks somewhat familiar, as the left hand side is a conditional density including the first unmeasured confounder, while the right hand side is the ratio of two conditional expectations, but the numerator is only conditional on $(y_{0},a_{1},u_{0})$ with randomness over $W_1$ while the denominator is conditional on $(y_1,y_{0},a_{1},u_{0})$ with randomness over $(W_1, W_2)$. 
We can now state our main result relating the joint and marginal counterfactual densities to the $h_{21}$ function.
\begin{theorem}
\label{thm: counterfactual_id}
Under Assumptions \ref{assumption: proxy} and \ref{assumption: bridge_functs_2},
\begin{equation*}
f( Y_{2}( a_{1},a_{2}) =y_{2},Y_{1}( a_{1})
=y_{1}|y_{0}) =\sum_{w_{1}}h_{21}( \overline{y}_{2},w_{1},
\overline{a}_{2}) f( w_{1}|y_{0}) \text{ and }
\end{equation*}
\begin{equation*}
f(Y_{1}( a_{1}) =y_{1}|y_{0}) 
=\sum_{y_2}\sum_{w_{1}}h_{21}( \overline{y}_{2},w_{1},
\overline{a}_{2}) f( w_{1}|y_{0}).
\end{equation*}
\end{theorem}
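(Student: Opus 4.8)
The plan is to derive the joint counterfactual density by a two-stage application of Robins' g-formula, peeling off one time point at a time and substituting the bridge-function identities of Assumption~\ref{assumption: bridge_functs_2} at exactly the moments when the intermediate expressions coincide with their left-hand sides. I would begin by writing, via iterated expectation over the baseline confounder, $f(Y_2(a_1,a_2)=y_2, Y_1(a_1)=y_1 \mid y_0) = \sum_{u_0} f(Y_2(a_1,a_2)=y_2, Y_1(a_1)=y_1 \mid y_0, u_0)\, f(u_0 \mid y_0)$. The second line of Assumption~\ref{assumption: proxy} licenses inserting $A_1 = a_1$ into the conditioning set of the counterfactuals, and consistency then replaces $Y_1(a_1)$ by the observed $Y_1$, so that the inner density factors as $f(Y_2(a_1,a_2)=y_2 \mid \overline{y}_1, u_0, a_1)\, f(y_1 \mid y_0, u_0, a_1)$.

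For the first factor I would integrate over the intermediate confounder $u_1$, use the first line of Assumption~\ref{assumption: proxy} to insert $A_2 = a_2$ (hence $\overline{a}_2$) into the conditioning set, and apply consistency once more to obtain $f(Y_2 = y_2 \mid \overline{y}_1, \overline{u}_1, \overline{a}_2)$ inside the $u_1$-sum. This is precisely the left-hand side of the first display in Assumption~\ref{assumption: bridge_functs_2}, so I substitute the $h_{22}$ representation; then, using $\overline{W}_2 \indep A_2 \mid (\overline{U}_1, \overline{Y}_1, A_1)$ to drop $a_2$ from $f(\overline{w}_2 \mid \cdot)$ and marginalizing $u_1$ back out, the first factor becomes $\sum_{\overline{w}_2} h_{22}(\overline{y}_2, \overline{w}_2, \overline{a}_2)\, f(\overline{w}_2 \mid \overline{y}_1, a_1, u_0)$. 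Multiplying by $f(y_1 \mid y_0, u_0, a_1)$ reproduces exactly the left-hand side of the second display in Assumption~\ref{assumption: bridge_functs_2}, which I replace by its $h_{21}$ form $\sum_{w_1} h_{21}(\overline{y}_2, w_1, \overline{a}_2)\, f(w_1 \mid y_0, a_1, u_0)$.

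The final step integrates this expression against $f(u_0 \mid y_0)$ over $u_0$. Pulling $h_{21}$ outside the $u_0$-sum and invoking the negative-control-outcome independence $W_1 \indep A_1 \mid (U_0, Y_0)$, also contained in the second line of Assumption~\ref{assumption: proxy}, the remaining sum collapses as $\sum_{u_0} f(w_1 \mid y_0, a_1, u_0) f(u_0 \mid y_0) = \sum_{u_0} f(w_1 \mid y_0, u_0) f(u_0 \mid y_0) = f(w_1 \mid y_0)$, which yields the first (joint-density) identity of the theorem. The second (marginal) identity then follows immediately by summing over $y_2$, since $\sum_{y_2} f(Y_2(a_1,a_2)=y_2, Y_1(a_1)=y_1 \mid y_0) = f(Y_1(a_1)=y_1 \mid y_0)$.

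I expect the main obstacle to be one of bookkeeping rather than of any deep difficulty: one must verify that every insertion of a treatment value into a counterfactual's conditioning set is justified by the correct conditional-independence statement, and that the two intermediate expressions align \emph{verbatim} with the left-hand sides of the two equations defining the bridge functions. Assumption~\ref{assumption: bridge_functs_2} is evidently reverse-engineered so that this alignment is exact, so the real content lies in confirming the chain of independences and the two rounds of marginalizing out $U_1$ and then $U_0$. Notably, the third line of Assumption~\ref{assumption: proxy} is not needed for this representation, as it serves instead to identify the bridge functions themselves.
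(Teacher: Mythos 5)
Your proposal is correct and is essentially the paper's proof read in reverse: the paper starts from $\sum_{w_{1}}h_{21}\left( \overline{y}_{2},w_{1},\overline{a}_{2}\right) f\left( w_{1}|y_{0}\right)$ and unwinds it into the g-formula expression using exactly the same two bridge-function substitutions and the same independences ($W_1 \indep A_1 \mid U_0, Y_0$ and $\overline{W}_2 \indep A_2 \mid \overline{U}_1, \overline{Y}_1, A_1$), whereas you start from the counterfactual density and build toward the $h_{21}$ representation. Your closing observations — that the marginal identity follows by summing over $y_2$ and that the third line of Assumption~\ref{assumption: proxy} is not needed here — are both consistent with the paper.
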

Assumption \ref{assumption: bridge_functs_2} requires the existence of solutions to integral equations involving the unmeasured confounders. Thus, without additional assumptions, the functions $h_{22}$, $h_{21}$ are not identified from the observed data. We next introduce conditions under which identification is possible.
\begin{assumption}[Completeness/Relevance]
\label{assumption: completeness}
For any square integrable $g_1$, we have that 
${E}\{g_1(\overline{U}_1) \mid \overline{y}_1, \overline{z}_2, \overline{a}_2\} = 0 \text{ implies } g_1(\overline{U}_1) =0$. In addition, for any square integrable $g_0$, we have that 
${E}\{g_0({U}_0) \mid {y}_0, {z}_1, {a}_1\} = 0 \text{ implies } g_0({U}_0) =0.$ 
\end{assumption}
Completeness roughly requires that any variability in the unmeasured confounders $U$  induces some variability in $Z$. If all variables are discrete, Assumption \ref{assumption: completeness} requires that $\text{dim}(\Bar{u}_1) \leq \text{dim}(\Bar{z}_2)$ and $\text{dim}({u}_0) \leq \text{dim}({z}_1)$. 

\begin{lemma}
\label{lm: link_latent_to_observed}
Suppose Assumption \ref{assumption: completeness} holds, and
$h_{22}\left( \overline{y}%
_{2},\overline{w}_{2},\overline{a}_{2}\right) $ and $h_{21}\left( \overline{y}_{1},w_{1},\overline{a}_{2}\right) $ solve
\begin{equation}
\label{eq: obs_bridge_functions}
\begin{aligned}
&f\left( Y_{2}=y_{2}|\overline{y}_{1},\overline{z}_{2},\overline{a}%
_{2}\right) =\sum_{\overline{w}_{2}}h_{22}\left( \overline{y}_{2},\overline{w%
}_{2},\overline{a}_{2}\right) f\left( \overline{w}_{2}|\overline{y}_{1},%
\overline{a}_{2},\overline{z}_{2}\right), \\
&\sum_{\overline{w}_{2}}h_{22}\left( \overline{y}_{2},\overline{w}_{2},%
\overline{a}_{2}\right) f(y_{1}|y_{0},z_{1},a_{1})f\left( \overline{w}_{2}|
\overline{y}_{1},a_{1},z_{1}\right) =\sum_{w_{1}}h_{21}\left( \overline{y}%
_{2},w_{1},\overline{a}_{2}\right) f\left( w_{1}|y_{0},a_{1},z_{1}\right).
\end{aligned}
\end{equation}
respectively. Then $h_{22}$ and $h_{21}$ also solve the latent integral equations from Assumption \ref{assumption: bridge_functs_2}.
\end{lemma}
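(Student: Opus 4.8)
The plan is to prove the two latent integral equations of Assumption~\ref{assumption: bridge_functs_2} one at a time, using a common template: treat the difference between the two sides of a latent equation as a function of the relevant unmeasured confounder, show that its conditional expectation given the corresponding proxy (and the remaining conditioning variables) vanishes by reducing it to the observed equation \eqref{eq: obs_bridge_functions}, and then invoke the completeness condition in Assumption~\ref{assumption: completeness} to conclude that the function itself is identically zero.

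For the first equation I would set $g_1(\bar{u}_1) = f(Y_2 = y_2 \mid \bar{y}_1, \bar{u}_1, \bar{a}_2) - \sum_{\bar{w}_2} h_{22}(\bar{y}_2, \bar{w}_2, \bar{a}_2) f(\bar{w}_2 \mid \bar{y}_1, \bar{a}_2, \bar{u}_1)$ and compute $E[g_1(\bar{U}_1) \mid \bar{y}_1, \bar{z}_2, \bar{a}_2]$ by averaging over $f(\bar{u}_1 \mid \bar{y}_1, \bar{z}_2, \bar{a}_2)$. The key step is that after this averaging both terms collapse onto their proxy-conditioned analogues: from the first line of Assumption~\ref{assumption: proxy}, combined with consistency and the weak-union property of conditional independence, one obtains $\{Y_2, \bar{W}_2\} \indep \bar{Z}_2 \mid (\bar{U}_1, \bar{Y}_1, \bar{A}_2)$, so that $f(Y_2 = y_2 \mid \bar{y}_1, \bar{u}_1, \bar{a}_2) = f(Y_2 = y_2 \mid \bar{y}_1, \bar{u}_1, \bar{z}_2, \bar{a}_2)$ and $f(\bar{w}_2 \mid \bar{y}_1, \bar{a}_2, \bar{u}_1) = f(\bar{w}_2 \mid \bar{y}_1, \bar{a}_2, \bar{u}_1, \bar{z}_2)$. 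Averaging over $\bar{u}_1$ then yields exactly the two sides of the first line of \eqref{eq: obs_bridge_functions}, whose equality is assumed; hence $E[g_1(\bar{U}_1) \mid \bar{y}_1, \bar{z}_2, \bar{a}_2] = 0$, and the first part of Assumption~\ref{assumption: completeness} forces $g_1 \equiv 0$, which is the first latent equation.

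For the second equation I would proceed analogously, defining $g_0(u_0)$ as the difference of the two sides of the second latent equation viewed as a function of $u_0$ and averaging over $f(u_0 \mid y_0, z_1, a_1)$. Here the reduction is slightly more delicate: on the left side I would use the chain rule to recombine $f(y_1 \mid y_0, u_0, a_1) f(\bar{w}_2 \mid \bar{y}_1, a_1, u_0)$ into the joint $f(y_1, \bar{w}_2 \mid y_0, a_1, u_0)$, push this through to $f(y_1, \bar{w}_2 \mid y_0, a_1, z_1)$ using the third line of Assumption~\ref{assumption: proxy}, namely $\{\bar{W}_2, Y_1\} \indep Z_1 \mid A_1, U_0, Y_0$, and then refactor to recover $f(y_1 \mid y_0, z_1, a_1) f(\bar{w}_2 \mid \bar{y}_1, a_1, z_1)$, the left side of the second line of \eqref{eq: obs_bridge_functions}. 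On the right side the same independence restricted to $W_1 \subseteq \bar{W}_2$ gives $\sum_{u_0} f(w_1 \mid y_0, a_1, u_0) f(u_0 \mid y_0, z_1, a_1) = f(w_1 \mid y_0, a_1, z_1)$. The second observed equation then forces $E[g_0(U_0) \mid y_0, z_1, a_1] = 0$, and the second part of Assumption~\ref{assumption: completeness} yields $g_0 \equiv 0$.

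I expect the main obstacle to be the bookkeeping in the second equation, specifically verifying that the \emph{joint} independence of $(Y_1, \bar{W}_2)$ from $Z_1$ given $(U_0, Y_0, A_1)$ is exactly what permits the product $f(y_1 \mid \cdots)\,f(\bar{w}_2 \mid \cdots)$ to be transported jointly from $u_0$-conditioning to $z_1$-conditioning. This is precisely the role of the extra third conditional independence in Assumption~\ref{assumption: proxy}: were $Z_1$ allowed a direct effect on $W_2$, the joint density would not factor through $Z_1$ in this way and the reduction to the observed equation would fail. The potential-outcome-to-observed-data steps via consistency, and the application of weak union to strip $A_2$ out of the first-line independence, are routine but should be stated explicitly to keep the argument airtight.
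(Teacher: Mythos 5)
Your proposal is correct and follows essentially the same route as the paper's proof: both arguments expand the observed-proxy equations as integrals against $f(\overline{u}_1\mid \overline{y}_1,\overline{z}_2,\overline{a}_2)$ and $f(u_0\mid y_0,z_1,a_1)$ using the same conditional independencies (including the recombination of $f(y_1\mid y_0,u_0,a_1)f(\overline{w}_2\mid\overline{y}_1,a_1,u_0)$ into a joint density so that the third line of Assumption~\ref{assumption: proxy} applies), and then invoke the two completeness conditions to equate the integrands. Framing the final step as $E[g(\cdot)\mid \text{proxies}]=0\Rightarrow g\equiv 0$ for the difference function is just the explicit form of what the paper does implicitly.
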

Lemma \ref{lm: link_latent_to_observed} gives a characterization of the bridge functions in terms of observable variables rather than the latent variables as in Assumption \ref{assumption: bridge_functs_2}. Equipped with this result,  identification of the optimal regime is possible if one can find $h_{22}\left( 
\overline{y}_{2},\overline{w}_{2},\overline{a}_{2}\right) $, $h_{21}\left( \overline{y}_{1},w_{1},\overline{a}_{2}\right) $, that respectively satisfy the equations from Equation \ref{eq: obs_bridge_functions}. We note that uniqueness of solutions to the above equations is not necessary for identification of the optimal regime; any solutions suffice. We outline primitive conditions for the existence (and uniqueness) of solutions to integral equations in Section B of the supplementary material. Theorem \ref{thm: counterfactual_id} and Lemma \ref{lm: link_latent_to_observed} establish that the optimal rules/values at time points 2 and 1 are identified using only observed data as described in the following corollary.
\begin{corollary}
\label{cor: optimal_regime}
Under Assumptions \ref{assumption: proxy}, \ref{assumption: completeness}, and given functions $h_{22}$ and $h_{21}$ that satisfy Equation \ref{eq: obs_bridge_functions}, the optimal regime can be characterized as follows:
\begin{equation*}
\begin{aligned}
d_{2}^{opt}\left( \overline{y}_{1},a_{1}\right)  
&=\underset{a_{2}}{\arg \max }\sum_{y_{2}}\frac{\sum_{w_{1}}h_{21}\left( 
\overline{y}_{2},w_{1},\overline{a}_{2}\right) f\left( w_{1}|y_{0}\right) }{
\sum_{y_2}\sum_{w_{1}}h_{21}\left( \overline{y}_{2},w_{1},
\overline{a}_{2}\right) f\left( w_{1}|y_{0}\right)}y_{2}, \\
V_{2}^{opt}\left( a_{1},\overline{y}_{1}\right)  
&=\underset{a_{2}}{\max }\sum_{y_{2}}\frac{\sum_{w_{1}}h_{21}\left( 
\overline{y}_{2},w_{1},\overline{a}_{2}\right) f\left( w_{1}|y_{0}\right) }{\sum_{y_2}\sum_{w_{1}}h_{21}\left( \overline{y}_{2},w_{1},
\overline{a}_{2}\right) f\left( w_{1}|y_{0}\right)}y_{2}, \\
d_{1}^{opt}\left( y_{0}\right) 
&=\underset{a_{1}}{\arg \max }\sum_{y_{1}}\sum_{y_2}\sum_{w_{1}^*}h_{21}\left( \overline{y}_{2},w_{1}^*,
\overline{a}_{2}\right) f\left( w_{1}^*|y_{0}\right) V_{2}^{opt}\left( a_{1},\overline{y}_{1}\right).
\end{aligned}
\end{equation*}
\end{corollary}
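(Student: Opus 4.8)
The plan is to assemble the corollary from four ingredients that are already in hand: the dynamic-programming characterization of the optimal rules in \eqref{eq: optimal_rules}, the two elementary identities expressing the relevant conditional expectations of counterfactuals as a ratio of (and as an average against) the joint and marginal counterfactual densities that are displayed just before Theorem~\ref{thm: counterfactual_id}, Lemma~\ref{lm: link_latent_to_observed}, and Theorem~\ref{thm: counterfactual_id} itself. The only genuine work is to justify that bridge functions solving the \emph{observed-data} equations \eqref{eq: obs_bridge_functions} may legitimately be substituted into the identified density formulas of Theorem~\ref{thm: counterfactual_id}, which are stated in terms of solutions to the \emph{latent} equations of Assumption~\ref{assumption: bridge_functs_2}.

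I would begin with the time-$2$ rule and value. Starting from \eqref{eq: optimal_rules}, the inner object is $E\{Y_2(a_1,a_2)\mid\overline{Y}_1(a_1)=\overline{y}_1\}$, which the first displayed identity before the theorem writes as $\sum_{y_2} y_2\, f(Y_2(a_1,a_2)=y_2,Y_1(a_1)=y_1\mid y_0)/f(Y_1(a_1)=y_1\mid y_0)$. Given $h_{22},h_{21}$ satisfying \eqref{eq: obs_bridge_functions}, Lemma~\ref{lm: link_latent_to_observed} (invoking Assumption~\ref{assumption: completeness}) guarantees that these same functions solve the latent equations of Assumption~\ref{assumption: bridge_functs_2}, so that Theorem~\ref{thm: counterfactual_id} applies to exactly these $h_{21}$ under Assumption~\ref{assumption: proxy}. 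Substituting its two conclusions---replacing the numerator by $\sum_{w_1}h_{21}(\overline{y}_2,w_1,\overline{a}_2)f(w_1\mid y_0)$ and the denominator by $\sum_{y_2}\sum_{w_1}h_{21}(\overline{y}_2,w_1,\overline{a}_2)f(w_1\mid y_0)$---yields the ratio displayed in the corollary; taking $\arg\max_{a_2}$ gives $d_2^{opt}$ and taking $\max_{a_2}$ gives $V_2^{opt}$. For the time-$1$ rule I would then use the second displayed identity, $E\{V_2^{opt}(a_1,Y_1(a_1),Y_0)\mid Y_0=y_0\}=\sum_{y_1}V_2^{opt}(a_1,\overline{y}_1)\,f(Y_1(a_1)=y_1\mid y_0)$ with $\overline{y}_1=(y_0,y_1)$, and substitute the marginal density from the second conclusion of Theorem~\ref{thm: counterfactual_id}, namely $\sum_{y_2}\sum_{w_1}h_{21}(\overline{y}_2,w_1,\overline{a}_2)f(w_1\mid y_0)$; reordering the finite sums and renaming $w_1$ to $w_1^*$ reproduces the last line of the corollary, and $\arg\max_{a_1}$ gives $d_1^{opt}$.

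I expect the main point requiring care---rather than a computational obstacle---to be the treatment of non-uniqueness together with the passage from observed to latent equations. Assumption~\ref{assumption: bridge_functs_2} need not pin down $h_{21}$ uniquely, so a priori the substituted expressions could depend on which solution is chosen. The resolution is that the left-hand sides in Theorem~\ref{thm: counterfactual_id} are the \emph{true} counterfactual densities, which are functionals of the observed law and hence unique; consequently $\sum_{w_1}h_{21}(\overline{y}_2,w_1,\overline{a}_2)f(w_1\mid y_0)$ takes the same value for every admissible $h_{21}$, and the identified rules are well defined. The role of Lemma~\ref{lm: link_latent_to_observed} is precisely to certify that any $h_{21}$ recovered by solving the estimable equations \eqref{eq: obs_bridge_functions} is admissible in this sense, so that identification ultimately rests only on observed-data quantities. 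A minor related check is that the marginal $f(Y_1(a_1)=y_1\mid y_0)$ entering the $d_1^{opt}$ formula does not in truth depend on the $a_2$ slotted into $h_{21}$, which again follows because its left-hand side in Theorem~\ref{thm: counterfactual_id} is free of $a_2$.
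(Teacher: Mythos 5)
Your proposal is correct and follows essentially the same route as the paper, which treats the corollary as an immediate consequence of the dynamic-programming characterization in Equation \ref{eq: optimal_rules}, the displayed identities relating the conditional expectations to the joint and marginal counterfactual densities, Lemma \ref{lm: link_latent_to_observed} to pass from the observed-data equations to the latent ones, and Theorem \ref{thm: counterfactual_id} for the substitution. Your additional remark on non-uniqueness matches the paper's own observation that any solution to the integral equations suffices because the identified counterfactual densities are the same for every admissible bridge function.
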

In principle, Corollary \ref{cor: optimal_regime} can be used to estimate the optimal dynamic treatment regime. However, in the general case, the bridge functions can be challenging to estimate and compute nonparametrically as they may involve (conditional) densities. In particular, $h_{21}$'s nontrivial dependence on $h_{22}$ is distinct from other bridge functions in the literature (\citealp{Ying2023ProximalStudies, TchetgenTchetgen2024}). In Section \ref{sec: categorical} below, we consider a setting only involving categorical variables as an initial crucial step towards inference in this challenging setting. Inference in the continuous case will be considered elsewhere given the significant challenges it presents.
\subsection{A connection to the single time-point setting}
In the single time-point setting, a single integral equation needs to be solved, rather than nested ones. Specifically, \cite{Miao2018IdentifyingConfounder} require the existence of a function $h_{11}\left( w_{1},
\overline{y}_{1},a_{1}\right) $ that solves the following so-called Fredholm integral equation of the first kind: 
\begin{equation}
\label{eq: bridge_funct_1}
f\left( Y_{1}=y_{1}|y_{0},a_{1},u_{0}\right) =\sum_{w_{1}}h_{11}\left( w_{1},%
\overline{y}_{1},a_{1}\right) f\left( w_{1}|u_{0},y_{0}\right).
\end{equation}
\cite{Miao2018IdentifyingConfounder} show that $f\left( Y_{1}(a_1)=y_{1}|y_{0}\right) =\sum_{w_{1}}h_{11}\left( w_{1},%
\overline{y}_{1},a_{1}\right) f\left( w_{1}|y_{0}\right)$ for any $h_{11}$ that solves Equation \ref{eq: bridge_funct_1}. Thus, in principle, we could solve for such an $h_{11}$ directly. It turns out that $h_{21}$ is related to $h_{11}$. We establish this relationship in the proposition below.
\begin{proposition}
\label{prop: h_21 h_11 relationship}
For any $h_{21}$ that satisfies the conditions in Assumption \ref{assumption: bridge_functs_2}, the function $\sum_{y_2} h_{21}(\overline{y}_2,w_1,\overline{a}_2)$ solves Equation \ref{eq: bridge_funct_1}.
\end{proposition}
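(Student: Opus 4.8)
The plan is to start from the second integral equation in Assumption \ref{assumption: bridge_functs_2}, sum both sides over $y_2$, and then match the two resulting sides against the two sides of Equation \ref{eq: bridge_funct_1} under the substitution $h_{11}(w_1,\overline{y}_1,a_1) = \sum_{y_2} h_{21}(\overline{y}_2,w_1,\overline{a}_2)$. Since the second equation holds identically in $y_2$, summing preserves it, so the whole argument reduces to simplifying each side after the sum.

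First I would treat the right-hand side. Summing over $y_2$ and interchanging the order of summation gives $\sum_{w_1} \bigl[\sum_{y_2} h_{21}(\overline{y}_2,w_1,\overline{a}_2)\bigr] f(w_1 \mid y_0, a_1, u_0)$. Because the second line of Assumption \ref{assumption: proxy} implies $W_1 \indep A_1 \mid (U_0, Y_0)$, I can replace $f(w_1 \mid y_0, a_1, u_0)$ by $f(w_1 \mid u_0, y_0)$, which matches the weighting function appearing in Equation \ref{eq: bridge_funct_1} exactly.

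Next I would treat the left-hand side. Since $f(y_1 \mid y_0, u_0, a_1)$ does not depend on $y_2$, summing over $y_2$ factors it out and leaves $f(y_1 \mid y_0, u_0, a_1)$ multiplied by $S := \sum_{y_2}\sum_{\overline{w}_2} h_{22}(\overline{y}_2,\overline{w}_2,\overline{a}_2)\, f(\overline{w}_2 \mid \overline{y}_1, a_1, u_0)$. The crux of the proof, and the step I expect to be the main obstacle, is showing $S = 1$. The difficulty is a conditioning mismatch: the first equation of Assumption \ref{assumption: bridge_functs_2} conditions the $\overline{w}_2$ density on the full history $(\overline{y}_1, \overline{a}_2, \overline{u}_1)$, whereas here the conditioning set is only $(\overline{y}_1, a_1, u_0)$. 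To bridge this I would expand $f(\overline{w}_2 \mid \overline{y}_1, a_1, u_0) = \sum_{u_1} f(\overline{w}_2 \mid \overline{y}_1, a_1, u_0, u_1)\, f(u_1 \mid \overline{y}_1, a_1, u_0)$ and invoke $\overline{W}_2 \indep A_2 \mid (\overline{U}_1, \overline{Y}_1, A_1)$ from the first line of Assumption \ref{assumption: proxy} to rewrite $f(\overline{w}_2 \mid \overline{y}_1, a_1, u_0, u_1)$ as $f(\overline{w}_2 \mid \overline{y}_1, \overline{a}_2, \overline{u}_1)$, thereby freely inserting the value of $a_2$ matching the argument of $h_{22}$. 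Substituting into $S$ and applying the first equation of Assumption \ref{assumption: bridge_functs_2} collapses $\sum_{\overline{w}_2} h_{22}\, f(\overline{w}_2 \mid \overline{y}_1, \overline{a}_2, \overline{u}_1)$ to $f(Y_2 = y_2 \mid \overline{y}_1, \overline{u}_1, \overline{a}_2)$, which sums to $1$ over $y_2$ for each $u_1$; since $\sum_{u_1} f(u_1 \mid \overline{y}_1, a_1, u_0) = 1$, we conclude $S = 1$.

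Combining the two sides then yields $f(Y_1 = y_1 \mid y_0, a_1, u_0) = \sum_{w_1} \bigl[\sum_{y_2} h_{21}(\overline{y}_2,w_1,\overline{a}_2)\bigr] f(w_1 \mid u_0, y_0)$, which is precisely Equation \ref{eq: bridge_funct_1} with $h_{11} = \sum_{y_2} h_{21}$. The remaining bookkeeping point to record is that although $\sum_{y_2} h_{21}(\overline{y}_2,w_1,\overline{a}_2)$ may formally retain dependence on $a_2$, the identity above holds for every value of $a_2$, so for each fixed $a_2$ it is a legitimate solution of Equation \ref{eq: bridge_funct_1}. Apart from the $S = 1$ step, everything is routine rearrangement together with two conditional-independence reductions drawn from Assumption \ref{assumption: proxy}.
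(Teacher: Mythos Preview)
Your proposal is correct and follows essentially the same route as the paper: both arguments sum the second integral equation in Assumption~\ref{assumption: bridge_functs_2} over $y_2$, and both reduce the left-hand side to $f(y_1\mid y_0,u_0,a_1)$ by establishing that $\sum_{y_2}\sum_{\overline{w}_2} h_{22}\,f(\overline{w}_2\mid \overline{y}_1,a_1,u_0)=1$ via marginalization over $u_1$ together with the conditional independence $\overline{W}_2\indep A_2\mid(\overline{U}_1,\overline{Y}_1,A_1)$ and the first bridge equation. Your explicit reduction $f(w_1\mid y_0,a_1,u_0)=f(w_1\mid u_0,y_0)$ via $W_1\indep A_1\mid(U_0,Y_0)$ is a step the paper leaves implicit, and your direct handling of $S=1$ is slightly cleaner than the paper's detour through the potential outcome $Y_2(a_1,a_2)$, but the substance is the same.
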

From Proposition \ref{prop: h_21 h_11 relationship}, we see that in principle, it does not make a difference whether we marginalize $h_{21}$ over $y_2$ to identify the single counterfactual law $f(Y_1(a_1) = y_1 \mid y_0)$ or identify it using $h_{11}$ directly from Equation \ref{eq: bridge_funct_1}.

\section{Estimation of Bridge Functions}
\subsection{Categorical Variables}
\label{sec: categorical}
In this section, we consider in detail the canonical case where variables $U_{t-1}$, $W_t$ and $Z_t$ are categorical with $| W_t |= | Z_t |  = | U_{t-1}| = k$, so that all proxies and unmeasured confounders have the same cardinality. Assuming equal cardinality is restrictive, however it is particularly instructive as identifying assumptions are easier to interpret and explicit simple closed-form expressions to the integral equations can be obtained which simplifies estimation of the optimal regime. In Section D of the supplementary material, we consider the more general setting where  $\min(| W_t|,| Z_t|)\geq | U_{t-1}|$, which appears to be necessary for nonparametric identification, whereby the cardinality of hidden factors $U$ must not exceed that of proxies. Throughout, we suppose that the treatment at each time point is binary, $| A_t | = 2$. We proceed similarly to \cite{Shi2020MultiplyConfounding} with the following notation: we write $P(W \mid u)=\left\{\operatorname{pr}\left(w_1 \mid u\right), \ldots, \operatorname{pr}\left(w_k \mid u\right)\right\}^{\mathrm{T}}, P(w \mid$ $U)=\left\{\operatorname{pr}\left(w \mid u_1\right), \ldots, \operatorname{pr}\left(w \mid u_k\right)\right\}$ and $P(W \mid U)=\left\{P\left(W \mid u_1\right), \ldots, P\left(W \mid u_k\right)\right\}$ for a column vector, a row vector and a matrix, respectively, that consist of conditional probabilities $\operatorname{pr}(w \mid u)$. For all other variables, vectors and matrices consisting of conditional probabilities are analogously defined. Specifically, $P(U \mid$ $z, x)=\left\{\operatorname{pr}\left(u_1 \mid z, x\right), \ldots, \operatorname{pr}\left(u_k \mid z, x\right)\right\}^{\mathrm{T}}$, $P(U \mid Z, x)=\left\{P\left(U \mid z_1, x\right), \ldots, P\left(U \mid z_k, x\right)\right\}$, and $P(y \mid Z, x)=$ $\left\{\operatorname{pr}\left(y \mid z_1, x\right), \ldots, \operatorname{pr}\left(y \mid z_k, x\right)\right\}$. Interestingly, as pointed out in \cite{Miao2018IdentifyingConfounder}, a key simplification that occurs in the categorical case is that the completeness condition reduces to a rank condition, which under the equal cardinality condition, amounts to invertibility of certain key matrices.
\begin{assumption}[Invertibility]
\label{assumption: categorical_completeness}
For all $(\overline{y}_1,\overline{a}_2)$, the matrices
$P(\overline{W}_2 \mid \overline{U}_1, \overline{y}_1,\overline{a}_2)$ and $P(\overline{U}_1 \mid \overline{Z}_2,\overline{y}_1,\overline{a}_2)$ are invertible. In addition, for all $({y}_0,{a}_1)$, $P({W}_1 \mid {U}_0, {y}_0,{a}_1)$ and $P({U}_1 \mid {Z}_1,{y}_0,{a}_1)$ are invertible.
\end{assumption}
These invertibility assumptions require that the corresponding matrices are full rank. 
We can now state a result about identification of $f(Y_2(a_1,a_2)=y_2,Y_1(a_1)=y_1 \mid y_0)$ and $f(Y_1(a_1)=y_1 \mid y_0)$ in the categorical setting.
\begin{proposition}
\label{prop: discrete-id}
Let the cardinality of each of the $U$, $Z$, and $W$ variables equal $k$ for $2 \leq k < \infty$. Also, suppose that Assumptions \ref{assumption: proxy} and \ref{assumption: categorical_completeness} hold. Then
\begin{equation*}
\begin{aligned}
&f(Y_2(a_1,a_2)=y_2,Y_1(a_1)=y_1 \mid y_0) \\ &= P(y_2 \mid \overline{Z}_2,\overline{y}_1,\overline{a}_2)P(\overline{W}_2 \mid \overline{Z}_2, \overline{y}_1,\overline{a}_2)^{-1}P(\overline{W}_2,y_1 \mid Z_1, y_0, a_1) P(W_1 \mid Z_1, y_0, a_1)^{-1}P(W_1 \mid  y_0), \\ 
&f(Y_1(a_1)=y_1 \mid y_0) = P(y_1 \mid \overline{Z}_1,\overline{y}_0,\overline{a}_1) P(W_1 \mid Z_1, y_0, a_1)^{-1}P(W_1 \mid  y_0).
\end{aligned}
\end{equation*}
\end{proposition}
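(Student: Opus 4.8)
The plan is to reduce the nested integral equations characterizing the bridge functions to finite-dimensional linear systems, solve them explicitly by matrix inversion, and substitute the resulting closed forms into the counterfactual identities of Theorem \ref{thm: counterfactual_id}. Throughout I would work with the observed-data bridge equations of Lemma \ref{lm: link_latent_to_observed}, since by that lemma any solution there also solves the latent equations of Assumption \ref{assumption: bridge_functs_2}, which is exactly what Theorem \ref{thm: counterfactual_id} requires.

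First I would rewrite the upper equation of \eqref{eq: obs_bridge_functions}, for fixed $(\overline{y}_1,\overline{a}_2)$, as the matrix identity $P(y_2 \mid \overline{Z}_2,\overline{y}_1,\overline{a}_2) = h_{22}\,P(\overline{W}_2 \mid \overline{Z}_2,\overline{y}_1,\overline{a}_2)$, with $h_{22}$ viewed as a row vector indexed by $\overline{w}_2$. The crux is invertibility of $P(\overline{W}_2 \mid \overline{Z}_2,\overline{y}_1,\overline{a}_2)$. To establish it I would extract from the first line of Assumption \ref{assumption: proxy} the independence $\overline{W}_2 \indep \overline{Z}_2 \mid (\overline{U}_1,\overline{y}_1,\overline{a}_2)$ (conditioning the stated statement further on $A_2=a_2$ and decomposing off $Y_2$), which gives the factorization $P(\overline{W}_2 \mid \overline{Z}_2,\overline{y}_1,\overline{a}_2) = P(\overline{W}_2 \mid \overline{U}_1,\overline{y}_1,\overline{a}_2)\,P(\overline{U}_1 \mid \overline{Z}_2,\overline{y}_1,\overline{a}_2)$. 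Both factors are invertible by Assumption \ref{assumption: categorical_completeness}, hence so is the product, and $h_{22} = P(y_2 \mid \overline{Z}_2,\overline{y}_1,\overline{a}_2)\,P(\overline{W}_2 \mid \overline{Z}_2,\overline{y}_1,\overline{a}_2)^{-1}$.

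Next I would treat the lower equation of \eqref{eq: obs_bridge_functions}. Collapsing $f(y_1 \mid y_0,z_1,a_1)\,f(\overline{w}_2 \mid \overline{y}_1,a_1,z_1)$ into the joint $f(\overline{w}_2,y_1 \mid y_0,z_1,a_1)$, it reads $h_{22}\,P(\overline{W}_2,y_1 \mid Z_1,y_0,a_1) = h_{21}\,P(W_1 \mid Z_1,y_0,a_1)$. By the same route, now using the second line of Assumption \ref{assumption: proxy} to obtain $W_1 \indep Z_1 \mid (U_0,y_0,a_1)$ and factoring $P(W_1 \mid Z_1,y_0,a_1) = P(W_1 \mid U_0,y_0,a_1)\,P(U_0 \mid Z_1,y_0,a_1)$, Assumption \ref{assumption: categorical_completeness} yields invertibility, so $h_{21} = h_{22}\,P(\overline{W}_2,y_1 \mid Z_1,y_0,a_1)\,P(W_1 \mid Z_1,y_0,a_1)^{-1}$. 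Substituting the expression for $h_{22}$ and applying the first identity of Theorem \ref{thm: counterfactual_id}, $f(Y_2(a_1,a_2)=y_2,Y_1(a_1)=y_1 \mid y_0) = h_{21}\,P(W_1 \mid y_0)$ with $P(W_1\mid y_0)$ a column vector, produces the first claimed formula. For the marginal law I would invoke Proposition \ref{prop: h_21 h_11 relationship}: summing $h_{21}$ over $y_2$ solves the single-time-point equation \eqref{eq: bridge_funct_1}, whose observed-data version solves to $h_{11} = P(y_1 \mid Z_1,y_0,a_1)\,P(W_1 \mid Z_1,y_0,a_1)^{-1}$, and pairing this with $f(Y_1(a_1)=y_1 \mid y_0) = h_{11}\,P(W_1 \mid y_0)$ delivers the second formula upon identifying $P(y_1 \mid \overline{Z}_1,\overline{y}_0,\overline{a}_1)$ with $P(y_1 \mid Z_1,y_0,a_1)$.

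I expect the main obstacle to be establishing invertibility of the \emph{observed} matrices $P(\overline{W}_2 \mid \overline{Z}_2,\overline{y}_1,\overline{a}_2)$ and $P(W_1 \mid Z_1,y_0,a_1)$, since Assumption \ref{assumption: categorical_completeness} is phrased only for the latent-bridging matrices; closing this gap requires extracting precisely the right conditional independences from Assumption \ref{assumption: proxy} and verifying that the proxy-to-latent and latent-to-proxy factorizations compose in the correct order. A secondary source of friction will be the bookkeeping of row/column-vector and transpose conventions, so that the chain of matrix products has compatible dimensions ($k^2$ for the barred quantities, $k$ for the single-time ones) and reproduces the stated expressions verbatim.
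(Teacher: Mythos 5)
Your proposal is correct and follows essentially the same route as the paper's proof (given in Section D of the supplement): factor the observed matrices $P(\overline{W}_2\mid\overline{Z}_2,\cdot)$ and $P(W_1\mid Z_1,\cdot)$ through the latent matrices via the conditional independences of Assumption \ref{assumption: proxy}, deduce invertibility from Assumption \ref{assumption: categorical_completeness}, solve the linear systems for $h_{22}$, $h_{21}$ (and $h_{11}$) in closed form, and substitute into Theorem \ref{thm: counterfactual_id}. The only cosmetic difference is that you route the observed-to-latent step through Lemma \ref{lm: link_latent_to_observed}, whereas the paper verifies the latent equations directly by right-multiplying the factorizations by the relevant (pseudo)inverses; in the equal-cardinality categorical case these are the same argument.
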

Given the proceeding developments, simple substitution estimators are immediate. In fact, one may simply replace unknown population probabilities with empirical estimates. Then following the above, one obtains plug-in estimators $\widehat{f}(Y_2(a_1,a_2)=y_2,Y_1(a_1)=y_1 \mid y_0)$ and $\widehat{f}(Y_1(a_1)=y_1 \mid y_0)$. Using these plug-in estimators, one then can find the treatment values maximizing corresponding value function (or minimizing regret) by appealing to Equation \ref{eq: optimal_rules}. In turn, one obtains estimates of the optimal treatment strategy at both time steps for any combination of histories of outcome and treatment variable. Explicit expressions are contained in Section E of the supplementary material. With categorical variables, all (conditional) probabilities can be estimated from the observed cell counts.

\subsection{Categorical Outcomes}
In this subsection, we outline another situation in which it is feasible to estimate solutions to the integral equations. As noted previously, the integral equations described above are extremely difficult to estimate in the general case, since they involve conditional densities. Therefore, we consider a simpler case where $Y_1$ and $Y_2$ are assumed to be discrete, but all other variables are unrestricted. In such a case, the first observed bridge function from Equation \ref{eq: obs_bridge_functions} can be characterized as follows:

\begin{equation}
\begin{aligned}
    E[\mathbf{1}(Y_2 = y_2) \mid \overline{y}_{1},\overline{a}_{2},\overline{z}_{2}]&=\sum_{\overline{w}_{2}}h_{22}\left( \overline{y}_{2},\overline{w%
}_{2},\overline{a}_{2}\right) f\left( \overline{w}_{2}|\overline{y}_{1},%
\overline{a}_{2},\overline{z}_{2}\right) 
\end{aligned}
\end{equation}
The second bridge function can be rewritten as follows:
\begin{equation}
\begin{aligned}
\sum_{\overline{w}_{2}}h_{22}\left( \overline{y}_{2},\overline{w}_{2},%
\overline{a}_{2}\right) f\left( \overline{w}_{2}|\overline{y}%
_{1},a_{1},z_{1}\right) f(y_{1}|y_{0},\
a_{1},z_{1})&=\sum_{w_{1}}h_{21}\left( \overline{y}_{2},w_{1},\overline{a}%
_{2}\right) f\left( w_{1}|y_{0},a_{1},z_{1}\right) \iff \\ 
\sum_{\overline{w}_{2}}h_{22}\left( \overline{y}_{2},\overline{w}_{2},%
\overline{a}_{2}\right) f\left( \overline{w}_{2}, y_1|{y}_{0},a_{1},z_{1}\right) &=\sum_{w_{1}}h_{21}\left( \overline{y}_{2},w_{1},\overline{a}%
_{2}\right) f\left( w_{1}|y_{0},a_{1},z_{1}\right) \iff \\ 
\sum_{\overline{w}_{2}, \Tilde{y}_1}h_{22}\left( \overline{y}_{2},\overline{w}_{2},%
\overline{a}_{2}\right) \mathbf{1}(\Tilde{y_1} = y_1)f\left( \overline{w}_{2}, \Tilde{y}_1|{y}_{0},a_{1},z_{1}\right) &=\sum_{w_{1}}h_{21}\left( \overline{y}_{2},w_{1},\overline{a}%
_{2}\right) f\left( w_{1}|y_{0},a_{1},z_{1}\right) \iff \\ 
E_{\overline{W}_{2}, {Y}_1}[h_{22}\left( \overline{y}_{2},\overline{W}_{2},%
\overline{a}_{2}\right) \mathbf{1}({Y_1} = y_1) \mid {y}_{0},a_{1},z_{1}] &=E_{W_{1}}[h_{21}\left( \overline{y}_{2},W_{1},\overline{a}%
_{2}\right) \mid y_{0},a_{1},z_{1}].
\end{aligned}
\end{equation}
These characterizations imply that for fixed $\overline{a}_2, y_1,y_2$, the corresponding $h_{22}$ and $h_{21}$ bridge function characterizations are conditional moment equations. Min-max estimators for such bridge functions have been proposed and applied in earlier work \citep{Dikkala2020, Ghassami2022}.

\subsection{Categorical Outcomes - Treatment Bridge Function}
When the outcomes $Y_1$ and $Y_2$ are discrete, we can also identify the identify the joint distribution of the discrete potential outcomes using different bridge functions, which we will refer to as treatment bridge functions. Explicitly, we assume the following:
\begin{assumption}[Time-varying Treatment Bridge Functions]
\label{assumption: bridge_functs_treat}
There exist functions $q_{11}\left( {y}%
_{0},{z}_{1},{a}_{1}\right) $ and $q_{12}\left( \overline{y}_{1},\overline{z}_{2},\overline{a}_{2}\right) $ that satisfy 
\[
\frac{1}{P(A_1 = a_1 \mid u_0, y_0)} = \sum_{z_1}q_{11}\left( {y}%
_{0},{z}_{1},{a}_{1}\right) f(z_1\mid a_1, u_0, y_0),
\] and %
\[
\frac{E[q_{11}\left( {y}%
_{0},{Z}_{1},{a}_{1}\right)\mid a_1, u_0, y_0]}{P(A_2 = a_2 \mid a_1, \overline{u}_1, \overline{y}_1)} = \sum_{\overline{z}_2} q_{12}\left( \overline{y}_{1},\overline{z}_{2},\overline{a}_{2}\right) f(\overline{z}_2 \mid \overline{y}_{1},\overline{u}_{1},\overline{a}_{2}),
\]%
respectively.
\end{assumption}

The treatment bridge functions can be used to identify the joint distribution of the potential outcomes, as described in the following theorem:

\begin{theorem}
\label{thm: counterfactual_id_treat}
    Suppose outcomes $Y_2$ and $Y_1$ are discrete. Under Assumptions \ref{assumption: proxy} and \ref{assumption: bridge_functs_treat},
\begin{equation*}
f\left( Y_{2}\left( a_{1},a_{2}\right) =y_{2},Y_{1}\left( a_{1}\right)
=y_{1}|y_{0}\right) =E[q_{12}\left( \overline{y}_{1},\overline{Z}_{2},%
\overline{a}_{2}\right)\mathbf{1}\{Y_2 = y_2,Y_1 = y_1\}\mathbf{1}\{\overline{A}_2 = \overline{a}_2\} \mid y_0]  \text{ and }
\end{equation*}
\begin{equation*}
f\left(Y_{1}\left( a_{1}\right) =y_{1}|y_{0}\right) 
=\sum_{y_2} E[q_{12}\left( \overline{y}_{1},\overline{Z}_{2},%
\overline{a}_{2}\right)\mathbf{1}\{Y_2 = y_2,Y_1 = y_1\}\mathbf{1}\{\overline{A}_2 = \overline{a}_2\} \mid y_0],
\end{equation*}
where $\mathbf{1}$ denotes the indicator function.
\end{theorem}

We now introduce an analogous condition to Assumption \ref{assumption: completeness} that will characterize the treatment bridge functions in terms of the observed data.

\begin{assumption}[W-Completeness/Relevance]
\label{assumption: w-completeness}
For any square integrable $g_1$, we have that 
\begin{equation*}
    {E}[g_1(\overline{U}_1) \mid \overline{y}_1, \overline{w}_2, \overline{a}_2] = 0 \text{ implies } g_1(\overline{U}_1) =0.
\end{equation*}
In addition, for any square integrable $g_0$, we have that 
\begin{equation*}
    {E}[g_0({U}_0) \mid {y}_0, {w}_1, {a}_1] = 0 \text{ implies } g_0({U}_0) =0. 
\end{equation*}
\end{assumption}

\begin{lemma}
\label{lm: link_latent_to_observed_treatment}
Suppose there exist
$q_{12}\left( \overline{y}%
_{2},\overline{z}_{2},\overline{a}_{2}\right) $ and $q_{11}\left( \overline{y}_{1},z_{1},\overline{a}_{2}\right) $ that satisfy 
\begin{equation}
\label{eq: obs_treat_bridge_functions}
\begin{aligned}
\frac{1}{P(A_1 = a_1 \mid w_1, y_0)} = \sum_{z_1}q_{11}\left( {y}%
_{0},{z}_{1},{a}_{1}\right) f(z_1\mid a_1, w_1, y_0,) \\
\frac{E[q_{11}\left( {y}%
_{0},{Z}_{1},{a}_{1}\right)\mid a_1, w_1, y_0]}{P(A_2 = a_2 \mid a_1, \overline{w}_2, \overline{y}_1)} = \sum_{\overline{z}_2} q_{12}\left( \overline{y}_{1},\overline{z}_{2},\overline{a}_{2}\right) f(\overline{z}_2 \mid \overline{y}_{1},\overline{w}_{2},\overline{a}_{2}),
\end{aligned}
\end{equation}
respectively. If Assumption \ref{assumption: w-completeness} also holds, then $q_{11}$ and $q_{12}$ also solve the latent integral equations from Assumption \ref{assumption: bridge_functs_treat}.
\end{lemma}
Similar to Lemma \ref{lm: link_latent_to_observed}, Lemma
\ref{lm: link_latent_to_observed_treatment} gives a characterization of the treatment bridge functions in terms of observable variables rather than the latent variables. The equations in (\ref{eq: obs_treat_bridge_functions}) are conditional moment equations, and so the min-max estimators from \cite{Dikkala2020} can be readily used to estimate the treatment bridge functions.

\section{Simulation results}
In this section, we investigate the finite sample performance of the closed-form estimator given in the previous section in the setting where all variables are categorical . In a simulation study, we compare the plug-in proxy method to a naive approach that relies on the no unmeasured confounding assumption (NUCA), and an oracle procedure that has access to the true unmeasured confounder. More details about these methods can be found in Section E of the supplementary material. We simulate 11 binary variables according to the DAG in Figure \ref{fig:dag} that satisfies the conditional independence assumptions laid out in Section 1 from a saturated model. Exact details about the data generating mechanism are available in Section E of the supplementary material. The total number of iterations was 1000. We used saturated models and non-saturated log-linear models to fit the joint distribution of the observed data, from which all of the plug-in estimators can be derived respectively. For the log-linear models, we simply fit with higher order interactions up to some constant $K$, to deal with the possibility of sparse or empty cells. The observed data consists of 9 binary variables, so the fully saturated model corresponds to fitting a model with up to order 9 interactions. We compare NUCA with oracle, fully saturated proxy, and proxy with log-linear fit with 4,5,6,7,8-order interactions. 
First, in Table \ref{tab: summary_results}, we display, for each of the 4 sample sizes $N \in \{25000, 50000, 100000, 250000\}$ and the methods oracle, NUCA, proxy, and proxy-LL-6, percentiles $q$ of the regrets (the optimal regime value minus the estimated regime value). The ``proxy" method uses the fully saturated model and the ``proxy-LL-6" method uses the log-linear model with order 6 interactions to fit the joint distribution of the observed data. Here, $q \in \{10,25,50,75,90\}$.  More detailed results and results for the 4,5,7, and 8 way interactions are in the supplementary material. Since all data are binary, and the treatment decision at the first timepoint is based on one variable and the second based on three, there are a finite number of unique decision rules. Moreover,  there exist equivalence classes of regimes that induce the same exact value. This is because rules that are identical at the first time step will lead to only a subset of histories $(Y_0,A_1,Y_1)$, and so rules identical at the first time step need only give the same decision on a subset of histories at time step 2 to yield the same value. The $\epsilon$ in Table \ref{tab: summary_results} correspond to zero regret up to the machine precision in \texttt{R}, roughly $\epsilon =$ 2.220446e-16.

\begin{table}
\centering
\begin{tabular}{lccccccc}
  N & method & $q = 10$ & $q = 25$ & $q = 50$ & $q = 75$ & $q = 90$ & mean \\ \hline
  25,000 & proxy & $< \epsilon$ & $< \epsilon$ & 0.01503 & 0.05214 & 0.09824 & 0.03705 \\ 
   & NUCA & $< \epsilon$ & $< \epsilon$ & 0.09159 & 0.09159 & 0.09159 & 0.06420 \\ 
   & oracle & $< \epsilon$ & $< \epsilon$ & $< \epsilon$ & $< \epsilon$ & $< \epsilon$ & 0.00001 \\ 
   & proxy-LL-6 & $< \epsilon$ & $< \epsilon$ & 0.01503 & 0.04746 & 0.09824 & 0.03574 \\ \hline
  50,000 & proxy & $< \epsilon$ & $< \epsilon$ & 0.00932 & 0.03711 & 0.09159 & 0.02588 \\ 
   & NUCA & $< \epsilon$ & 0.09159 & 0.09159 & 0.09159 & 0.09159 & 0.07052 \\ 
   & oracle & $< \epsilon$ & $< \epsilon$ & $< \epsilon$ & $< \epsilon$ & $< \epsilon$ & $< \epsilon$ \\ 
   & proxy-LL-6 & $< \epsilon$ & $< \epsilon$ & $< \epsilon$ & 0.03711 & 0.09159 & 0.02626 \\ \hline
  100,000 & proxy & $< \epsilon$ & $< \epsilon$ & $< \epsilon$ & 0.01503 & 0.09159 & 0.01894 \\ 
   & NUCA & $< \epsilon$ & 0.09159 & 0.09159 & 0.09159 & 0.09159 & 0.07922 \\ 
   & oracle & $< \epsilon$ & $< \epsilon$ & $< \epsilon$ & $< \epsilon$ & $< \epsilon$ & $< \epsilon$ \\ 
   & proxy-LL-6 & $< \epsilon$ & $< \epsilon$ & $< \epsilon$ & 0.01503 & 0.09159 & 0.01927 \\ \hline
  250,000 & proxy & $< \epsilon$ & $< \epsilon$ & $< \epsilon$ & 0.01503 & 0.06117 & 0.01521 \\ 
   & NUCA & 0.09159 & 0.09159 & 0.09159 & 0.09159 & 0.09159 & 0.08756 \\ 
   & oracle & $< \epsilon$ & $< \epsilon$ & $< \epsilon$ & $< \epsilon$ & $< \epsilon$ & $< \epsilon$ \\ 
   & proxy-LL-6 & $< \epsilon$ & $< \epsilon$ & $< \epsilon$ & 0.01503 & 0.04613 & 0.01364 \\ 
\end{tabular}
\caption{A summary of regret for each method at varying sample sizes over 1000 iterations. $q$ denotes the $q$th percentile of regret.}
\label{tab: summary_results}
\end{table}

We notice that the oracle method, at all sample sizes, consistently identifies the optimal regime. Meanwhile, over 1000 iterations, the proxy methods outperform NUCA in terms of the average regret and $q$-quantile regret for all but $q = 90$ for sample size $25,000$, where the proxy methods exhibit more variability. The gaps widen as the sample size increases. For the first three sample sizes, the completely saturated proxy estimator performs similarly to the order 6 interaction proxy estimator, but the former has worse worst-case performance than the latter at $N = 250,000$. 

\section{Discussion}
The current literature on estimating optimal dynamic treatment regimes in the presence of unmeasured confounding is sparse. Although we establish identification when proxy variables are measured and completeness conditions are met, there are many promising avenues for future research. Estimation and inference in the fully general case where cardinalities of variables are unrestricted can be significantly more challenging, as it potentially involves solving ill-posed integral equations. It would also be interesting to develop new highly efficient and robust semiparametric estimators that exhibit better worst-case performance than the simple plug-in estimators we examined in the simulation study. Finally, examining the implications of non-unique solutions to the integral equations (\cite{Zhang2023}, \cite{Bennett2023}) could be of interest.

\section*{Acknowledgement}
The authors gratefully acknowledge support from the National Institutes of Health.

\bibliographystyle{abbrvnat}
\bibliography{references}

\begin{thebibliography}{29}
\providecommand{\natexlab}[1]{#1}
\providecommand{\url}[1]{\texttt{#1}}
\expandafter\ifx\csname urlstyle\endcsname\relax
  \providecommand{\doi}[1]{doi: #1}\else
  \providecommand{\doi}{doi: \begingroup \urlstyle{rm}\Url}\fi

\bibitem[Bennett and Kallus(2021)]{Bennett2021ProximalProcesses}
A.~Bennett and N.~Kallus.
\newblock {Proximal Reinforcement Learning: Efficient Off-Policy Evaluation in Partially Observed Markov Decision Processes}.
\newblock \emph{arXiv}, 10 2021.
\newblock URL \url{http://arxiv.org/abs/2110.15332}.

\bibitem[Bennett et~al.(2023)Bennett, Kallus, Mao, Newey, Syrgkanis, and Uehara]{Bennett2023}
A.~Bennett, N.~Kallus, X.~Mao, W.~Newey, V.~Syrgkanis, and M.~Uehara.
\newblock {Minimax Instrumental Variable Regression and L2 Convergence Guarantees without Identification or Closedness}.
\newblock \emph{Proceedings of Thirty Sixth Conference on Learning Theory, PMLR}, 195:\penalty0 2291--2318, 2023.
\newblock ISSN 26403498.

\bibitem[Bruns-Smith and Zhou(2023)]{Bruns-Smith2023}
D.~Bruns-Smith and A.~Zhou.
\newblock {Robust Fitted-Q-Evaluation and Iteration under Sequentially Exogenous Unobserved Confounders}.
\newblock \emph{arXiv}, 2 2023.
\newblock URL \url{http://arxiv.org/abs/2302.00662}.

\bibitem[Carrasco et~al.(2007)Carrasco, Florens, and Renault]{Carrasco2007}
M.~Carrasco, J.-P. Florens, and E.~Renault.
\newblock {Chapter 77 Linear Inverse Problems in Structural Econometrics Estimation Based on Spectral Decomposition and Regularization}.
\newblock pages 5633--5751. 2007.
\newblock \doi{10.1016/S1573-4412(07)06077-1}.
\newblock URL \url{https://linkinghub.elsevier.com/retrieve/pii/S1573441207060771}.

\bibitem[Chen and Zhang(2023)]{Chen2023EstimatingVariable}
S.~Chen and B.~Zhang.
\newblock {Estimating and improving dynamic treatment regimes with a time-varying instrumental variable}.
\newblock \emph{Journal of the Royal Statistical Society. Series B: Statistical Methodology}, 85\penalty0 (2):\penalty0 427--453, 4 2023.
\newblock ISSN 14679868.
\newblock \doi{10.1093/jrsssb/qkad011}.

\bibitem[Dikkala et~al.(2020)Dikkala, Lewis, Mackey, and Syrgkanis]{Dikkala2020}
N.~Dikkala, G.~Lewis, L.~Mackey, and V.~Syrgkanis.
\newblock {Minimax estimation of conditional moment models}.
\newblock \emph{Advances in Neural Information Processing Systems}, 33\penalty0 (NeurIPS):\penalty0 12248--12262, 2020.
\newblock ISSN 10495258.

\bibitem[Dukes et~al.(2023)Dukes, Shpitser, and Tchetgen~Tchetgen]{Dukes2023ProximalAnalysis}
O.~Dukes, I.~Shpitser, and E.~J. Tchetgen~Tchetgen.
\newblock {Proximal mediation analysis}.
\newblock \emph{Biometrika}, 110\penalty0 (4):\penalty0 973--987, 11 2023.
\newblock ISSN 0006-3444.
\newblock \doi{10.1093/biomet/asad015}.

\bibitem[Ghassami et~al.(2022)Ghassami, Ying, Shpitser, and Tchetgen]{Ghassami2022}
A.~E. Ghassami, A.~Ying, I.~Shpitser, and E.~T. Tchetgen.
\newblock {Minimax Kernel Machine Learning for a Class of Doubly Robust Functionals with Application to Proximal Causal Inference}.
\newblock \emph{Proceedings of The 25th International Conference on Artificial Intelligence and Statistics, PMLR}, 151:\penalty0 7210--7239, 2022.
\newblock ISSN 26403498.

\bibitem[Han(2021)]{Han2021IdentificationEffects}
S.~Han.
\newblock {Identification in nonparametric models for dynamic treatment effects}.
\newblock \emph{Journal of Econometrics}, 225\penalty0 (2):\penalty0 132--147, 12 2021.
\newblock ISSN 18726895.
\newblock \doi{10.1016/j.jeconom.2019.08.014}.

\bibitem[Han(2023)]{Han2023OptimalOrdering}
S.~Han.
\newblock {Optimal Dynamic Treatment Regimes and Partial Welfare Ordering}.
\newblock \emph{Journal of the American Statistical Association}, 2023.
\newblock ISSN 1537274X.
\newblock \doi{10.1080/01621459.2023.2238941}.

\bibitem[Kress(1999)]{Kress1999}
R.~Kress.
\newblock \emph{{Linear Integral Equations}}, volume~82 of \emph{Applied Mathematical Sciences}.
\newblock Springer New York, New York, NY, 1999.
\newblock ISBN 978-1-4612-6817-8.
\newblock \doi{10.1007/978-1-4612-0559-3}.
\newblock URL \url{http://link.springer.com/10.1007/978-1-4612-0559-3}.

\bibitem[Miao et~al.(2022)Miao, Qi, and Zhang]{Miao2022Off-PolicyModels}
R.~Miao, Z.~Qi, and X.~Zhang.
\newblock {Off-Policy Evaluation for Episodic Partially Observable Markov Decision Processes under Non-Parametric Models}.
\newblock In \emph{NeurIPS}, 2022.

\bibitem[Miao et~al.(2018)Miao, Geng, and Tchetgen~Tchetgen]{Miao2018IdentifyingConfounder}
W.~Miao, Z.~Geng, and E.~J. Tchetgen~Tchetgen.
\newblock {Identifying causal effects with proxy variables of an unmeasured confounder}.
\newblock \emph{Biometrika}, 105\penalty0 (4):\penalty0 987--993, 12 2018.
\newblock ISSN 14643510.
\newblock \doi{10.1093/biomet/asy038}.

\bibitem[Michael et~al.(2023)Michael, Cui, Lorch, and Tchetgen~Tchetgen]{Michael2023InstrumentalTreatment}
H.~Michael, Y.~Cui, S.~A. Lorch, and E.~J. Tchetgen~Tchetgen.
\newblock {Instrumental Variable Estimation of Marginal Structural Mean Models for Time-Varying Treatment}.
\newblock \emph{Journal of the American Statistical Association}, pages 1--12, 4 2023.
\newblock ISSN 0162-1459.
\newblock \doi{10.1080/01621459.2023.2183131}.

\bibitem[Murphy(2003)]{Murphy2003OptimalRegimes}
S.~A. Murphy.
\newblock {Optimal Dynamic Treatment Regimes}.
\newblock \emph{Journal of the Royal Statistical Society. Series B (Statistical Methodology)}, 65\penalty0 (2):\penalty0 331--366, 2003.
\newblock URL \url{https://www.jstor.org/stable/3647509?seq=1&cid=pdf-}.

\bibitem[Qi et~al.(2023)Qi, Miao, and Zhang]{Qi2023ProximalConfounding}
Z.~Qi, R.~Miao, and X.~Zhang.
\newblock {Proximal Learning for Individualized Treatment Regimes Under Unmeasured Confounding}.
\newblock \emph{Journal of the American Statistical Association}, 2023.
\newblock ISSN 1537274X.
\newblock \doi{10.1080/01621459.2022.2147841}.

\bibitem[Robins(1986)]{Robins1986AEFFECT}
J.~Robins.
\newblock {A NEW APPROACH TO CAUSAL INFERENCE IN MORTALITY STUDIES WITH A SUSTAINED EXPOSURE PERIOD-APPLICATION TO CONTROL OF THE HEALTHY WORKER SURVIVOR EFFECT}.
\newblock \emph{Mathematical Modeling}, 7:\penalty0 1393--1512, 1986.

\bibitem[Shahn et~al.(2022)Shahn, Dukes, Shamsunder, Richardson, Tchetgen, and Robins]{Shahn2022a}
Z.~Shahn, O.~Dukes, M.~Shamsunder, D.~Richardson, E.~T. Tchetgen, and J.~Robins.
\newblock {Structural Nested Mean Models Under Parallel Trends Assumptions}.
\newblock \emph{arXiv}, 4 2022.
\newblock URL \url{http://arxiv.org/abs/2204.10291}.

\bibitem[Shen and Cui(2022)]{Shen2022OptimalLearning}
T.~Shen and Y.~Cui.
\newblock {Optimal Treatment Regimes for Proximal Causal Learning}.
\newblock \emph{arXiv}, 12 2022.
\newblock URL \url{http://arxiv.org/abs/2212.09494}.

\bibitem[Shi et~al.(2022)Shi, Uehara, Huang, and Jiang]{Shi2022}
C.~Shi, M.~Uehara, J.~Huang, and N.~Jiang.
\newblock {A Minimax Learning Approach to Off-Policy Evaluation in Confounded Partially Observable Markov Decision Processes}.
\newblock \emph{Proceedings of Machine Learning Research}, 162:\penalty0 20057--20094, 2022.
\newblock ISSN 26403498.

\bibitem[Shi et~al.(2020{\natexlab{a}})Shi, Miao, Nelson, and Tchetgen~Tchetgen]{Shi2020MultiplyConfounding}
X.~Shi, W.~Miao, J.~C. Nelson, and E.~J. Tchetgen~Tchetgen.
\newblock {Multiply Robust Causal Inference with Double-Negative Control Adjustment for Categorical Unmeasured Confounding}.
\newblock \emph{Journal of the Royal Statistical Society Series B: Statistical Methodology}, 82\penalty0 (2):\penalty0 521--540, 4 2020{\natexlab{a}}.
\newblock ISSN 1369-7412.
\newblock \doi{10.1111/rssb.12361}.

\bibitem[Shi et~al.(2020{\natexlab{b}})Shi, Miao, and Tchetgen]{Shi2020}
X.~Shi, W.~Miao, and E.~T. Tchetgen.
\newblock {A Selective Review of Negative Control Methods in Epidemiology}.
\newblock \emph{Current Epidemiology Reports}, 7\penalty0 (4):\penalty0 190--202, 12 2020{\natexlab{b}}.
\newblock ISSN 2196-2995.
\newblock \doi{10.1007/s40471-020-00243-4}.
\newblock URL \url{https://link.springer.com/10.1007/s40471-020-00243-4}.

\bibitem[Shpitser et~al.(2023)Shpitser, Wood-Doughty, and Tchetgen]{Shpitser2023}
I.~Shpitser, Z.~Wood-Doughty, and E.~J.~T. Tchetgen.
\newblock {The Proximal ID Algorithm}.
\newblock \emph{Journal of Machine Learning Research}, 23:\penalty0 1--46, 2023.
\newblock URL \url{http://arxiv.org/abs/2108.06818}.

\bibitem[Singh(2021)]{Singh2021AEffects}
R.~Singh.
\newblock {A Finite Sample Theorem for Longitudinal Causal Inference with Machine Learning: Long Term, Dynamic, and Mediated Effects}.
\newblock \emph{arXiv}, 12 2021.
\newblock URL \url{http://arxiv.org/abs/2112.14249}.

\bibitem[Tan(2006)]{Tan2006AScores}
Z.~Tan.
\newblock {A distributional approach for causal inference using propensity scores}.
\newblock \emph{Journal of the American Statistical Association}, 101\penalty0 (476):\penalty0 1619--1637, 12 2006.
\newblock ISSN 01621459.
\newblock \doi{10.1198/016214506000000023}.

\bibitem[Tchetgen~Tchetgen et~al.(2024)Tchetgen~Tchetgen, Ying, Cui, Shi, and Miao]{TchetgenTchetgen2024}
E.~Tchetgen~Tchetgen, A.~Ying, Y.~Cui, X.~Shi, and W.~Miao.
\newblock {An Introduction to Proximal Causal Inference}.
\newblock \emph{Statistical Science}, 2024.

\bibitem[Tsiatis et~al.(2019)Tsiatis, Davidian, Holloway, and Laber]{Tsiatis2019DynamicRegimes}
A.~A. Tsiatis, M.~Davidian, S.~T. Holloway, and E.~B. Laber.
\newblock \emph{{Dynamic Treatment Regimes}}.
\newblock Chapman and Hall/CRC, Boca Raton : Chapman and Hall/CRC, 2020. | Series: Chapman {\&} Hall/CRC monographs on statistics and applied probability, 12 2019.
\newblock ISBN 9780429192692.
\newblock \doi{10.1201/9780429192692}.

\bibitem[Ying et~al.(2023)Ying, Miao, Shi, and Tchetgen~Tchetgen]{Ying2023ProximalStudies}
A.~Ying, W.~Miao, X.~Shi, and E.~J. Tchetgen~Tchetgen.
\newblock {Proximal Causal Inference for Complex Longitudinal Studies}.
\newblock \emph{Journal of the Royal Statistical Society Series B: Statistical Methodology}, 85\penalty0 (3):\penalty0 684--704, 12 2023.
\newblock ISSN 14643510.
\newblock \doi{10.1093/biomet/asy038}.

\bibitem[Zhang et~al.(2023)Zhang, Li, Miao, and Tchetgen~Tchetgen]{Zhang2023}
J.~Zhang, W.~Li, W.~Miao, and E.~Tchetgen~Tchetgen.
\newblock {Proximal causal inference without uniqueness assumptions}.
\newblock \emph{Statistics {\&} Probability Letters}, 198:\penalty0 109836, 7 2023.
\newblock ISSN 01677152.
\newblock \doi{10.1016/j.spl.2023.109836}.
\newblock URL \url{https://linkinghub.elsevier.com/retrieve/pii/S0167715223000603}.

\end{thebibliography}

\appendix
\renewcommand{\thesection}{\Alph{section}}
\section{Section A: Proofs of main results}
\label{section a: proofs}
\subsection*{Known results}
We first state some well-known results about identification in longitudinal observational studies under sequential randomization. First, recall that the joint potential outcome density can be written as follows:
\begin{equation}
\begin{aligned}
&f\left( Y_{2}\left( a_{1},a_{2}\right) =y_{2},Y_{1}\left( a_{1}\right)
=y_{1}|y_{0}\right)  \\
&=\sum_{u_{0}}f\left(Y_{2}\left( a_{1},a_{2}\right) =y_{2},Y_{1}\left(
a_{1}\right) =y_{1}|y_{0},u_{0}\right) f(u_{0}|y_{0}) \\
&=\sum_{u_{0}}f\left(Y_{2}\left( a_{1},a_{2}\right) =y_{2},Y_{1}\left(
a_{1}\right) =y_{1}|y_{0},u_{0},a_1\right) f(u_{0}|y_{0}) \\
&=\sum_{u_{0}}f\left( Y_{2}\left(a_1, a_{2}\right)
=y_{2},Y_{1}=y_{1}|y_{0},u_{0},a_{1}\right) f(u_{0}|y_{0}) \\
&=\sum_{u_{0},u_{1}}f\left( Y_{2}\left(a_1, a_{2}\right) =y_{2}|\overline{y}%
_{1},\overline{u}_{1},a_{1}\right) f\left( U_{1}=u_{1}|y_{1},y_{0}\
,u_{0},a_{1}\right) f(y_{1}|y_{0},u_{0},a_{1})f(u_{0}|y_{0})
\\
&=\sum_{u_{0},u_{1}}f\left( Y_{2}\left(a_1, a_{2}\right) =y_{2}|\overline{y}%
_{1},\overline{u}_{1},\overline{a}_{2}\right) f\left( U_{1}=u_{1}|y_{1},y_{0}\
,u_{0},a_{1}\right) f(y_{1}|y_{0},u_{0},a_{1})f(u_{0}|y_{0}) \\
&=\sum_{u_{0},u_{1}}f\left( Y_{2}=y_{2}|\overline{y}_{1},\overline{u}_{1},%
\overline{a}_{2}\right) f\left( U_{1}=u_{1}|y_{1},y_{0}\
,u_{0},a_{1}\right) f(y_{1}|y_{0},u_{0},a_{1})f(u_{0}|y_{0}).
\end{aligned}
\end{equation}
The second equality is due to $\{Y_2(a_1,a_2), Y_1(a_1)\} \indep A_1 \mid Y_0, U_0$ and the fifth is due to $Y_2(a_1,a_2) \indep A_2 \mid \overline{Y}_1, \overline{U}_1, A_1 = a_1$. The third and sixth equality are by consistency. Meanwhile, the single counterfactual density is likewise identified as such: 
\begin{equation}
\begin{aligned}
f\left( Y_{1}\left( a_{1}\right) =y_{1}|y_{0}\right)  &=\sum_{u_{0}}f\left(
Y_{1}\left( a_{1}\right) =y_{1}|y_{0},u_{0}\right) f\left(
u_{0}|y_{0}\right)  \\ &=\sum_{u_{0}}f\left(
Y_{1}\left( a_{1}\right) =y_{1}|a_1, y_{0},u_{0}\right) f\left(
u_{0}|y_{0}\right) \\ 
&=\sum_{u_{0}}f\left( Y_{1}=y_{1}|a_{1},y_{0},u_{0}\right) f\left(
u_{0}|y_{0}\right).
\end{aligned}
\end{equation}
The second equality is by $Y_1(a_1) \indep A_1 \mid Y_0, U_0$ and the third is by consistency.
\subsection*{Proof of Theorem \ref{thm: counterfactual_id}}
\begin{proof}
We first show the identification of the joint density.
\begin{eqnarray*}
&&\sum_{w_{1}}h_{21}\left( \overline{y}_{2},w_{1},\overline{a}_{2}\right)
f\left( w_{1}|y_{0}\right)  \\
&=&\sum_{u_{0}}\sum_{w_{1}}h_{21}\left( \overline{y}_{2},w_{1},\overline{a}%
_{2}\right) f\left( w_{1}|y_{0},u_{0}\right) f\left( u_{0}|y_{0}\right)  \\
&=&\sum_{u_{0}}\sum_{w_{1}}h_{21}\left( \overline{y}_{2},w_{1},\overline{a}%
_{2}\right) f\left( w_{1}|y_{0},a_{1},u_{0}\right) f\left(
u_{0}|y_{0}\right)  \\
&=&\sum_{u_{0}}\sum_{\overline{w}_{2}}h_{22}\left( \overline{y}_{2},%
\overline{w}_{2},\overline{a}_{2}\right) f(y_{1}|y_{0},u_{0},a_{1})f\left( 
\overline{w}_{2}|\overline{y}_{1},a_{1},u_{0}\right) f\left(
u_{0}|y_{0}\right)  \\
&=&\sum_{u_{0}}\sum_{\overline{w}_{2}}\sum_{u_{1}}h_{22}\left( \overline{y}%
_{2},\overline{w}_{2},\overline{a}_{2}\right)
f(y_{1}|y_{0},u_{0},a_{1})f\left( \overline{w}_{2}|\overline{y}%
_{1},a_{1},u_{0},u_{1}\right) f\left( u_{1}|\overline{y}_{1},a_{1},u_{0}%
\right) f\left( u_{0}|y_{0}\right)  \\
&=&\sum_{u_{0}}\sum_{u_{1}}\left\{ \sum_{\overline{w}_{2}}h_{22}\left( 
\overline{y}_{2},\overline{w}_{2},\overline{a}_{2}\right) f\left( \overline{w%
}_{2}|\overline{y}_{1},\overline{a}_{2},u_{0},u_{1}\right) \right\} f\left(
u_{1}|\overline{y}_{1},a_{1},u_{0}\right) f\left(
y_{1}|y_{0},a_{1},u_{0}\right) f\left( u_{0}|y_{0}\right)  \\
&=&\sum_{u_{0}}\sum_{u_{1}}f\left( Y_{2}=y_{2}|\overline{y}_{1},\overline{u}%
_{1},\overline{a}_{2}\right) f\left(
u_{1}|\overline{y}_{1},a_{1},u_{0}\right)f\left( y_{1}|y_{0},a_{1},u_{0}\right) f\left(
u_{0}|y_{0}\right)  \\
&=&f\left( Y_{2}\left( a_{1},a_{2}\right) =y_{2},Y_{1}\left( a_{1}\right)
=y_{1}|y_{0}\right) 
\end{eqnarray*}%
The first and fourth equalities follow from basic probability algebra, the second and fifth from Assumption \ref{assumption: proxy} ($A_1 \indep W_1 \mid U_0, Y_0$ and $A_2 \indep \overline{W}_2 \mid \overline{U}_1, \overline{Y}_1, A_1$), the third and sixth from Assumption \ref{assumption: bridge_functs_2}, and the final equality is by the g-formula, restated in the previous subsection.
\end{proof}

\subsection*{Proof of Lemma \ref{lm: link_latent_to_observed}}
The conditions from Assumption \ref{assumption: completeness} were assumed in \cite{Ying2023ProximalStudies}. Using these conditions, showing that if $h_{22}$ solves its respective integral equation on the observed data (i.e. conditioned on $Z$), that it solves the respective integral equation on the latent scale (i.e. conditioned on $U$), follows the usual steps, see \cite{Ying2023ProximalStudies}. The steps for $h_{21}$ are slightly different. 
\begin{proof}
Suppose we have a $h_{21}\left( \overline{y}_{1},w_{1},\overline{a}_{2}\right) $ that satisfies  
\begin{equation}
\sum_{\overline{w}_{2}}h_{22}\left( \overline{y}_{2},\overline{w}_{2},%
\overline{a}_{2}\right) f\left( \overline{w}_{2}|\overline{y}%
_{1},a_{1},z_{1}\right) f(y_{1}|y_{0},\
a_{1},z_{1})=\sum_{w_{1}}h_{21}\left( \overline{y}_{2},w_{1},\overline{a}%
_{2}\right) f\left( w_{1}|y_{0},a_{1},z_{1}\right).
\end{equation}
We can rewrite 
\begin{equation*}
\begin{aligned}
f\left( \overline{w}_{2}|\overline{y}%
_{1},a_{1},z_{1}\right) f(y_{1}|y_{0},\
a_{1},z_{1}) &= f\left( \overline{w}_{2},y_1|{y}%
_{0},a_{1},z_{1}\right) \\&= \sum_{u_0}f\left( \overline{w}_{2},y_1|{y}%
_{0},u_0,a_{1},z_{1}\right)f(u_0\mid{y}%
_{0},a_{1},z_{1}) \\&= \sum_{u_0}f\left( \overline{w}_{2},y_1|{y}%
_{0},u_0,a_{1}\right)f(u_0\mid{y}%
_{0},a_{1},z_{1}),
\end{aligned}
\end{equation*}
where the last equality is by Assumption \ref{assumption: proxy}, namely $\{\overline{W}_2, Y_1\} \indep Z_1 \mid A_1, Y_0, U_0$, and 
\begin{equation*}
\begin{aligned}
\sum_{w_{1}}h_{21}\left( \overline{y}_{2},w_{1},\overline{a}%
_{2}\right) f\left( w_{1}|y_{0},a_{1},z_{1}\right) &= \sum_{w_{1}}h_{21}\left( \overline{y}_{2},w_{1},\overline{a}%
_{2}\right) \sum_{u_0}f\left( w_{1}|y_{0},a_{1},u_0, z_1\right) f(u_0\mid y_{0},a_{1},z_{1})  \\ &=\sum_{w_{1}}h_{21}\left( \overline{y}_{2},w_{1},\overline{a}%
_{2}\right) \sum_{u_0}f\left( w_{1}|y_{0},a_{1},u_0\right) f(u_0\mid y_{0},a_{1},z_{1}), 
\end{aligned}
\end{equation*}
where the second equality is by Assumption \ref{assumption: proxy}, namely ${W}_1 \indep Z_1 \mid A_1, Y_0, U_0$. Then we can write
\begin{equation}
\begin{aligned}
\sum_{\overline{w}_{2}}h_{22}\left( \overline{y}_{2},\overline{w}_{2},%
\overline{a}_{2}\right) &\sum_{u_0}f\left( \overline{w}_{2},y_1|{y}%
_{0},u_0,a_{1}\right)f(u_0\mid{y}%
_{0},a_{1},z_{1})\\&=\sum_{w_{1}}h_{21}\left( \overline{y}_{2},w_{1},\overline{a}%
_{2}\right) \sum_{u_0}f\left( w_{1}|y_{0},a_{1},u_0\right) f(u_0\mid y_{0},a_{1},z_{1}).
\end{aligned}
\end{equation}
Fixing $\overline{y}_2,\overline{a}_2$ and interchanging the order of the integrals, we have
\begin{equation}
\begin{aligned}
\sum_{u_0}\sum_{\overline{w}_{2}}h_{22}\left( \overline{y}_{2},\overline{w}_{2},%
\overline{a}_{2}\right) &f\left( \overline{w}_{2},y_1|{y}%
_{0},u_0,a_{1}\right)f(u_0\mid{y}%
_{0},a_{1},z_{1})\\&=\sum_{u_0}\sum_{w_{1}}h_{21}\left( \overline{y}_{2},w_{1},\overline{a}%
_{2}\right) f\left( w_{1}|y_{0},a_{1},u_0\right) f(u_0\mid y_{0},a_{1},z_{1}).
\end{aligned}
\end{equation}
Then by the second completeness condition in Assumption \ref{assumption: completeness}, 
$$\sum_{\overline{w}_{2}}h_{22}\left( \overline{y}_{2},\overline{w}_{2},%
\overline{a}_{2}\right) f\left( \overline{w}_{2},y_1|{y}%
_{0},u_0,a_{1}\right)= \sum_{w_{1}}h_{21}\left( \overline{y}_{2},w_{1},\overline{a}%
_{2}\right) f\left( w_{1}|y_{0},a_{1},u_0\right).$$ We show the steps for the $h_{22}$ function for sake of completeness. Suppose there is a $h_{22}\left( \overline{y}%
_{2},\overline{w}_{2},\overline{a}_{2}\right)$ that solves
\begin{equation*}
f\left( Y_{2}=y_{2}|\overline{y}_{1},\overline{z}_{2},\overline{a}%
_{2}\right) =\sum_{\overline{w}_{2}}h_{22}\left( \overline{y}_{2},\overline{w%
}_{2},\overline{a}_{2}\right) f\left( \overline{w}_{2}|\overline{y}_{1},%
\overline{a}_{2},\overline{z}_{2}\right).
\end{equation*}

From Assumption \ref{assumption: proxy}, $Y_2 \indep \overline{Z}_2 \mid \overline{Y}_1, \overline{U}_1, \overline{A}_2$ and $\overline{W}_2 \indep \overline{Z}_2 \mid \overline{Y}_1, \overline{U}_1, \overline{A}_2$ hold. Thus, 
\begin{equation*}
\begin{aligned}
f\left( Y_{2}=y_{2}|\overline{y}_{1},\overline{z}_{2},\overline{a}%
_{2}\right) &= \sum_{\overline{u}_1}  f\left( Y_{2}=y_{2}|\overline{y}_{1},\overline{z}_{2},\overline{a}%
_{2}, \overline{u}_1\right) f(\overline{u}_1 \mid \overline{y}_{1},\overline{z}_{2},\overline{a}%
_{2}) \\ &= \sum_{\overline{u}_1}  f\left( Y_{2}=y_{2}|\overline{y}_{1},\overline{a}%
_{2}, \overline{u}_1\right) f(\overline{u}_1 \mid \overline{y}_{1},\overline{z}_{2},\overline{a}%
_{2}), \\ \sum_{\overline{w}_{2}}h_{22}\left( \overline{y}_{2},\overline{w%
}_{2},\overline{a}_{2}\right) f\left( \overline{w}_{2}|\overline{y}_{1},%
\overline{a}_{2},\overline{z}_{2}\right) &= \sum_{\overline{w}_{2}}\sum_{\overline{u}_1}h_{22}\left( \overline{y}_{2},\overline{w%
}_{2},\overline{a}_{2}\right) f\left( \overline{w}_{2}|\overline{y}_{1},%
\overline{a}_{2},\overline{z}_{2}, \overline{u}_1\right) f(\overline{u}_1 \mid \overline{y}_{1},\overline{z}_{2},\overline{a}%
_{2}) \\ &= \sum_{\overline{u}_1}\sum_{\overline{w}_{2}}h_{22}\left( \overline{y}_{2},\overline{w%
}_{2},\overline{a}_{2}\right) f\left( \overline{w}_{2}|\overline{y}_{1},%
\overline{a}_{2}, \overline{u}_1\right) f(\overline{u}_1 \mid \overline{y}_{1},\overline{z}_{2},\overline{a}%
_{2}).
\end{aligned}
\end{equation*}
Applying the first completeness condition from Assumption \ref{assumption: completeness} implies that $$f\left( Y_{2}=y_{2}|\overline{y}_{1},\overline{a}%
_{2}, \overline{u}_1\right) = \sum_{\overline{w}_{2}}h_{22}\left( \overline{y}_{2},\overline{w%
}_{2},\overline{a}_{2}\right) f\left( \overline{w}_{2}|\overline{y}_{1},%
\overline{a}_{2}, \overline{u}_1\right).$$
\end{proof}
\subsection*{Proof of Proposition \ref{prop: h_21 h_11 relationship}}
\begin{proof}
Recall that 
\begin{equation*}
f\left( Y_{2}=y_{2}|\overline{y}_{1},\overline{u}_{1},\overline{a}
_{2}\right) =\sum_{\overline{w}_{2}}h_{22}\left( \overline{y}_{2},\overline{w%
}_{2},\overline{a}_{2}\right) f\left( \overline{w}_{2}|\overline{y}_{1},
\overline{a}_{2},\overline{u}_{1}\right)
\end{equation*}
By causal consistency and Assumption \ref{assumption: proxy}, the left hand side can be rewritten as $f\left(Y_{2}(a_1,a_2)=y_{2}|\overline{y}_{1},\overline{u}_{1},a_1\right)$ and the right hand side as $\sum_{\overline{w}_{2}}h_{22}\left( \overline{y}_{2},\overline{w%
}_{2},\overline{a}_{2}\right) f\left( \overline{w}_{2}|\overline{y}_{1},
{a}_{1},\overline{u}_{1}\right)$. Thus, we get 
\begin{equation*}
\begin{aligned}
&\sum_{u_1}f\left(Y_{2}(a_1,a_2)=y_{2}|\overline{y}_{1},\overline{u}_{1},a_1\right)f(u_1 \mid \overline{y}_{1},{u}_{0},a_1) = \sum_{u_1} \sum_{\overline{w}_{2}}h_{22}\left( \overline{y}_{2},\overline{w%
}_{2},\overline{a}_{2}\right) f\left( \overline{w}_{2}|\overline{y}_{1},
{a}_{1},\overline{u}_{1}\right)f(u_1 \mid \overline{y}_{1},{u}_{0},a_1) \\ 
&\sum_{u_1}f\left(Y_{2}(a_1,a_2)=y_{2}, u_1|\overline{y}_{1},{u}_{0},a_1\right) =  \sum_{\overline{w}_{2}} \sum_{u_1} h_{22}\left( \overline{y}_{2},\overline{w%
}_{2},\overline{a}_{2}\right) f\left( \overline{w}_{2}, u_1|\overline{y}_{1},{u}_{0},a_1\right) \\ 
&f\left(Y_{2}(a_1,a_2)=y_{2}|\overline{y}_{1},{u}_{0},a_1\right) =  \sum_{\overline{w}_{2}} h_{22}\left( \overline{y}_{2},\overline{w%
}_{2},\overline{a}_{2}\right) f\left( \overline{w}_{2}|\overline{y}_{1},{u}_{0},a_1\right) \\ &\implies \sum_{y_2} \sum_{\overline{w}_{2}} h_{22}\left( \overline{y}_{2},\overline{w%
}_{2},\overline{a}_{2}\right) f\left( \overline{w}_{2}|\overline{y}_{1},{u}_{0},a_1\right) = 1
\end{aligned}
\end{equation*}
Plugging this into the integral equation involving $h_{21}$, we get 
\begin{equation*}
\begin{aligned}
\sum_{y_2}\sum_{w_{1}}h_{21}\left( \overline{y}
_{2},w_{1},\overline{a}_{2}\right) f\left( w_{1}|y_{0},a_{1},u_{0}\right)  &= \sum_{y_2} \sum_{\overline{w}_{2}}h_{22}\left( \overline{y}_{2},\overline{w}_{2},%
\overline{a}_{2}\right) f(y_{1}|y_{0},u_{0},a_{1})f\left( \overline{w}_{2}|%
\overline{y}_{1},a_{1},u_{0}\right) \\ &= f(y_{1}|y_{0},u_{0},a_{1})\sum_{y_2} \sum_{\overline{w}_{2}}h_{22}\left( \overline{y}_{2},\overline{w}_{2},%
\overline{a}_{2}\right) f\left( \overline{w}_{2}|%
\overline{y}_{1},a_{1},u_{0}\right) \\ &= f(y_{1}|y_{0},u_{0},a_{1})
\end{aligned}
\end{equation*}
After interchanging summation of $y_2$ and $w_1$, it immediately follows that $\sum_{y_2} h_{21}(\overline{y}_2,w_1,\overline{a}_2)$ solves Equation \ref{eq: bridge_funct_1}.
\end{proof}
\subsection*{Proof of Proposition \ref{prop: discrete-id}}
\begin{proof}
See the proof of Proposition 3 in Section D of the supplement.
\end{proof}

\section{Section B: Existence and uniqueness of bridge functions}
\label{section b: bridge}
\subsection*{Existence}
Similar to \cite{Ying2023ProximalStudies} and \cite{Miao2018IdentifyingConfounder}, we consider the
singular value decomposition (\cite{Carrasco2007}, Theorem 2.41) of compact operators to characterize conditions for existence of a solutions to the integral equations from Equation \ref{eq: obs_bridge_functions}. Let $L_2\{F(t)\}$ denote the space of all square integrable functions of t with respect to a cumulative distribution function $F(t)$, which is a Hilbert space with inner product $\langle g, h \rangle = \int g(t) h(t) dF(t)$. Define $T_{\overline{a}_k,\overline{y}_{k-1}}$ as the conditional expectation operator, mapping $L_2\{F(\overline{w}_{k} \mid \overline{a}_k,\overline{y}_{k-1})\} \to L_2\{F(\overline{z}_{k} \mid \overline{a}_k,\overline{y}_{k-1})\}$ and $T_{\overline{a}_k,\overline{y}_{k-1}} h = E[h(\overline{W}_{k}) \mid \overline{z}_k, \overline{a}_k,\overline{y}_{k-1}]$. Let $(\lambda_{\overline{a}_k,\overline{y}_{k-1},l}, \psi_{\overline{a}_k,\overline{y}_{k-1},l}, \phi_{\overline{a}_k,\overline{y}_{k-1},l})_{l=1}^\infty$ denote a singular value decomposition for $T_{\overline{a}_k,\overline{y}_{k-1}}$. We assume the following:

\begin{assumption}[Regularity conditions for existence]
\label{assumption: existence}
\  \\
(i) $\sum_{k=1}^2 \int \int f(\overline{w}_{k} \mid \overline{z}_k, \overline{a}_k,\overline{y}_{k-1})f(\overline{z}_{k} \mid \overline{w}_k, \overline{a}_k,\overline{y}_{k-1}) d\overline{w}_k d\overline{z}_k < \infty$. \\
(ii) For fixed $y_0,y_1,y_2,a_1,a_2$, 
\begin{equation*}
\begin{aligned}
&\int f^2(y_2 \mid \overline{z}_2, \overline{a}_2, \overline{y}_1) f(\overline{z}_2 \mid \overline{a}_2, \overline{y}_1) d\overline{z}_2 < \infty, \\ & \int \left(\sum_{\overline{w}_{2}}h_{22}\left( \overline{y}_{2},\overline{w}_{2},%
\overline{a}_{2}\right) f\left( \overline{w}_{2}|\overline{y}%
_{1},a_{1},z_{1}\right) f(y_{1}|y_{0},
a_{1},z_{1})\right)^2 f(z_1 \mid a_1,y_0) dz_1 < \infty.
\end{aligned}
\end{equation*} \\
(iii) For fixed $y_0,y_1,y_2,a_1,a_2$, 
\begin{equation*}
\begin{aligned}
& \sum_{l=1}^\infty \lambda^{-2}_{\overline{a}_2,\overline{y}_{1},l} \langle f(y_2 \mid \overline{z}_2, \overline{a}_2, \overline{y}_1), \phi_{\overline{a}_2,\overline{y}_{1},l} \rangle^2 < \infty, \\ & \sum_{l=1}^\infty \lambda^{-2}_{a_1,y_0,l} \left \langle \sum_{\overline{w}_{2}}h_{22}\left( \overline{y}_{2},\overline{w}_{2},%
\overline{a}_{2}\right) f\left( \overline{w}_{2}|\overline{y}%
_{1},a_{1},z_{1}\right) f(y_{1}|y_{0},
a_{1},z_{1}), \phi_{a_1,y_0,l} \right \rangle^2 < \infty.
\end{aligned}
\end{equation*}
\\
(iv) For any square integrable $g_1$, we have that 
\begin{equation*}
    {E}[g_1(\overline{Z}_2) \mid \overline{y}_1, \overline{w}_2, \overline{a}_2] = 0 \text{ implies } g_1(\overline{Z}_2) = 0.
\end{equation*}
In addition, for any square integrable $g_0$, we have that 
\begin{equation*}
    {E}[g_0(Z_1) \mid {y}_0, w_1, {a}_1] = 0 \text{ implies } g_0(Z_1) =0. 
\end{equation*}
\end{assumption}
\begin{lemma}
Under Assumption \ref{assumption: existence}, there exist $h_{22}$ and $h_{21}$ that satisfy the respective integral equations from Equation \ref{eq: obs_bridge_functions} from Lemma \ref{lm: link_latent_to_observed}.
\end{lemma}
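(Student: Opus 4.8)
The plan is to solve the two equations of Equation \ref{eq: obs_bridge_functions} sequentially — first for $h_{22}$, then for $h_{21}$ — by recognizing each as a Fredholm integral equation of the first kind of the form $Th = g$, where $T$ is a conditional expectation operator and $g$ is a known function, and invoking Picard's theorem for the singular value decomposition of a compact operator (\cite{Carrasco2007}, Theorem 2.41). Recall that for a compact operator $T$ with SVD $(\lambda_l, \psi_l, \phi_l)$, the equation $Th = g$ admits a solution if and only if $g$ lies in the orthogonal complement of $\ker(T^*)$ (so that $g$ belongs to the closure of the range of $T$) and Picard's condition $\sum_l \lambda_l^{-2} \langle g, \phi_l\rangle^2 < \infty$ holds; the solution is then $h = \sum_l \lambda_l^{-1}\langle g, \phi_l\rangle \psi_l$. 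The four parts of Assumption \ref{assumption: existence} are tailored to verify exactly these hypotheses for each of the two operators.

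First I would verify that each conditional expectation operator $T_{\overline{a}_k, \overline{y}_{k-1}}$ is compact, hence admits the SVD used in the statement. This follows from part (a): the finiteness of $\int\int f(\overline{w}_k \mid \overline{z}_k,\overline{a}_k,\overline{y}_{k-1}) f(\overline{z}_k \mid \overline{w}_k,\overline{a}_k,\overline{y}_{k-1}) \, d\overline{w}_k \, d\overline{z}_k$ is precisely the Hilbert–Schmidt condition for the kernel of $T_{\overline{a}_k, \overline{y}_{k-1}}$, and Hilbert–Schmidt operators are compact. Turning to the first equation, I set $g = f(Y_2 = y_2 \mid \overline{y}_1, \overline{z}_2, \overline{a}_2)$ and $T = T_{\overline{a}_2,\overline{y}_1}$. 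Part (b) (first display) gives $g \in L_2\{F(\overline{z}_2 \mid \overline{a}_2,\overline{y}_1)\}$; part (d) (first display) is the statement $\ker(T^*_{\overline{a}_2,\overline{y}_1}) = \{0\}$, so that $g$ automatically lies in $\overline{\mathrm{Range}}(T)$; and part (c) (first display) is exactly Picard's condition. Picard's theorem then yields an $h_{22}(\overline{y}_2, \overline{w}_2, \overline{a}_2)$ solving the first equation.

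With such an $h_{22}$ now fixed, the left-hand side of the second equation, namely $\sum_{\overline{w}_2} h_{22}(\overline{y}_2,\overline{w}_2,\overline{a}_2) f(\overline{w}_2 \mid \overline{y}_1, a_1, z_1) f(y_1\mid y_0, a_1, z_1)$, is a determined function of $z_1$ for each fixed $\overline{y}_2, \overline{a}_2, y_0$; call it $\tilde g(z_1)$. The second equation is then $T_{a_1,y_0} h_{21} = \tilde g$, with integration variable $w_1$. I would now repeat the Picard argument: part (b) (second display) gives $\tilde g \in L_2\{F(z_1 \mid a_1, y_0)\}$, part (d) (second display) gives $\ker(T^*_{a_1,y_0}) = \{0\}$, and part (c) (second display) is Picard's condition for $\tilde g$. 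This produces an $h_{21}(\overline{y}_2, w_1,\overline{a}_2)$ solving the second equation, completing the proof.

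The main obstacle, and the reason the two equations cannot be treated fully in parallel, is the nested dependence: the target $\tilde g$ of the second equation is itself built from the solution $h_{22}$ of the first. Consequently parts (b) and (c) of Assumption \ref{assumption: existence} that reference $h_{22}$ must be read as conditions on the \emph{particular} $h_{22}$ constructed in the previous step, and care is needed because $h_{22}$ need not be unique — one must fix one admissible solution and verify that the square-integrability and Picard conditions hold for the $\tilde g$ it induces. Apart from this bookkeeping, the remaining work is the routine identification of each ``finite double integral'' condition with the Hilbert–Schmidt property and each completeness condition with triviality of the adjoint's kernel.
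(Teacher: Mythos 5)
Your proposal is correct and follows essentially the same route as the paper, whose proof is a one-line citation of Picard's theorem (and Lemma 2 of \cite{Miao2018IdentifyingConfounder}): you simply spell out the intended application, matching part (a) to the Hilbert--Schmidt/compactness requirement, part (d) to triviality of the adjoint's kernel, and parts (b)--(c) to square-integrability and the Picard condition for each of the two nested equations in turn. Your observation that the conditions referencing $h_{22}$ must be read relative to the particular (possibly non-unique) solution constructed in the first step is a worthwhile clarification that the paper leaves implicit.
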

\begin{proof}
The proof follows straightforwardly from Picard's theorem (\cite{Kress1999}) and Lemma 2 from \cite{Miao2018IdentifyingConfounder}.
\end{proof}

\subsection*{Uniqueness}
In this subsection, we introduce an additional completeness assumption that guarantees that solutions to the observed integral equations are unique.
\begin{assumption}
\label{assumption: completeness for uniqueness}
For any square integrable $g_1$, we have that 
\begin{equation*}
    {E}(g_1(\overline{W}_2) \mid \overline{y}_1, \overline{z}_2, \overline{a}_2) = 0 \text{ implies } g_1(\overline{W}_2) = 0.
\end{equation*}
In addition, for any square integrable $g_0$, we have that 
\begin{equation*}
    {E}(g_0(W_1) \mid {y}_0, z_1, {a}_1) = 0 \text{ implies } g_0(W_1) =0. 
\end{equation*}
\end{assumption}
\begin{lemma}
\label{lm: uniqueness}
Suppose there exist $h_{22}$ and $h_{21}$ that solve the intergal equations from Lemma \ref{lm: link_latent_to_observed}. Then under Assumption \ref{assumption: completeness for uniqueness}, $h_{22}$ and $h_{21}$ are unique.
\begin{proof}
First, suppose that $h_{22}$ and $h'_{22}$ both solve the integral equation. Then 
\begin{equation*}
\sum_{\overline{w}_{2}}h_{22}\left( \overline{y}_{2},\overline{w%
}_{2},\overline{a}_{2}\right) f\left( \overline{w}_{2}|\overline{y}_{1},%
\overline{a}_{2},\overline{z}_{2}\right)  = \sum_{\overline{w}_{2}}h'_{22}\left( \overline{y}_{2},\overline{w%
}_{2},\overline{a}_{2}\right) f\left( \overline{w}_{2}|\overline{y}_{1},%
\overline{a}_{2},\overline{z}_{2}\right).
\end{equation*}
Subtracting, we get 
\begin{equation*}
{E}[h_{22}\left( \overline{y}_{2},\overline{W
}_{2},\overline{a}_{2}\right) - h'_{22}\left( \overline{y}_{2},\overline{W
}_{2},\overline{a}_{2}\right) |\overline{y}_{1},%
\overline{a}_{2},\overline{z}_{2}] = 0 \implies h_{22}\left( \overline{y}_{2},\overline{W
}_{2},\overline{a}_{2}\right) - h'_{22}\left( \overline{y}_{2},\overline{W
}_{2},\overline{a}_{2}\right) = 0.
\end{equation*}
Next, suppose that $h_{21}$ and $h'_{21}$ both solve the corresponding integral equation. Then by the uniqueness of $h_{22}$, we get 
\begin{equation*}
\sum_{w_{1}}h_{21}\left( \overline{y}%
_{2},w_{1},\overline{a}_{2}\right) f\left( w_{1}|y_{0},a_{1},z_{1}\right) = \sum_{w_{1}}h'_{21}\left( \overline{y}%
_{2},w_{1},\overline{a}_{2}\right) f\left( w_{1}|y_{0},a_{1},z_{1}\right).
\end{equation*}
Subtracting, we get 
\begin{equation*}
{E}[h_{21}\left( \overline{y}
_{2},W_{1},\overline{a}_{2}\right)-h'_{21}\left( \overline{y}%
_{2},W_{1},\overline{a}_{2}\right)|y_{0},a_{1},z_{1}] = 0 \implies h_{21}\left( \overline{y}
_{2},W_{1},\overline{a}_{2}\right)-h'_{21}\left( \overline{y}%
_{2},W_{1},\overline{a}_{2}\right) = 0.
\end{equation*}
\end{proof}
\end{lemma}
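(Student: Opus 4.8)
The plan is to exploit the linearity of both integral equations in Equation \ref{eq: obs_bridge_functions} together with the two completeness conditions of Assumption \ref{assumption: completeness for uniqueness}, treating the equations sequentially because the second one couples $h_{21}$ to $h_{22}$ through its left-hand side. Because the equations in Lemma \ref{lm: link_latent_to_observed} are written entirely in terms of observed conditioning sets, no appeal to the latent-scale equations is needed: each completeness condition acts directly on a difference of two candidate solutions.

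First I would establish uniqueness of $h_{22}$. Suppose $h_{22}$ and $h_{22}'$ both solve the first equation in Equation \ref{eq: obs_bridge_functions}. Fixing $(\overline{y}_2,\overline{a}_2)$ and subtracting, the difference satisfies
\begin{equation*}
E\!\left[h_{22}(\overline{y}_2,\overline{W}_2,\overline{a}_2)-h_{22}'(\overline{y}_2,\overline{W}_2,\overline{a}_2)\mid \overline{y}_1,\overline{z}_2,\overline{a}_2\right]=0,
\end{equation*}
since the right-hand side of the first bridge equation is exactly the conditional expectation of $h_{22}(\overline{y}_2,\overline{W}_2,\overline{a}_2)$ over $\overline{W}_2$ given $(\overline{y}_1,\overline{z}_2,\overline{a}_2)$. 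Applying the first completeness condition of Assumption \ref{assumption: completeness for uniqueness}, namely completeness of $\overline{W}_2$ in the law given $(\overline{y}_1,\overline{z}_2,\overline{a}_2)$, forces the difference to vanish, so $h_{22}=h_{22}'$ for every fixed $(\overline{y}_2,\overline{a}_2)$.

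Then I would use this to pin down $h_{21}$. With $h_{22}$ now determined, the left-hand side of the second equation in Equation \ref{eq: obs_bridge_functions} is a fixed function of its arguments, identical for any admissible solution pair. Hence if $h_{21}$ and $h_{21}'$ both solve the second equation, their right-hand sides must agree, and subtracting yields
\begin{equation*}
E\!\left[h_{21}(\overline{y}_2,W_1,\overline{a}_2)-h_{21}'(\overline{y}_2,W_1,\overline{a}_2)\mid y_0,z_1,a_1\right]=0.
\end{equation*}
The second completeness condition of Assumption \ref{assumption: completeness for uniqueness}, completeness of $W_1$ given $(y_0,z_1,a_1)$, then gives $h_{21}=h_{21}'$, completing the argument.

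I expect the only real subtlety, rather than a genuine obstacle, to be the ordering forced by the coupling: the left-hand side of the second equation is a functional of $h_{22}$ (integrated against $f(\overline{w}_2\mid\overline{y}_1,a_1,z_1)\,f(y_1\mid y_0,a_1,z_1)$), so the clean cancellation that reduces the $h_{21}$ problem to a homogeneous conditional-moment equation is legitimate only after $h_{22}$ has been shown unique. Once both solution pairs are known to share the same $h_{22}$, each display above is a homogeneous moment condition and the corresponding completeness assumption closes it immediately; every individual step is a direct, essentially mechanical application of a completeness condition to a difference of solutions.
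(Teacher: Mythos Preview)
Your proposal is correct and follows essentially the same route as the paper's own proof: first use the completeness of $\overline{W}_2$ given $(\overline{y}_1,\overline{z}_2,\overline{a}_2)$ to force $h_{22}=h_{22}'$, then observe that uniqueness of $h_{22}$ makes the left-hand side of the second equation identical for both solution pairs, so the completeness of $W_1$ given $(y_0,z_1,a_1)$ forces $h_{21}=h_{21}'$. Your explicit remark about the ordering imposed by the coupling of $h_{21}$ to $h_{22}$ is exactly the point the paper relies on when it writes ``by the uniqueness of $h_{22}$'' before equating the two right-hand sides.
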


\section{Section C: General number of timepoints}
\label{section c: timepoints}
In this section, we demonstrate that the results from previous sections can be extended the case where the number of time periods $K > 2$. The data has the following form: $$(U_0,\ldots,U_{K-1},Y_0,\ldots,Y_K,A_1,\ldots,A_K,Z_1,\ldots,Z_K,W_1,\ldots,W_K),$$
where the $U$ variables are unobserved. We first introduce the general versions of the latent randomization, bridge function, and completeness assumptions.
\begin{assumption}[Sequential Proximal Latent Randomization - General Case]
\label{assumption: proxy-general}
\begin{eqnarray*}
&&\left\{ {Y}_{1}\left( a_{1}\right) ,\ldots,Y_{k}\left(
a_{1},\ldots a_{k}\right) ,\overline{W}_{k}\right\} \indep A_{k},\overline{Z}_{k}|\left(
\overline{U}_{k-1},\overline{Y}_{k-1},\overline{A}_{k-1} = \overline{a}_{k-1}\right) \text{ for } k = 1,\ldots,K. \\
&& \{\overline{W}_{k+1},Y_k\} \indep \overline{Z}_{k} \mid \overline{A}_{k}, \overline{U}_{k-1},\overline{Y}_{k-1} \text{ for } k = 1,\ldots,K-1.
\end{eqnarray*}  
    
\end{assumption}

\begin{assumption}[Bridge Functions - General Case]
\label{assumption: bridge_functions_general}
There exist functions $h_{KK}(\Bar{y}_K,\Bar{w}_K,\Bar{a}_K), \ldots, h_{K1}(\Bar{y}_K,\Bar{w}_1,\Bar{a}_K)$ that satisfy
\begin{equation*}
    f(Y_K = y_K \mid \Bar{y}_{K-1},\Bar{u}_{K-1},\Bar{a}_{K}) = \sum_{\Bar{w}_K}h_{KK}(\Bar{y}_K,\Bar{w}_K,\Bar{a}_K)f(\Bar{w}_K \mid \Bar{y}_{K-1}, \Bar{u}_{K-1}, \Bar{a}_K),
\end{equation*}
and for $k = 1, \ldots, K-1$,
\begin{equation*}
    \sum_{\Bar{w}_{k+1}} h_{K,k+1}(\Bar{y}_K,\Bar{w}_{k+1},\Bar{a}_K)f(\Bar{w}_{k+1},y_{k} \mid \Bar{y}_{k-1}, \Bar{u}_{k-1},\Bar{a}_k)= \sum_{\Bar{w}_{k}}h_{Kk}(\Bar{y}_K,\Bar{w}_{k},\Bar{a}_K) f(\Bar{w}_{k} \mid \Bar{y}_{k-1}, \Bar{u}_{k-1},\Bar{a}_k).
\end{equation*}
\end{assumption}
\begin{assumption}[Completeness - General Case]
\label{assumption: completeness-general}
For each $k = 1,\ldots,K$ and any square integrable $g = g(\overline{U}_{k-1})$,
\begin{equation*}
E\{g(\overline{U}_{k-1} \mid \overline{y}_{k-1},\overline{z}_{k},\overline{a}_{k}\} \text{ implies } g(\overline{U}_{k-1}) = 0.
\end{equation*}
\end{assumption}
\begin{lemma}
\label{lm: g_one_step}
Under Assumptions \ref{assumption: proxy-general} and \ref{assumption: bridge_functions_general}, the following equation holds:
\begin{equation*}
\begin{aligned}
\sum_{\Bar{w}_{k}} &h_{Kk}(\Bar{y}_K,\Bar{w}_{k},\Bar{a}_K) f(\Bar{w}_{k} \mid \Bar{y}_{k-1}, \Bar{u}_{k-1},\Bar{a}_k) \\ &= \sum_{u_k} \sum_{\Bar{w}_{k+1}}h_{K,k+1}(\Bar{y}_K,\Bar{w}_{k+1},\Bar{a}_K) f(\Bar{w}_{k+1} \mid \Bar{y}_{k}, \Bar{u}_{k},\Bar{a}_{k+1}) f(u_k \mid \Bar{u}_{k-1},\Bar{y}_k, \Bar{a}_k) f(y_k \mid \Bar{u}_{k-1},\Bar{y}_{k-1}, \Bar{a}_k).
\end{aligned}
\end{equation*}
\end{lemma}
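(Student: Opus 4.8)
The plan is to recognize that the claimed identity is, up to one application of the bridge-function equation, a pure density-factorization statement, and then to discharge that factorization using a single conditional independence from Assumption~\ref{assumption: proxy-general}. This mirrors exactly the transition from the fourth to the sixth line in the two time-point proof of Theorem~\ref{thm: counterfactual_id}.

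First I would invoke the $k$-th bridge equation in Assumption~\ref{assumption: bridge_functions_general}, which asserts precisely that the left-hand side of the lemma equals $\sum_{\overline{w}_{k+1}} h_{K,k+1}(\overline{y}_K,\overline{w}_{k+1},\overline{a}_K)\, f(\overline{w}_{k+1}, y_k \mid \overline{y}_{k-1}, \overline{u}_{k-1}, \overline{a}_k)$. This reduces the problem to showing the density identity
\begin{equation*}
f(\overline{w}_{k+1}, y_k \mid \overline{y}_{k-1}, \overline{u}_{k-1}, \overline{a}_k) = \sum_{u_k} f(\overline{w}_{k+1} \mid \overline{y}_k, \overline{u}_k, \overline{a}_{k+1})\, f(u_k \mid \overline{u}_{k-1}, \overline{y}_k, \overline{a}_k)\, f(y_k \mid \overline{u}_{k-1}, \overline{y}_{k-1}, \overline{a}_k),
\end{equation*}
after which the result follows by substituting this expression inside the sum over $\overline{w}_{k+1}$ and interchanging the (finite) summations over $u_k$ and $\overline{w}_{k+1}$.

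To establish the density identity I would first marginalize over the latent $u_k$ and apply the chain rule, writing $f(\overline{w}_{k+1}, y_k \mid \overline{y}_{k-1}, \overline{u}_{k-1}, \overline{a}_k)$ as $\sum_{u_k} f(\overline{w}_{k+1} \mid \overline{y}_k, \overline{u}_k, \overline{a}_k)\, f(u_k \mid \overline{u}_{k-1}, \overline{y}_k, \overline{a}_k)\, f(y_k \mid \overline{u}_{k-1}, \overline{y}_{k-1}, \overline{a}_k)$, using the abbreviations $\overline{y}_k = (\overline{y}_{k-1}, y_k)$ and $\overline{u}_k = (\overline{u}_{k-1}, u_k)$. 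The only discrepancy with the target expression is the conditioning set of the leading factor, $\overline{a}_k$ versus $\overline{a}_{k+1}$. I would close this gap with the first line of Assumption~\ref{assumption: proxy-general} taken at index $k+1$, which yields $\overline{W}_{k+1} \indep A_{k+1} \mid \overline{U}_k, \overline{Y}_k, \overline{A}_k = \overline{a}_k$; this independence lets me append $a_{k+1}$ to the conditioning event, giving $f(\overline{w}_{k+1} \mid \overline{y}_k, \overline{u}_k, \overline{a}_k) = f(\overline{w}_{k+1} \mid \overline{y}_k, \overline{u}_k, \overline{a}_{k+1})$.

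I do not anticipate a genuine obstacle here, as the argument is essentially bookkeeping. The one point requiring care is verifying that the relevant independence is indeed delivered by Assumption~\ref{assumption: proxy-general} at the correct index --- namely that the statement for $k+1$ places $\overline{W}_{k+1}$ in the independent block while conditioning on the full history $(\overline{U}_k, \overline{Y}_k, \overline{A}_k)$ --- and in tracking the bar-notation carefully so that $\overline{u}_k$, $\overline{y}_k$, and $\overline{a}_{k+1}$ are assembled from their predecessors consistently throughout.
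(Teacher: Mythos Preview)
Your proposal is correct and follows essentially the same approach as the paper: invoke the $k$-th bridge equation from Assumption~\ref{assumption: bridge_functions_general}, expand $f(\overline{w}_{k+1},y_k\mid \overline{y}_{k-1},\overline{u}_{k-1},\overline{a}_k)$ by the chain rule and a marginalization over $u_k$, and then use $\overline{W}_{k+1}\indep A_{k+1}\mid \overline{U}_k,\overline{Y}_k,\overline{A}_k$ (the first line of Assumption~\ref{assumption: proxy-general} at index $k+1$) to upgrade $\overline{a}_k$ to $\overline{a}_{k+1}$. The paper carries out exactly these steps in the same order; your only cosmetic difference is isolating the density identity as a stand-alone subgoal before plugging it back in.
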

\begin{proof}
\begin{equation*}
\begin{aligned}
\sum_{\Bar{w}_{k}} &h_{Kk}(\Bar{y}_K,\Bar{w}_{k},\Bar{a}_K) f(\Bar{w}_{k} \mid \Bar{y}_{k-1}, \Bar{u}_{k-1},\Bar{a}_k) \\ &= \sum_{\Bar{w}_{k+1}} h_{K,k+1}(\Bar{y}_K,\Bar{w}_{k+1},\Bar{a}_K)f(\Bar{w}_{k+1},y_{k} \mid \Bar{y}_{k-1}, \Bar{u}_{k-1},\Bar{a}_k) \\ &= \sum_{\Bar{w}_{k+1}} h_{K,k+1}(\Bar{y}_K,\Bar{w}_{k+1},\Bar{a}_K)f(\Bar{w}_{k+1} \mid \Bar{y}_{k}, \Bar{u}_{k-1},\Bar{a}_k) f(y_k \mid \Bar{y}_{k-1}, \Bar{u}_{k-1},\Bar{a}_k) \\ &= \sum_{\Bar{w}_{k+1}} \sum_{u_k} h_{K,k+1}(\Bar{y}_K,\Bar{w}_{k+1},\Bar{a}_K)f(\Bar{w}_{k+1}, u_k \mid \Bar{y}_{k}, \Bar{u}_{k-1},\Bar{a}_k) f(y_k \mid \Bar{y}_{k-1}, \Bar{u}_{k-1},\Bar{a}_k) \\ &= \sum_{\Bar{w}_{k+1}} \sum_{u_k} h_{K,k+1}(\Bar{y}_K,\Bar{w}_{k+1},\Bar{a}_K)f(\Bar{w}_{k+1} \mid \Bar{y}_{k}, \Bar{u}_{k},\Bar{a}_k) f(u_k \mid \Bar{y}_{k}, \Bar{u}_{k-1},\Bar{a}_k)  f(y_k \mid \Bar{y}_{k-1}, \Bar{u}_{k-1},\Bar{a}_k) \\ &= \sum_{u_k} \sum_{\Bar{w}_{k+1}}h_{K,k+1}(\Bar{y}_K,\Bar{w}_{k+1},\Bar{a}_K) f(\Bar{w}_{k+1} \mid \Bar{y}_{k}, \Bar{u}_{k},\Bar{a}_{k+1}) f(u_k \mid \Bar{u}_{k-1},\Bar{y}_k, \Bar{a}_k) f(y_k \mid \Bar{u}_{k-1},\Bar{y}_{k-1}, \Bar{a}_k).
\end{aligned}
\end{equation*}
By assumption \ref{assumption: bridge_functions_general}, the first equality holds. The second, third, and fourth equality follow from standard probability algebra, and the last follows from $A_{k+1} \indep \Bar{W}_{k+1} \mid \Bar{Y}_{k}, \Bar{U}_{k}, \overline{A}_{k} = \overline{a}_{k}$.
\end{proof}
\begin{theorem}
Under Assumptions \ref{assumption: proxy-general} and \ref{assumption: bridge_functions_general},
\begin{equation*}
f(Y_K(a_1,\ldots,a_K) = y_K, \ldots, Y_1(a_1) = y_1 \mid y_0) = \sum_{w_1}h_{K1}(\Bar{y}_K, w_1,\Bar{a}_K) f(w_1 \mid y_0).
\end{equation*}
\end{theorem}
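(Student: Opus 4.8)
The plan is to generalize the proof of Theorem \ref{thm: counterfactual_id} by peeling off latent variables one time point at a time, using Lemma \ref{lm: g_one_step} as the workhorse recursion. Concretely, I would start from the right-hand side $\sum_{w_1} h_{K1}(\overline{y}_K, w_1, \overline{a}_K) f(w_1 \mid y_0)$ and show, by induction on $k$, that for every $k \in \{1, \ldots, K\}$ it equals
\begin{equation*}
\sum_{\overline{u}_{k-1}} \left[ \sum_{\overline{w}_k} h_{Kk}(\overline{y}_K, \overline{w}_k, \overline{a}_K) f(\overline{w}_k \mid \overline{y}_{k-1}, \overline{u}_{k-1}, \overline{a}_k) \right] g_{k-1}(\overline{u}_{k-1}, \overline{y}_{k-1}; y_0),
\end{equation*}
where $g_{k-1}$ collects the accumulated g-formula transition factors $f(u_0 \mid y_0) \prod_{j=1}^{k-1} f(u_j \mid \overline{u}_{j-1}, \overline{y}_j, \overline{a}_j) f(y_j \mid \overline{u}_{j-1}, \overline{y}_{j-1}, \overline{a}_j)$.

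For the base case $k=1$, I would write $f(w_1 \mid y_0) = \sum_{u_0} f(w_1 \mid y_0, u_0) f(u_0 \mid y_0)$ and use $W_1 \indep A_1 \mid U_0, Y_0$ (a consequence of Assumption \ref{assumption: proxy-general}) to insert $a_1$ into the conditioning set, matching the claimed form with $g_0 = f(u_0 \mid y_0)$. For the inductive step, I would apply Lemma \ref{lm: g_one_step} directly to the bracketed inner sum: the lemma converts $\sum_{\overline{w}_k} h_{Kk}(\cdot) f(\overline{w}_k \mid \overline{y}_{k-1}, \overline{u}_{k-1}, \overline{a}_k)$ into $\sum_{u_k} \sum_{\overline{w}_{k+1}} h_{K,k+1}(\cdot) f(\overline{w}_{k+1} \mid \overline{y}_k, \overline{u}_k, \overline{a}_{k+1})$ multiplied by the two new transition factors $f(u_k \mid \overline{u}_{k-1}, \overline{y}_k, \overline{a}_k) f(y_k \mid \overline{u}_{k-1}, \overline{y}_{k-1}, \overline{a}_k)$, precisely advancing $g_{k-1}$ to $g_k$ and the bridge index from $k$ to $k+1$.

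Terminating the induction at $k = K$ leaves $\sum_{\overline{u}_{K-1}} \bigl[ \sum_{\overline{w}_K} h_{KK}(\cdot) f(\overline{w}_K \mid \overline{y}_{K-1}, \overline{u}_{K-1}, \overline{a}_K) \bigr] g_{K-1}$. At this point I would invoke the defining equation for $h_{KK}$ from Assumption \ref{assumption: bridge_functions_general}, which replaces the bracket by the latent conditional density $f(Y_K = y_K \mid \overline{y}_{K-1}, \overline{u}_{K-1}, \overline{a}_K)$. The resulting expression
\begin{equation*}
\sum_{\overline{u}_{K-1}} f(Y_K = y_K \mid \overline{y}_{K-1}, \overline{u}_{K-1}, \overline{a}_K) \, g_{K-1}(\overline{u}_{K-1}, \overline{y}_{K-1}; y_0)
\end{equation*}
is exactly the $K$-step g-formula for the joint counterfactual law, which (after applying consistency and the latent randomization of Assumption \ref{assumption: proxy-general} to rewrite the factored observed densities as counterfactual ones) equals $f(Y_K(a_1,\ldots,a_K) = y_K, \ldots, Y_1(a_1) = y_1 \mid y_0)$. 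I expect the main obstacle to be purely organizational: correctly tracking the growing stack of latent summations and verifying that the accumulated product $g_{K-1}$ matches the standard g-formula factorization term for term, which in the two-time-point case was written out explicitly in Equation \ref{eq: u_id}. Establishing this match, rather than any genuinely new probabilistic argument, is where care is required, and the induction is designed precisely to discharge it mechanically.
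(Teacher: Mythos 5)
Your proposal is correct and follows essentially the same route as the paper: the paper's proof also starts from $\sum_{w_1}h_{K1}(\bar{y}_K,w_1,\bar{a}_K)f(w_1\mid y_0)$, inserts $u_0$ and $a_1$ using $W_1 \indep A_1 \mid U_0, Y_0$, iterates Lemma \ref{lm: g_one_step} to advance the bridge index from $1$ to $K$ while accumulating exactly the transition product you call $g_{k-1}$, then substitutes the $h_{KK}$ equation and concludes by the g-formula. The only difference is presentational: you package the iteration as a formal induction with an explicit invariant, whereas the paper writes the chain of equalities with a ``$\vdots$''.
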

\begin{proof}
By Lemma \ref{lm: g_one_step},
\begin{equation*}
\begin{aligned}
\sum_{w_1}&h_{K1}(\Bar{y}_K, w_1,\Bar{a}_K) f(w_1 \mid y_0) = \sum_{u_0}\sum_{w_1}h_{K1}(\Bar{y}_K, w_1,\Bar{a}_K) f(w_1 \mid y_0, u_0, a_1) f(u_0 \mid y_0) \\ &= \sum_{u_0} \sum_{u_1} \sum_{\Bar{w}_2}h_{K2}(\Bar{y}_K, \Bar{w}_2,\Bar{a}_K) f(\Bar{w}_2 \mid \Bar{y}_1, \Bar{u}_1, \Bar{a}_2) f(u_1 \mid u_0, \Bar{y}_1, a_1) f(y_1 \mid u_0, y_0, a_1) f(u_0 \mid y_0) \\  &\vdots\\ &= \sum_{\Bar{u}_{K-1}} \sum_{\Bar{w}_K} h_{KK}(\Bar{y}_K, \Bar{w}_K,\Bar{a}_K) f(\Bar{w}_K \mid \Bar{y}_{K-1}, \Bar{u}_{K-1}, \Bar{a}_K) \left\{\prod_{j=1}^{K-1} f(u_j \mid \Bar{u}_{j-1},\Bar{y}_j,\Bar{a}_j)f(y_j \mid \Bar{u}_{j-1},\Bar{y}_{-1},\Bar{a}_j) \right\} f(u_0 \mid y_0) \\ &= \sum_{\Bar{u}_{K-1}} f(Y_K = y_K \mid \Bar{y}_{K-1}, \Bar{u}_{K-1}, \Bar{a}_K) \left\{\prod_{j=1}^{K-1} f(u_j \mid \Bar{u}_{j-1},\Bar{y}_j,\Bar{a}_j)f(y_j \mid \Bar{u}_{j-1},\Bar{y}_{-1},\Bar{a}_j) \right\} f(u_0 \mid y_0) \\ &= f(Y_K(a_1,\ldots,a_K) = y_K, \ldots, Y_1(a_1) = y_1 \mid y_0).
\end{aligned}
\end{equation*}
The first equality follows by basic probability, the inner equalities all follow from Lemma \ref{lm: g_one_step}, the second to last equality follows from Assumption \ref{assumption: bridge_functions_general}, and the last follows by the usual g-formula, since we have sequential exchangeability given the $Y$ and $U$ variables from Assumption \ref{assumption: proxy-general}.
\end{proof}
In the following lemma, similar to the $K = 2$ case, we link the latent bridge functions to observable bridge functions.
\begin{lemma}
 Suppose that there exist functions $h_{KK},\ldots,h_{K1}$ that satisfy 
 \begin{equation*}
    f(Y_K = y_K \mid \Bar{y}_{K-1},\Bar{z}_{K},\Bar{a}_{K}) = \sum_{\Bar{w}_K}h_{KK}(\Bar{y}_K,\Bar{w}_K,\Bar{a}_K)f(\Bar{w}_K \mid \Bar{y}_{K-1}, \Bar{z}_{K}, \Bar{a}_K),
\end{equation*}
and for $k = 1, \ldots, K-1$,
\begin{equation*}
    \sum_{\Bar{w}_{k+1}} h_{K,k+1}(\Bar{y}_K,\Bar{w}_{k+1},\Bar{a}_K)f(\Bar{w}_{k+1},y_{k} \mid \Bar{y}_{k-1}, \Bar{z}_{k},\Bar{a}_k)= \sum_{\Bar{w}_{k}}h_{Kk}(\Bar{y}_K,\Bar{w}_{k},\Bar{a}_K) f(\Bar{w}_{k} \mid \Bar{y}_{k-1}, \Bar{z}_{k},\Bar{a}_k).
\end{equation*}
If Assumption \ref{assumption: proxy-general} and \ref{assumption: completeness-general} hold, then $h_{KK},\ldots,h_{K1}$ solve the respective integral equations from Assumption \ref{assumption: bridge_functions_general} as well.
\end{lemma}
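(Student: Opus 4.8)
The plan is to generalize the argument used for Lemma \ref{lm: link_latent_to_observed} in the two–time–point case, handling the single \emph{top} equation for $h_{KK}$ and the $K-1$ \emph{nested} equations for $h_{Kk}$, $k=1,\ldots,K-1$, separately. In each case the strategy is identical: rewrite both sides of the observed integral equation (which conditions on $\overline{z}$) as a mixture over $\overline{U}_{k-1}$ carrying the common weight $f(\overline{u}_{k-1}\mid\overline{y}_{k-1},\overline{z}_k,\overline{a}_k)$, subtract, recognize the result as the statement that a conditional expectation of some function of $\overline{U}_{k-1}$ given $(\overline{y}_{k-1},\overline{z}_k,\overline{a}_k)$ vanishes, and then invoke the corresponding completeness condition at level $k$ from Assumption \ref{assumption: completeness-general} to force that function itself to be zero, which is precisely the latent equation from Assumption \ref{assumption: bridge_functions_general}.

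For the top equation I would start from $f(Y_K=y_K\mid\overline{y}_{K-1},\overline{z}_K,\overline{a}_K)=\sum_{\overline{w}_K}h_{KK}(\overline{y}_K,\overline{w}_K,\overline{a}_K)f(\overline{w}_K\mid\overline{y}_{K-1},\overline{z}_K,\overline{a}_K)$. Expanding the left side over $\overline{U}_{K-1}$ and using $Y_K\indep\overline{Z}_K\mid\overline{Y}_{K-1},\overline{U}_{K-1},\overline{A}_K$ removes $\overline{z}_K$ from $f(y_K\mid\cdots)$, while expanding the right side and using $\overline{W}_K\indep\overline{Z}_K\mid\overline{Y}_{K-1},\overline{U}_{K-1},\overline{A}_K$ removes $\overline{z}_K$ from $f(\overline{w}_K\mid\cdots)$; both independences follow from the first line of Assumption \ref{assumption: proxy-general} at index $K$. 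Subtracting yields $E\big[f(Y_K=y_K\mid\overline{y}_{K-1},\overline{U}_{K-1},\overline{a}_K)-\sum_{\overline{w}_K}h_{KK}f(\overline{w}_K\mid\overline{y}_{K-1},\overline{U}_{K-1},\overline{a}_K)\mid\overline{y}_{K-1},\overline{z}_K,\overline{a}_K\big]=0$, and the completeness condition at $k=K$ forces the bracketed function of $\overline{U}_{K-1}$ to vanish, giving the latent $h_{KK}$ equation.

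For the nested equations I would proceed analogously at each level $k$. The relevant densities are $f(\overline{w}_{k+1},y_k\mid\overline{y}_{k-1},\overline{z}_k,\overline{a}_k)$ on the left and $f(\overline{w}_k\mid\overline{y}_{k-1},\overline{z}_k,\overline{a}_k)$ on the right. Conditioning each on $\overline{U}_{k-1}$ and summing, I would strip $\overline{z}_k$ from the former using the second line of Assumption \ref{assumption: proxy-general}, namely $\{\overline{W}_{k+1},Y_k\}\indep\overline{Z}_k\mid\overline{A}_k,\overline{U}_{k-1},\overline{Y}_{k-1}$, and from the latter using $\overline{W}_k\indep\overline{Z}_k\mid\overline{A}_k,\overline{U}_{k-1},\overline{Y}_{k-1}$, which follows from the first line of Assumption \ref{assumption: proxy-general} at index $k$ after conditioning on $A_k=a_k$. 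Subtracting the two representations, both weighted by $f(\overline{u}_{k-1}\mid\overline{y}_{k-1},\overline{z}_k,\overline{a}_k)$, produces a conditional expectation over $\overline{U}_{k-1}$ equal to zero, whereupon the level-$k$ completeness condition yields exactly the latent nested equation for $h_{Kk}$.

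I expect the main obstacle to be careful bookkeeping rather than any new idea: one must verify that each joint conditional independence in Assumption \ref{assumption: proxy-general} delivers the precise marginal independence needed, in particular that $\overline{W}_k\indep\overline{Z}_k$ continues to hold after conditioning on the \emph{full} treatment history $\overline{A}_k=\overline{a}_k$ rather than merely $\overline{A}_{k-1}=\overline{a}_{k-1}$, which requires applying the decomposition and weak-union graphoid properties to the stated joint statements. Beyond the heavier index notation and the use of a distinct completeness condition at each level $k$, the argument introduces no difficulty absent from the $K=2$ proof of Lemma \ref{lm: link_latent_to_observed}.
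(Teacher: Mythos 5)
Your proposal is correct and follows essentially the same route as the paper's proof: expand both sides of each observed equation as a mixture over $\overline{U}_{k-1}$ with weight $f(\overline{u}_{k-1}\mid\overline{y}_{k-1},\overline{z}_k,\overline{a}_k)$, strip $\overline{z}_k$ via the conditional independences in Assumption \ref{assumption: proxy-general}, and invoke the level-$k$ completeness condition to recover the latent equations, treating the $h_{KK}$ equation and the nested $h_{Kk}$ equations separately. Your additional remark that the marginal independence $\overline{W}_k\indep\overline{Z}_k\mid\overline{A}_k,\overline{U}_{k-1},\overline{Y}_{k-1}$ must be extracted from the joint statement via decomposition and weak union is a point the paper glosses over, but it does not change the argument.
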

\begin{proof}
Fix a $k$. Suppose that
\begin{equation*}
    \sum_{\Bar{w}_{k+1}} h_{K,k+1}(\Bar{y}_K,\Bar{w}_{k+1},\Bar{a}_K)f(\Bar{w}_{k+1},y_{k} \mid \Bar{y}_{k-1}, \Bar{z}_{k},\Bar{a}_k)= \sum_{\Bar{w}_{k}}h_{Kk}(\Bar{y}_K,\Bar{w}_{k},\Bar{a}_K) f(\Bar{w}_{k} \mid \Bar{y}_{k-1}, \Bar{z}_{k},\Bar{a}_k).
\end{equation*}
We can simplify using the fact that $\{\overline{W}_{k+1}, Y_k\} \indep \overline{Z}_k \mid \overline{Y}_{k-1}, \overline{A}_k, \overline{U}_{k-1}$ and $\overline{W}_{k} \indep \overline{Z}_k \mid \overline{Y}_{k-1}, \overline{A}_k, \overline{U}_{k-1}$ from Assumption \ref{assumption: proxy-general}:
\begin{equation*}
\begin{aligned}
f(\Bar{w}_{k+1},y_{k} \mid \Bar{y}_{k-1}, \Bar{z}_{k},\Bar{a}_k) &= \sum_{\overline{u}_{k-1}} f(\Bar{w}_{k+1},y_{k} \mid \Bar{y}_{k-1}, \Bar{z}_{k},\Bar{a}_k, \overline{u}_{k-1}) f(\overline{u}_{k-1} \mid \Bar{y}_{k-1}, \Bar{z}_{k},\Bar{a}_k) \\ &= \sum_{\overline{u}_{k-1}} f(\Bar{w}_{k+1},y_{k} \mid \Bar{y}_{k-1}, \Bar{a}_k, \overline{u}_{k-1}) f(\overline{u}_{k-1} \mid \Bar{y}_{k-1}, \Bar{z}_{k},\Bar{a}_k), \\ 
f(\Bar{w}_{k} \mid \Bar{y}_{k-1}, \Bar{z}_{k},\Bar{a}_k) &= \sum_{\overline{u}_{k-1}} f(\Bar{w}_{k} \mid \Bar{y}_{k-1}, \Bar{z}_{k},\Bar{a}_k, \overline{u}_{k-1}) f(\overline{u}_{k-1} \mid \Bar{y}_{k-1}, \Bar{z}_{k},\Bar{a}_k) \\ &= \sum_{\overline{u}_{k-1}} f(\Bar{w}_{k} \mid \Bar{y}_{k-1}, \Bar{z}_{k}, \overline{u}_{k-1}) f(\overline{u}_{k-1} \mid \Bar{y}_{k-1}, \Bar{z}_{k},\Bar{a}_k).
\end{aligned}
\end{equation*}
Plugging these equations back into the previous display and changing the order of summation, we get 
\begin{equation*}
\begin{aligned}
    \sum_{\overline{u}_{k-1}}  \sum_{\Bar{w}_{k+1}} &h_{K,k+1}(\Bar{y}_K,\Bar{w}_{k+1},\Bar{a}_K)f(\Bar{w}_{k+1},y_{k} \mid \Bar{y}_{k-1}, \Bar{a}_k, \overline{u}_{k-1}) f(\overline{u}_{k-1} \mid \Bar{y}_{k-1}, \Bar{z}_{k},\Bar{a}_k) \\&= \sum_{\overline{u}_{k-1}} \sum_{\Bar{w}_{k}}h_{Kk}(\Bar{y}_K,\Bar{w}_{k},\Bar{a}_K) f(\Bar{w}_{k} \mid \Bar{y}_{k-1}, \Bar{z}_{k}, \overline{u}_{k-1}) f(\overline{u}_{k-1} \mid \Bar{y}_{k-1}, \Bar{z}_{k},\Bar{a}_k).
\end{aligned}
\end{equation*}
By the completeness condition from Assumption \ref{assumption: completeness-general}, we get that $$\sum_{\Bar{w}_{k+1}} h_{K,k+1}(\Bar{y}_K,\Bar{w}_{k+1},\Bar{a}_K)f(\Bar{w}_{k+1},y_{k} \mid \Bar{y}_{k-1}, \Bar{a}_k, \overline{u}_{k-1}) = \sum_{\Bar{w}_{k}}h_{Kk}(\Bar{y}_K,\Bar{w}_{k},\Bar{a}_K) f(\Bar{w}_{k} \mid \Bar{y}_{k-1}, \Bar{z}_{k}, \overline{u}_{k-1}),$$
as desired. Next, suppose that
\begin{equation*}
     f(Y_K = y_K \mid \Bar{y}_{K-1},\Bar{z}_{K},\Bar{a}_{K}) = \sum_{\Bar{w}_K}h_{KK}(\Bar{y}_K,\Bar{w}_K,\Bar{a}_K)f(\Bar{w}_K \mid \Bar{y}_{K-1}, \Bar{z}_{K}, \Bar{a}_K).
\end{equation*}
Using the fact that $Y_K \indep \overline{Z}_K \mid \Bar{Y}_{K-1},\Bar{A}_{K},\overline{U}_{K-1}$ and $\overline{W}_K \indep \overline{Z}_K \mid \Bar{Y}_{K-1},\Bar{A}_{K},\overline{U}_{K-1}$ from Assumption \ref{assumption: proxy-general},
\begin{equation*}
\begin{aligned}
 f(Y_K = y_K \mid \Bar{y}_{K-1},\Bar{z}_{K},\Bar{a}_{K}) &= \sum_{\overline{u}_{K-1}} f(Y_K = y_K \mid \Bar{y}_{K-1},\Bar{z}_{K},\Bar{a}_{K},\overline{u}_{K-1})f(\overline{u}_{K-1} \mid \Bar{y}_{K-1},\Bar{z}_{K},\Bar{a}_{K})  \\ &= \sum_{\overline{u}_{K-1}} f(Y_K = y_K \mid \Bar{y}_{K-1},\Bar{a}_{K},\overline{u}_{K-1})f(\overline{u}_{K-1} \mid \Bar{y}_{K-1},\Bar{z}_{K},\Bar{a}_{K}), \\ f(\Bar{w}_K \mid \Bar{y}_{K-1}, \Bar{z}_{K}, \Bar{a}_K) &= \sum_{\overline{u}_{K-1}}f(\Bar{w}_K \mid \Bar{y}_{K-1}, \Bar{z}_{K}, \Bar{a}_K, \overline{u}_{K-1}) f(\overline{u}_{K-1} \mid \Bar{y}_{K-1},\Bar{z}_{K},\Bar{a}_{K}) \\ &= \sum_{\overline{u}_{K-1}}f(\Bar{w}_K \mid \Bar{y}_{K-1}, \Bar{a}_K, \overline{u}_{K-1}) f(\overline{u}_{K-1} \mid \Bar{y}_{K-1},\Bar{z}_{K},\Bar{a}_{K}).
\end{aligned}
\end{equation*}
Plugging these equations back into the previous display and changing the order of summation, we get
\begin{equation*}
\begin{aligned}
\sum_{\overline{u}_{K-1}} &f(Y_K = y_K \mid \Bar{y}_{K-1},\Bar{a}_{K},\overline{u}_{K-1})f(\overline{u}_{K-1} \mid \Bar{y}_{K-1},\Bar{z}_{K},\Bar{a}_{K}) \\ &= \sum_{\overline{u}_{K-1}}\sum_{\Bar{w}_K}h_{KK}(\Bar{y}_K,\Bar{w}_K,\Bar{a}_K)f(\Bar{w}_K \mid \Bar{y}_{K-1}, \Bar{a}_K, \overline{u}_{K-1}) f(\overline{u}_{K-1} \mid \Bar{y}_{K-1},\Bar{z}_{K},\Bar{a}_{K}).
\end{aligned}
\end{equation*}
Finally, the completeness condition from Assumption \ref{assumption: completeness-general} implies that $f(Y_K = y_K \mid \Bar{y}_{K-1},\Bar{a}_{K},\overline{u}_{K-1}) = \sum_{\Bar{w}_K}h_{KK}(\Bar{y}_K,\Bar{w}_K,\Bar{a}_K)f(\Bar{w}_K \mid \Bar{y}_{K-1}, \Bar{a}_K, \overline{u}_{K-1})$, completing the proof.
\end{proof}
\section{Section D: Over-identified Case}
\label{section d: overidentified}
In this section, we examine the setting where $\min(|W_t|,|Z_t|) \geq |U_{t-1}|$ for $t = 1,2$. We require a rank condition on certain matrices.
\begin{assumption}
\label{assumption: categorical_rank}
For all $(\overline{y}_1,\overline{a}_2)$, the matrices
$P(\overline{W}_2 \mid \overline{U}_1, \overline{y}_1,\overline{a}_2)$ and $P(\overline{U}_1 \mid \overline{Z}_2,\overline{y}_1,\overline{a}_2)$ have rank $|\overline{U}_1|$. In addition, for all $({y}_0,{a}_1)$, $P({W}_1 \mid {U}_0, {y}_0,{a}_1)$ and $P({U}_1 \mid {Z}_1,{y}_0,{a}_1)$ have rank $|U_0|$.
\end{assumption}
\begin{proposition}
Let $\infty > \min(|W_t|,|Z_t|) \geq |U_{t-1}|$ for $t = 1,2$. Also, suppose Assumptions \ref{assumption: proxy} and \ref{assumption: categorical_rank} hold. Then there exist solutions $h_{22}$, $h_{21}$, and $h_{11}$ solving the respective integral equations from Assumption \ref{assumption: bridge_functs_2} and Equation \ref{eq: bridge_funct_1}. In the special case where $|W_t| = |Z_t| = |U_{t-1}|$ for $t = 1,2$, the conclusion of Proposition 2 holds. 
\end{proposition}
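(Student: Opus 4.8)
The plan is to exploit the fact that, with all variables categorical, every equation in Assumption \ref{assumption: bridge_functs_2} and Equation \ref{eq: bridge_funct_1} collapses to a finite-dimensional linear system, and that existence of a solution to a system of the form $h^{\mathrm{T}} M = b^{\mathrm{T}}$ is governed entirely by the column rank of $M$. Concretely, for fixed values of the conditioning arguments I would write each bridge-function equation as $h^{\mathrm{T}} M = b^{\mathrm{T}}$, where the unknown row vector $h$ is indexed by the relevant $\overline{w}$ and $M$ is a matrix of conditional probabilities indexed by $(\overline{w},\overline{u})$. The key linear-algebra fact is that the map $h \mapsto h^{\mathrm{T}} M$ is surjective onto the space of admissible right-hand sides exactly when $M$ has full column rank; equivalently $M^{\mathrm{T}}$ has a right inverse, so $M^{\mathrm{T}} h = b$ is consistent for \emph{every} $b$. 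Since $\min(|W_t|,|Z_t|) \geq |U_{t-1}|$, the rank-$|\overline{U}_1|$ and rank-$|U_0|$ conditions of Assumption \ref{assumption: categorical_rank} are precisely full-column-rank conditions on the $W$-given-$U$ matrices, which is what I need for existence.

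I would then construct the three bridge functions in the nested order dictated by Assumption \ref{assumption: bridge_functs_2}. First, fixing $(\overline{y}_2,\overline{a}_2)$, the latent equation for $h_{22}$ reads $h_{22}^{\mathrm{T}} P(\overline{W}_2 \mid \overline{U}_1, \overline{y}_1,\overline{a}_2) = P(y_2 \mid \overline{U}_1,\overline{y}_1,\overline{a}_2)$; since $P(\overline{W}_2 \mid \overline{U}_1,\overline{y}_1,\overline{a}_2)$ has full column rank $|\overline{U}_1|$, a solution $h_{22}$ exists. Second, substituting this $h_{22}$ into the latent $h_{21}$ equation produces a known right-hand side $G(u_0)= f(y_1\mid y_0,u_0,a_1)\sum_{\overline{w}_2}h_{22}(\overline{y}_2,\overline{w}_2,\overline{a}_2)f(\overline{w}_2\mid\overline{y}_1,a_1,u_0)$, and the equation becomes $h_{21}^{\mathrm{T}} P(W_1\mid U_0,y_0,a_1)=G^{\mathrm{T}}$; because $P(W_1\mid U_0,y_0,a_1)$ has full column rank $|U_0|$, this system is consistent \emph{regardless of the particular value of $G$}, so $h_{21}$ exists. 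Third, $h_{11}$ solves the same type of system with right-hand side $f(Y_1=y_1\mid y_0,a_1,u_0)$ and the same coefficient matrix $P(W_1\mid U_0,y_0,a_1)$, so existence again follows from the rank-$|U_0|$ condition.

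For the equal-cardinality specialization $|W_t|=|Z_t|=|U_{t-1}|$, all matrices in Assumption \ref{assumption: categorical_rank} become square and full rank, hence invertible, so the latent solutions above are unique. To recover the closed forms of Proposition 2 I would pass to the observed-data equations of Lemma \ref{lm: link_latent_to_observed}: invertibility of $P(\overline{U}_1\mid\overline{Z}_2,\overline{y}_1,\overline{a}_2)$ and $P(U_0\mid Z_1,y_0,a_1)$ is exactly the categorical completeness condition, so any solution of the $Z$-conditioned equations also solves the $U$-conditioned ones. The observed mixing matrices are themselves invertible through the factorizations $P(\overline{W}_2\mid\overline{Z}_2,\overline{y}_1,\overline{a}_2)=P(\overline{W}_2\mid\overline{U}_1,\overline{y}_1,\overline{a}_2)P(\overline{U}_1\mid\overline{Z}_2,\overline{y}_1,\overline{a}_2)$ and $P(W_1\mid Z_1,y_0,a_1)=P(W_1\mid U_0,y_0,a_1)P(U_0\mid Z_1,y_0,a_1)$, which hold by the conditional independences $\overline{W}_2\indep\overline{Z}_2\mid\overline{U}_1,\overline{Y}_1,\overline{A}_2$ and $W_1\indep Z_1\mid U_0,Y_0,A_1$ from Assumption \ref{assumption: proxy}. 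Inverting yields $h_{22}^{\mathrm{T}}=P(y_2\mid\overline{Z}_2,\overline{y}_1,\overline{a}_2)P(\overline{W}_2\mid\overline{Z}_2,\overline{y}_1,\overline{a}_2)^{-1}$ and, after substitution, $h_{21}^{\mathrm{T}}=P(y_2\mid\overline{Z}_2,\overline{y}_1,\overline{a}_2)P(\overline{W}_2\mid\overline{Z}_2,\overline{y}_1,\overline{a}_2)^{-1}P(\overline{W}_2,y_1\mid Z_1,y_0,a_1)P(W_1\mid Z_1,y_0,a_1)^{-1}$; plugging $h_{21}$ into the joint identification formula $\sum_{w_1}h_{21}(\overline{y}_2,w_1,\overline{a}_2)f(w_1\mid y_0)$ of Theorem \ref{thm: counterfactual_id}, and the analogously derived $h_{11}$ into $\sum_{w_1}h_{11}(w_1,\overline{y}_1,a_1)f(w_1\mid y_0)$, telescopes the matrix products into the stated closed-form expressions.

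I expect the main obstacle to be the nested dependence of $h_{21}$ on $h_{22}$: I must confirm that the full-column-rank condition delivers consistency of the $h_{21}$ system for the data-dependent right-hand side $G$ (and not merely for right-hand sides of a special form), and I must keep the orientation of each probability matrix (which argument indexes rows versus columns) consistent so that the factorizations of the observed mixing matrices and the telescoping substitution into Theorem \ref{thm: counterfactual_id} go through cleanly.
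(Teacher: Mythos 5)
Your proposal is correct, and it reaches the same closed-form expressions, but the existence half is organized differently from the paper. You solve the \emph{latent} ($U$-conditioned) linear systems directly: since $P(\overline{W}_2\mid\overline{U}_1,\overline{y}_1,\overline{a}_2)$ and $P(W_1\mid U_0,y_0,a_1)$ have full column rank, the map $h\mapsto h^{\mathrm{T}}M$ is onto, so each nested system is consistent for \emph{any} right-hand side --- including the data-dependent $G$ produced by substituting $h_{22}$ into the $h_{21}$ equation, which resolves the obstacle you flag at the end. The paper instead starts from the \emph{observed} ($Z$-conditioned) systems: it uses the conditional independences to factor $P(\overline{W}_2\mid\overline{Z}_2,\cdot)=P(\overline{W}_2\mid\overline{U}_1,\cdot)P(\overline{U}_1\mid\overline{Z}_2,\cdot)$ (and the $W_1$/$Z_1$ analogue), exhibits explicit solutions via Moore--Penrose pseudoinverses, and then right-multiplies by $P(\overline{U}_1\mid\overline{Z}_2,\cdot)^{\dagger}$ to show those same solutions satisfy the latent equations. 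Your route is more elementary for the bare existence claim in the proposition (which only concerns the latent equations); the paper's route buys, as a byproduct, solvability of the observed-data systems and the fact that their solutions transfer to the latent scale --- which is what you ultimately need anyway for the equal-cardinality closed forms, and which you correctly recover by invoking the same factorizations and Lemma~\ref{lm: link_latent_to_observed}. Two small points to tighten: the telescoping step for $h_{21}$ also uses the factorization $P(\overline{W}_2,y_1\mid Z_1,y_0,a_1)=P(\overline{W}_2,y_1\mid U_0,y_0,a_1)P(U_0\mid Z_1,y_0,a_1)$, which rests on the third line of Assumption~\ref{assumption: proxy} ($\{\overline{W}_2,Y_1\}\indep Z_1\mid A_1,U_0,Y_0$), not just the two independences you list; and the invertibility of the observed mixing matrices in the square case should be stated as following from their being products of the two invertible factors.
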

\begin{proof}
Our proof resembles the proof of Lemma 1 from \cite{Shi2020MultiplyConfounding}. Starting from the last time step, we know that $\overline{W}_2 \indep A_2, \overline{Z}_2 \mid \overline{U}_1, \overline{Y}_1,A_1=a_1$ and $\overline{Y}_2(a_1,a_2) \indep A_2, \overline{Z}_2 \mid \overline{U}_1, \overline{Y}_1,A_1=a_1$, so 
\begin{equation} 
\label{eq: categorical h_22 ci}
\begin{aligned}
    P(\overline{W}_2 \mid \overline{Z}_2, \overline{y}_1,\overline{a}_2) &= P(\overline{W}_2 \mid \overline{U}_1, \overline{y}_1,{a}_1)P(\overline{U}_1 \mid \overline{Z}_2,\overline{y}_1,\overline{a}_2) \\
    P(y_2 \mid \overline{Z}_2,\overline{y}_1,\overline{a}_2) &= P(y_2 \mid \overline{U}_1,\overline{y}_1,\overline{a}_2)P(\overline{U}_1 \mid \overline{Z}_2,\overline{y}_1,\overline{a}_2)
\end{aligned}
\end{equation}
Recall that by Assumption \ref{assumption: categorical_rank}, $P(\overline{W}_2 \mid \overline{U}_1, \overline{y}_1,{a}_1)$ and $P(\overline{U}_1 \mid \overline{Z}_2,\overline{y}_1,\overline{a}_2)$ have rank $|\overline{U}_1|$, so they have left and right inverses, respectively. Moreover, the left and right inverses equal the Moore-Penrose pseudoinverses. We denote the Moore-Penrose pseudoinverse of a matrix $A$ by $A^\dagger$. We obtain
\begin{equation*}
\begin{aligned}
P(\overline{U}_1 \mid \overline{Z}_2,\overline{y}_1,\overline{a}_2) &= P(\overline{W}_2 \mid \overline{U}_1, \overline{y}_1,{a}_1)^{\dagger}  P(\overline{W}_2 \mid \overline{Z}_2, \overline{y}_1,\overline{a}_2), \\  P(y_2 \mid \overline{Z}_2,\overline{y}_1,\overline{a}_2) &= P(y_2 \mid \overline{U}_1,\overline{y}_1,\overline{a}_2)P(\overline{W}_2 \mid \overline{U}_1, \overline{y}_1,{a}_1)^{\dagger}P(\overline{W}_2 \mid \overline{Z}_2, \overline{y}_1,\overline{a}_2)
\end{aligned}
\end{equation*}
Thus, there exists an $h_{22}$ (which may not be unique) that solves 
\begin{equation}
\label{eq: categorical h_22}
    P(y_2 \mid \overline{Z}_2,\overline{y}_1,\overline{a}_2) = h_{22}P(\overline{W}_2 \mid \overline{Z}_2, \overline{y}_1,\overline{a}_2),
\end{equation}
where we regard the vector $h_{22}$ (of dimension $1 \times |\overline{W}_2|$), as holding the values of the function $h_{22}(\overline{y}_2,\overline{w}_2,\overline{a}_2)$ for the different values $\overline{w}_2$ can take. Since $P(\overline{U}_1 \mid \overline{Z}_2,\overline{y}_1,\overline{a}_2)$ has a right inverse, we obtain from Equation \ref{eq: categorical h_22 ci} that
\begin{equation*}
\begin{aligned}
P(\overline{W}_2 \mid \overline{Z}_2, \overline{y}_1,\overline{a}_2)P(\overline{U}_1 \mid \overline{Z}_2,\overline{y}_1,\overline{a}_2)^\dagger &= P(\overline{W}_2 \mid \overline{U}_1, \overline{y}_1,{a}_1), \\ 
P(y_2 \mid \overline{Z}_2,\overline{y}_1,\overline{a}_2)P(\overline{U}_1 \mid \overline{Z}_2,\overline{y}_1,\overline{a}_2)^\dagger &= P(y_2 \mid \overline{U}_1,\overline{y}_1,\overline{a}_2).
\end{aligned}
\end{equation*}
Thus, 
\begin{equation*}
\begin{aligned}
P(y_2 \mid \overline{U}_1,\overline{y}_1,\overline{a}_2) &= P(y_2 \mid \overline{Z}_2,\overline{y}_1,\overline{a}_2)P(\overline{U}_1 \mid \overline{Z}_2,\overline{y}_1,\overline{a}_2)^\dagger \\ &= h_{22}P(\overline{W}_2 \mid \overline{Z}_2, \overline{y}_1,\overline{a}_2)P(\overline{U}_1 \mid \overline{Z}_2,\overline{y}_1,\overline{a}_2)^\dagger \\ &= h_{22} P(\overline{W}_2 \mid \overline{U}_1, \overline{y}_1,{a}_1).
\end{aligned}
\end{equation*}

We can do something similar for the $h_{21}$ function. For a fixed $y_1$, we may also view $P(\overline{W}_2,y_1 \mid Z_1,y_0,a_1)$ as a $|W_1| \times |W_2| \text{ by }|Z_1|$ matrix. Using the conditional independence assumptions in \ref{assumption: proxy}, we know that 
\begin{equation}
\label{eq: categorical h_21 ci}
\begin{aligned}
P(W_1 \mid Z_1, y_0, a_1) &= P(W_1 \mid U_0, y_0, a_1)P(U_0 \mid Z_1, y_0, a_1) \\
P(\overline{W}_2,y_1 \mid Z_1, y_0, a_1) &= P(\overline{W}_2,y_1 \mid U_0, y_0, a_1)P(U_0 \mid Z_1, y_0, a_1)
\end{aligned}
\end{equation}
Since $ P(W_1 \mid U_0, y_0, a_1)$ has rank $|U_0|$ and thus has a left inverse, we get 
\begin{equation*}
\begin{aligned}
P(\overline{W}_2,y_1 &\mid Z_1, y_0, a_1) = \\&P(\overline{W}_2,y_1 \mid U_0, y_0, a_1)P(W_1 \mid U_0, y_0, a_1)^{\dagger}P(W_1 \mid Z_1, y_0, a_1)
\end{aligned}
\end{equation*}
which also implies 
\begin{equation*}
\begin{aligned}
h_{22}P(\overline{W}_2,y_1 &\mid Z_1, y_0, a_1) = \\&h_{22}P(\overline{W}_2,y_1 \mid U_0, y_0, a_1)P(W_1 \mid U_0, y_0, a_1)^\dagger P(W_1 \mid Z_1, y_0, a_1),
\end{aligned}
\end{equation*}
for any $h_{22}$ vector, so there exists an $h_{21}$ (which may not be unique) vector that solves 
\begin{equation}
\label{eq: categorical h_21}
\begin{aligned}
h_{22}P(\overline{W}_2,y_1 &\mid Z_1, y_0, a_1) = h_{21}P(W_1 \mid Z_1, y_0, a_1),
\end{aligned}
\end{equation}
where we regard the vector $h_{21}$ (of dimension $1 \times |{W}_1|$), as holding the values of the function $h_{21}(\overline{y}_2,{w}_1,\overline{a}_2)$ for the different values ${w}_1$ can take. We also know that since $P(U_0 \mid Z_1, y_0, a_1)$ has a right inverse and rearranging Equation \ref{eq: categorical h_21 ci},
\begin{equation*}
\begin{aligned}
    P(W_1 \mid Z_1, y_0, a_1)P(U_0 \mid Z_1, y_0, a_1)^\dagger &= P(W_1 \mid U_0, y_0, a_1) \\
P(\overline{W}_2,y_1 \mid Z_1, y_0, a_1)P(U_0 \mid Z_1, y_0, a_1)^\dagger &= P(\overline{W}_2,y_1 \mid U_0, y_0, a_1).
\end{aligned}
\end{equation*}
Thus, we get that 
\begin{equation*}
\begin{aligned}
h_{22} P(\overline{W}_2,y_1 \mid U_0, y_0, a_1) &= h_{22} P(\overline{W}_2,y_1 \mid Z_1, y_0, a_1)P(U_0 \mid Z_1, y_0, a_1)^\dagger \\ &= h_{21}P(W_1 \mid Z_1, y_0, a_1)P(U_0 \mid Z_1, y_0, a_1)^\dagger \\ &= h_{21} P(W_1 \mid U_0, y_0, a_1).
\end{aligned} 
\end{equation*}

Lastly, for $h_{11}$, the steps follow similarly as in for $h_{22}$, as we know that ${W}_1 \indep A_1, {Z}_1 \mid {U}_0, {Y}_0$ and $Y_1(a_1) \indep {Z}_1, A_1 \mid {U}_0, {Y}_0$ so 
\begin{equation}
\label{eq: categorical h_11 ci}
\begin{aligned}
    P({W}_1 \mid {Z}_1, {y}_0,a_1) &= P({W}_1 \mid {U}_0, {y}_0, {a}_1)P({U}_0 \mid {Z}_1,{y}_0,{a}_1) \\
    P(y_1 \mid {Z}_1, {y}_0,{a}_1) &= P(y_1 \mid {U}_0, {y}_0,{a}_1)P({U}_0 \mid {Z}_1,{y}_0,{a}_1)
\end{aligned}
\end{equation}
Recall that $P({W}_1 \mid {U}_0, {y}_0,{a}_1)$ and $P({U}_1 \mid {Z}_1,{y}_0,{a}_1)$ have rank $|U_0|$ and thus have left and right inverses. Then 
\begin{equation*}
\begin{aligned}
P(U_0 \mid {Z}_1,{y}_0,{a}_1) &= P({W}_1 \mid {U}_0, {y}_0,{a}_1)^\dagger  P({W}_1 \mid {Z}_1, {y}_0,{a}_1), \\  P(y_1 \mid {Z}_1,{y}_0,{a}_1) &= P(y_1 \mid {U}_0,{y}_0,{a}_1)P({W}_1 \mid {U}_0, {y}_0,{a}_1)^\dagger P({W}_1 \mid {Z}_1, {y}_0,{a}_1),
\end{aligned}
\end{equation*}
and so there is an $h_{11}$ vector (which may not be unique) that solves
\begin{equation}
\label{eq: categorical h_11}
P(y_1 \mid {Z}_1,{y}_0,{a}_1) = h_{11} P({W}_1 \mid {Z}_1, {y}_0,{a}_1),
\end{equation}
where we regard the vector $h_{11}$ (of dimension $1 \times |{W}_1|$), as holding the values of the function $h_{11}(\overline{y}_1,{w}_1,\overline{a}_1)$ for the different values ${w}_1$ can take.
We also have that from rearranging Equation \ref{eq: categorical h_11 ci},
\begin{equation*}
\begin{aligned}
    P({W}_1 \mid {Z}_1, {y}_0,a_1)P({U}_0 \mid {Z}_1,{y}_0,{a}_1)^\dagger &= P({W}_1 \mid {U}_0, {y}_0, {a}_1) \\
    P(y_1 \mid {Z}_1, {y}_0,{a}_1)P({U}_0 \mid {Z}_1,{y}_0,{a}_1)^\dagger &= P(y_1 \mid {U}_0, {y}_0,{a}_1).
\end{aligned}
\end{equation*}
Thus,
\begin{equation*}
\begin{aligned}
P(y_1 \mid {U}_0, {y}_0,{a}_1) &= P(y_1 \mid {Z}_1, {y}_0,{a}_1)P({U}_0 \mid {Z}_1,{y}_0,{a}_1)^\dagger \\ &= h_{11} P({W}_1 \mid {Z}_1, {y}_0,{a}_1)P({U}_0 \mid {Z}_1,{y}_0,{a}_1)^\dagger \\ &= h_{11} P({W}_1 \mid {U}_0, {y}_0).
\end{aligned}
\end{equation*}
These results show that in the categorical confounding scenario, there exist $h_{22}, h_{21}, h_{11}$ that solve the equations from Assumption \ref{assumption: bridge_functs_2} and Equation \ref{eq: bridge_funct_1}. In the special case where the cardinality of the $W$, $U$, and $Z$ variables are the same, the Moore-Penrose pseudoinverses are simply inverses. Then by rearranging equations \ref{eq: categorical h_22}, \ref{eq: categorical h_21}, \ref{eq: categorical h_11}, the $h_{22}$, $h_{21}$, and $h_{11}$ vectors have unique representations
\begin{equation*}
\begin{aligned}
h_{22} &= P(y_2 \mid \overline{Z}_2,\overline{y}_1,\overline{a}_2)P(\overline{W}_2 \mid \overline{Z}_2, \overline{y}_1,\overline{a}_2)^{-1}\\
h_{21} &= h_{22}P(\overline{W}_2,y_1 \mid Z_1, y_0, a_1)P(W_1 \mid Z_1, y_0, a_1)^{-1}\\
h_{11} &= P(y_1 \mid {Z}_1,{y}_0,{a}_1)P({W}_1 \mid {Z}_1, {y}_0,{a}_1)^{-1}
\end{aligned}
\end{equation*}
Putting everything together and applying Theorem \ref{thm: counterfactual_id} we get that 
\begin{equation*}
\begin{aligned}
&P(Y_2(a_1,a_2)=y_2,Y_1(a_1)=y_1 \mid y_0) \\ &= P(y_2 \mid \overline{Z}_2,\overline{y}_1,\overline{a}_2)P(\overline{W}_2 \mid \overline{Z}_2, \overline{y}_1,\overline{a}_2)^{-1}P(\overline{W}_2,y_1 \mid Z_1, y_0, a_1) P(W_1 \mid Z_1, y_0, a_1)^{-1}P(W_1 \mid  y_0),
\end{aligned}
\end{equation*}
and 
\begin{equation*}
\begin{aligned}
&P(Y_1(a_1)=y_1 \mid y_0) \\ &= P(y_1 \mid \overline{Z}_1,\overline{y}_0,\overline{a}_1) P(W_1 \mid Z_1, y_0, a_1)^{-1}P(W_1 \mid  y_0).
\end{aligned}
\end{equation*}
This demonstrates the result of Proposition 2.
\end{proof}

\section*{Section F: Treatment bridge case}
\subsection*{Proof of Theorem \ref{thm: counterfactual_id_treat}}
\begin{proof}
By the discreteness of $Y_1$ and $Y_2$, we may rewrite $f\left( Y_{2}\left( a_{1},a_{2}\right) =y_{2},Y_{1}\left( a_{1}\right)
=y_{1}|y_{0}\right) = E[\mathbf{1}\{Y_2(a_1,a_2) = y_2,Y_1(a_1) = y_1\}\mid y_0]$. By the latent sequential randomization, we know that 
\begin{equation*}
E[\mathbf{1}\{Y_2(a_1,a_2) = y_2,Y_1(a_1) = y_1\}\mid y_0] = E \Big[ \frac{\mathbf{1}\{Y_2(a_1,a_2) = y_2,Y_1(a_1) = y_1\}\mathbf{1}\{\overline{A}_2 = \overline{a}_2\}}{P(A_2 = a_2 \mid A_1, \overline{Y}_1, \overline{U}_1)P(A_1 = a_1 \mid  Y_0, {U}_0)} \mid y_0\Big].
\end{equation*}
For a derivation of this fact, see \cite{Tsiatis2019DynamicRegimes} page 224. The proof now proceeds similarly to Theorem 2 of \cite{Ying2023ProximalStudies}.
\begin{equation*}
\begin{aligned}
 &E[q_{12}\left( \overline{y}_{1},\overline{Z}_{2},%
\overline{a}_{2}\right)\mathbf{1}\{Y_2 = y_2,Y_1 = y_1\}\mathbf{1}\{\overline{A}_2 = \overline{a}_2\} \mid y_0] \\&= E[E[q_{12}\left( \overline{y}_{1},\overline{Z}_{2},%
\overline{a}_{2}\right)\mathbf{1}\{Y_2 = y_2,Y_1 = y_1\}\mathbf{1}\{\overline{A}_2 = \overline{a}_2\}\mid \overline{A}_2, \overline{U}_1,\overline{Y}_1 ]\mid y_0] \\&= E[\mathbf{1}\{Y_2 = y_2,Y_1 = y_1\}\mathbf{1}\{\overline{A}_2 = \overline{a}_2\}E[q_{12}\left( \overline{y}_{1},\overline{Z}_{2},%
\overline{a}_{2}\right)\mid \overline{A}_2, \overline{U}_1,\overline{Y}_1 ]\mid y_0] \\&= E[\mathbf{1}\{Y_2(a_1,a_2) = y_2,Y_1 = y_1\}\mathbf{1}\{\overline{A}_2 = \overline{a}_2\}\frac{E[q_{11}(Y_0,Z_1,A_1)\mid A_1,U_0,Y_0]}{P(A_2 = a_2 \mid A_1,\overline{U}_1,\overline{Y}_1)}\mid y_0]\\&= E[E[\mathbf{1}\{Y_2(a_1,a_2) = y_2,Y_1 = y_1\}\mathbf{1}\{\overline{A}_2 = \overline{a}_2\}\frac{E[q_{11}(Y_0,Z_1,A_1)\mid A_1,U_0,Y_0]}{P(A_2 = a_2 \mid A_1,\overline{U}_1,\overline{Y}_1)} \mid A_1,\overline{U}_1,\overline{Y}_1]\mid y_0] \\&= E[\mathbf{1}\{{A}_1 = {a}_1, Y_1 = y_1\} \frac{E[q_{11}(Y_0,Z_1,A_1)\mid A_1,U_0,Y_0]}{P(A_2 = a_2 \mid A_1,\overline{U}_1,\overline{Y}_1)}E[\mathbf{1}\{Y_2(a_1,a_2) = y_2\}\mathbf{1}\{{A}_2 = {a}_2\} \mid A_1,\overline{U}_1,\overline{Y}_1]\mid y_0]  \\&= E[\mathbf{1}\{{A}_1 = {a}_1\} E[q_{11}(Y_0,Z_1,A_1)\mid A_1,U_0,Y_0]\mathbf{1}\{Y_2(a_1,a_2) = y_2, Y_1 = y_1\} \mid y_0] \\&= E[\mathbf{1}\{{A}_1 = {a}_1\} E[q_{11}(Y_0,Z_1,A_1)\mid A_1,U_0,Y_0]\mathbf{1}\{Y_2(a_1,a_2) = y_2, Y_1(a_1) = y_1\} \mid y_0] \\&= E[ \frac{1}{P(A_1 = a_1\mid U_0,Y_0)}E[\mathbf{1}\{Y_2(a_1,a_2) = y_2, Y_1(a_1) = y_1\} \mathbf{1}\{{A}_1 = {a}_1\}\mid U_0,Y_0] \mid y_0] \\&= E[ \frac{1}{P(A_1 = a_1\mid U_0,Y_0)} E[\mathbf{1}\{{A}_1 = {a}_1\}\mid U_0,Y_0]E[\mathbf{1}\{Y_2(a_1,a_2) = y_2, Y_1(a_1) = y_1\} \mid U_0,Y_0] \mid y_0] \\&= E[ \mathbf{1}\{Y_2(a_1,a_2) = y_2, Y_1(a_1) = y_1\} \mid y_0]
\end{aligned}
\end{equation*}
The first equality is by iterated expectation. The second by independence of $\overline{Z}_2$ and $Y_2$ given $\overline{A}_2, \overline{U}_1, \overline{Y}_1$. The third is by consistency and definition of $q_{12}$. The fourth is by iterated expectation. The fifth pulls out constants. The sixth is by independence between $A_2$ and $Y_2(a_1,a_2)$ conditional on $A_1, \overline{U}_1, \overline{Y}_1$ and cancellation. The seventh is by consistency. The eighth is by definition of $q_{11}$ and iterated expectation. The ninth is by independence of $A_1$ from potential outcomes $(Y_2(a_1,a_2),Y_1(a_1))$ given $U_0, Y_0$. The last is from iterated expectation. 
\end{proof}
\subsection*{Proof of Lemma \ref{lm: link_latent_to_observed_treatment}}
\begin{proof}
The proof logic follows that of Theorem B.2 from the supplementary material of \cite{Ying2023ProximalStudies} and is thus omitted. 
\end{proof}

\section{Section E: Additional details about the simulation}
\label{section e: simulation}
\subsection*{Proxy estimator}
The proxy estimator(s) for the categorical case in the simulation simply use plug-in estimates for all of the quantities according to Proposition 
2. Explicitly, 
\begin{equation*}
\begin{aligned}
&\widehat{d}_{2}^{proxy}\left( \overline{y}_{1},a_{1}\right) \\ &=\underset{a_{2}}{\arg \max } \sum_{y_{2}}\frac{\widehat{P}(y_2 \mid \overline{Z}_2,\overline{y}_1,\overline{a}_2)\widehat{P}(\overline{W}_2 \mid \overline{Z}_2, \overline{y}_1,\overline{a}_2)^{-1}\widehat{P}(\overline{W}_2,y_1 \mid Z_1, y_0, a_1) \widehat{P}(W_1 \mid Z_1, y_0, a_1)^{-1}\widehat{P}(W_1 \mid  y_0)}{\widehat{P}(y_1 \mid \overline{Z}_1,\overline{y}_0,\overline{a}_1) \widehat{P}(W_1 \mid Z_1, y_0, a_1)^{-1}\widehat{P}(W_1 \mid  y_0)}y_{2}, \\
&\widehat{V}_{2}^{proxy}\left( a_{1},\overline{y}_{1}\right) \\ &= \underset{a_{2}}{\max } \sum_{y_{2}}\frac{\widehat{P}(y_2 \mid \overline{Z}_2,\overline{y}_1,\overline{a}_2)\widehat{P}(\overline{W}_2 \mid \overline{Z}_2, \overline{y}_1,\overline{a}_2)^{-1}\widehat{P}(\overline{W}_2,y_1 \mid Z_1, y_0, a_1) \widehat{P}(W_1 \mid Z_1, y_0, a_1)^{-1}\widehat{P}(W_1 \mid  y_0)}{\widehat{P}(y_1 \mid \overline{Z}_1,\overline{y}_0,\overline{a}_1) \widehat{P}(W_1 \mid Z_1, y_0, a_1)^{-1}\widehat{P}(W_1 \mid  y_0)}y_{2}, \\
&\widehat{d}_1^{proxy}(y_0) =\underset{a_{1}}{\arg
\max } \sum_{y_{1}} P(y_1 \mid \overline{Z}_1,\overline{y}_0,\overline{a}_1) P(W_1 \mid Z_1, y_0, a_1)^{-1}P(W_1 \mid  y_0) \widehat{V}_{2}^{proxy}\left( a_{1},y_{1},y_{0}\right).
\end{aligned}
\end{equation*}
\subsection*{Oracle estimator}
We briefly review identification of the optimal regime if we had access to the unmeasured confounders. These follow directly from the g-formulas presented in the previous Section A.
The optimal regime and the optimal value function at time point 2 are as follows:
\begin{equation}
\label{eq: u_id}
\begin{aligned}
&d_{2}^{opt}\left( \overline{y}_{1},a_{1}\right)  =\underset{a_{2}}{\arg
\max }{E}\left\{ Y_{2}\left( a_{1},a_{2}\right) |\overline{Y}%
_{1}\left( a_{1}\right) =\overline{y}_{1}\right\}  \\
&=\underset{a_{2}}{\arg \max }\sum_{y_{2}}\frac{f\left( Y_{2}\left(
a_{1},a_{2}\right) =y_{2},Y_{1}\left( a_{1}\right) =y_{1}|y_{0}\right) }{%
f\left( Y_{1}\left( a_{1}\right) =y_{1}|y_{0}\right) }y_{2} \\
&=\underset{a_{2}}{\arg \max }\sum_{y_{2}}\frac{\sum_{u_{0},u_{1}}f\left(
Y_{2}=y_{2}|\overline{y}_{1},\overline{u}_{1},\overline{a}_{2}\right)
f\left(U_{1}=u_{1}|y_{1},y_{0}\ ,u_{0},a_{1}\right)
f(y_{1}|y_{0},u_{0},a_{1})f(u_{0}|y_{0})}{\sum_{u_{0}}f\left(
Y_{1}=y_{1}|a_{1},y_{0},u_{0}\right) f\left( u_{0}|y_{0}\right) }y_{2} \\
&V_{2}^{opt}\left( a_{1},\overline{y}_{1}\right)  =\underset{a_{2}}{\max }%
\sum_{y_{2}}\frac{\sum_{u_{0},u_{1}}f\left( Y_{2}=y_{2}|\overline{y}_{1},%
\overline{u}_{1},\overline{a}_{2}\right) f\left(U_{1}=u_{1}|y_{1},y_{0}\
,u_{0},a_{1}\right) f(y_{1}|y_{0},u_{0},a_{1})f(u_{0}|y_{0})}{%
\sum_{u_{0}}f\left( Y_{1}=y_{1}|a_{1},y_{0},u_{0}\right) f\left(
u_{0}|y_{0}\right) } y_{2}
\\
&d_1^{opt}(y_0)=\underset{a_{1}}{\arg
\max } {E}\left\{ V_{2}^{opt}\left( a_{1},Y_{1}\left( a_{1}\right)
,Y_{0}\right) |Y_{0}=y_{0}\right\}  \\
&=\underset{a_{1}}{\arg
\max } \sum_{y_{1}}f\left( Y_{1}\left( a_{1}\right) =y_{1}|y_{0}\right)
V_{2}^{opt}\left( a_{1},y_{1},y_{0}\right)  \\
&= \underset{a_{1}}{\arg
\max } \sum_{y_{1}}\sum_{u_{0}}f\left( Y_{1}\left( a_{1}\right)
=y_{1}|y_{0},u_{0}\right) f\left( u_{0}|y_{0}\right) V_{2}^{opt}\left(
a_{1},y_{1},y_{0}\right)  \\
&=\underset{a_{1}}{\arg
\max } \sum_{y_{1}}\sum_{u_{0}}f\left( Y_{1}=y_{1}|a_{1},y_{0},u_{0}\right)
f\left( u_{0}|y_{0}\right) V_{2}^{opt}\left( a_{1},y_{1},y_{0}\right) 
\end{aligned}
\end{equation}
The oracle in the simulation simply uses plug-in estimates for all of the quantities, i.e., 
\begin{equation}
\label{eq: oracle_estimate}
\begin{aligned}
&\widehat{d}_{2}^{oracle}\left( \overline{y}_{1},a_{1}\right) =\underset{a_{2}}{\arg \max }\sum_{y_{2}}\frac{\sum_{u_{0},u_{1}}\widehat{f}\left(
Y_{2}=y_{2}|\overline{y}_{1},\overline{u}_{1},\overline{a}_{2}\right)
\widehat{f}\left(U_{1}=u_{1}|y_{1},y_{0}\ ,u_{0},a_{1}\right)
\widehat{f}(y_{1}|y_{0},u_{0},a_{1})\widehat{f}(u_{0}|y_{0})}{\sum_{u_{0}}\widehat{f}\left(
Y_{1}=y_{1}|a_{1},y_{0},u_{0}\right) \widehat{f}\left( u_{0}|y_{0}\right) }y_{2} \\
&\widehat{V}_{2}^{oracle}\left( a_{1},\overline{y}_{1}\right)  =\underset{a_{2}}{\max }%
\sum_{y_{2}}\frac{\sum_{u_{0},u_{1}}\widehat{f}\left(
Y_{2}=y_{2}|\overline{y}_{1},\overline{u}_{1},\overline{a}_{2}\right)
\widehat{f}\left(U_{1}=u_{1}|y_{1},y_{0}\ ,u_{0},a_{1}\right)
\widehat{f}(y_{1}|y_{0},u_{0},a_{1})\widehat{f}(u_{0}|y_{0})}{\sum_{u_{0}}\widehat{f}\left(
Y_{1}=y_{1}|a_{1},y_{0},u_{0}\right) \widehat{f}\left( u_{0}|y_{0}\right) }y_{2}  
\\
&\widehat{d}_1^{oracle}(y_0) =\underset{a_{1}}{\arg
\max } \sum_{y_{1}}\sum_{u_{0}}\widehat{f}\left( Y_{1}=y_{1}|a_{1},y_{0},u_{0}\right)
\widehat{f}\left( u_{0}|y_{0}\right) \widehat{V}_{2}^{oracle}\left( a_{1},y_{1},y_{0}\right) 
\end{aligned}
\end{equation}%

\subsection*{NUCA estimator}
The NUCA estimator mistakenly assumes no unmeasured confounding, so it simply conditions on the arguments $(y_0)$ and $(y_1, a_1, y_0)$ at time points 1 and 2, respectively. The corresponding estimators can be written as follows:

\begin{equation}
\begin{aligned}
&\widehat{d}_{2}^{NUCA}\left( \overline{y}_{1},a_{1}\right) =\underset{a_{2}}{\arg \max } \ \widehat{E}(Y_2 \mid \overline{y}_{1},a_{1}) \\
&\widehat{V}_{2}^{NUCA}\left( a_{1},\overline{y}_{1}\right)  =\underset{a_{2}}{\max } \
 \widehat{E}(Y_2 \mid \overline{y}_{1},a_{1}) \\ 
 & \widehat{d}_1^{NUCA}(y_0) =\underset{a_{1}}{\arg
\max } \ \widehat{E}[\widehat{V}_{2}^{NUCA}\left( a_{1},y_{1},y_{0}\right) \mid y_0]
\end{aligned}
\end{equation}
Here, $\widehat{E}$ represents the sample average as all variables in the simulation were binary.

\subsection*{Additional simulation results}
For the methods oracle, NUCA, and proxy, we plot the histograms of regrets for each sample size in Figure \ref{fig:3-way-histogram}. We plot the difference in regret between NUCA and proxy in Figure \ref{fig:regret differences}. In addition, we display simulation results for the proxy estimators with order 4,5,7, and 8 order interactions. Specifically, over the 1000 iterations, we plot histograms of the regrets of each of the methods alongside the regrets of the NUCA estimator. The results are depicted in Figure \ref{fig: 4578}. For the most part, all seem to perform similarly, with the order 4 and 5 interaction estimators performing slightly worse than the order 7 and 8 interaction estimators.

\begin{figure}
    \centering
    \includegraphics[width=0.88\textwidth]{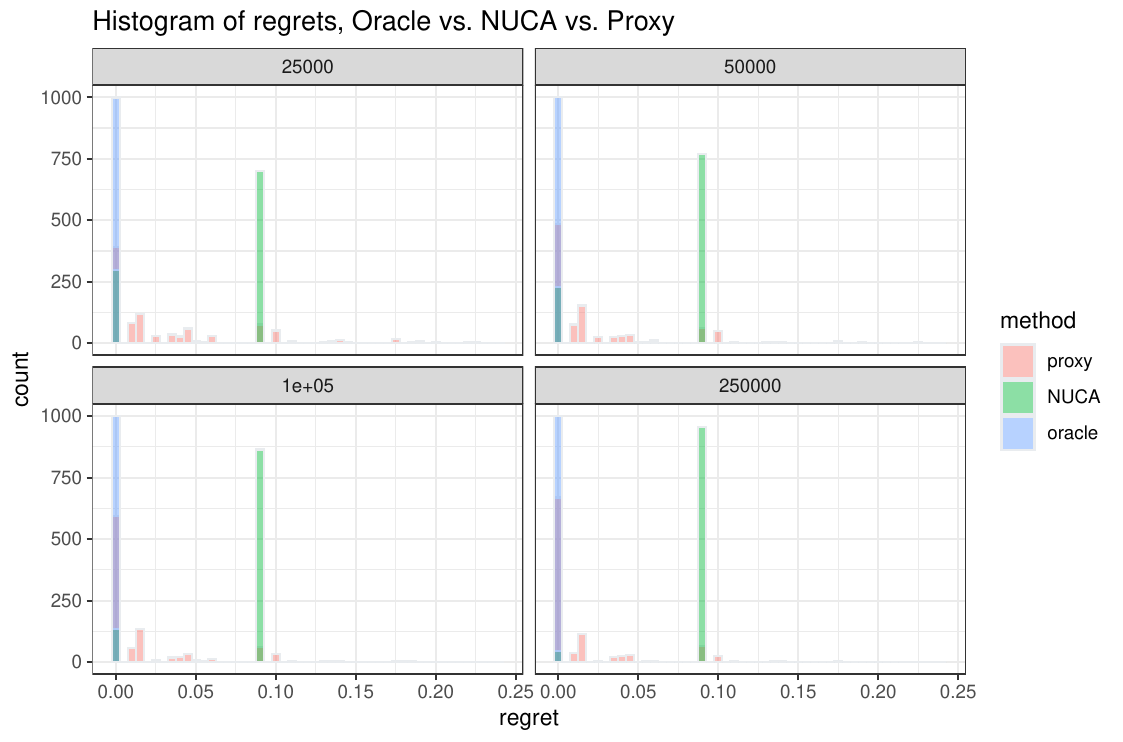}
    \caption{For each of the 4 sample sizes, we plot the regrets of the estimated optimal regime using the proxy, NUCA, and oracle methods over 1000 iterations.}
    \label{fig:3-way-histogram}
\end{figure}

\begin{figure}
    \centering
    \includegraphics[width=0.88\textwidth]{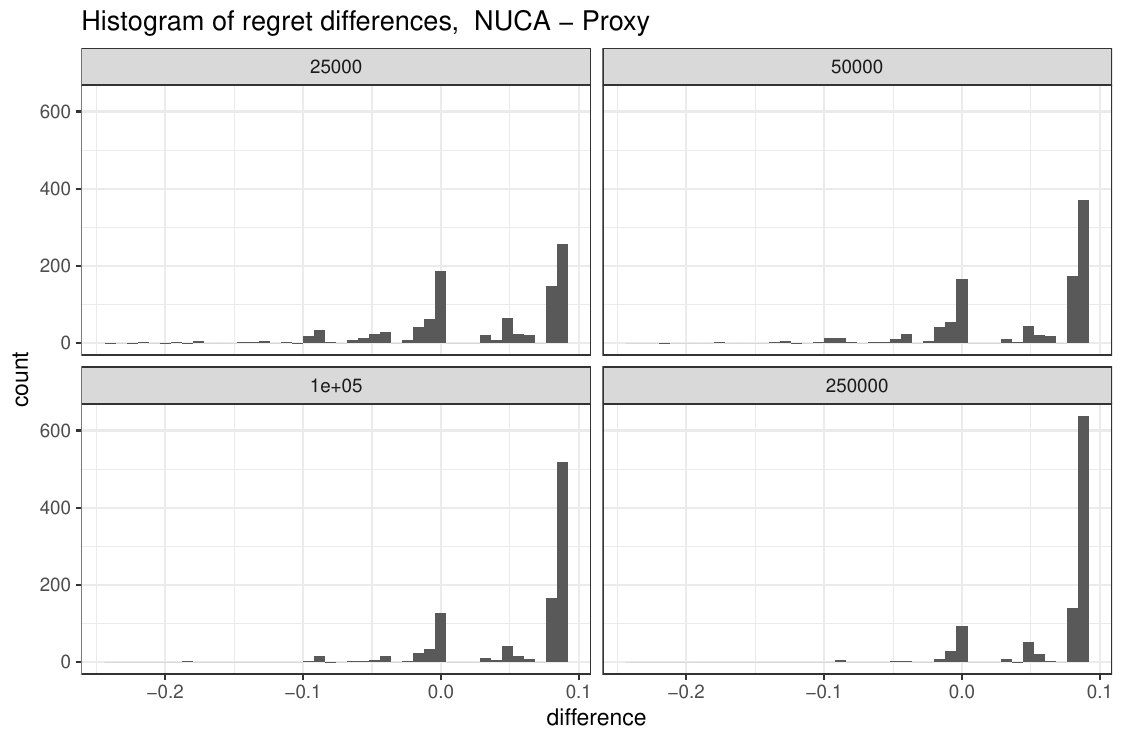}
    \caption{For each of the 4 sample sizes, we plot for each of the 1000 iterations, the difference in regret between the NUCA and proxy estimated optimal regimes. Larger difference indicates worse performance for NUCA.}
    \label{fig:regret differences}
\end{figure}

\begin{figure}
\begin{subfigure}{.5\textwidth}
  \centering
  \includegraphics[width=1\linewidth]{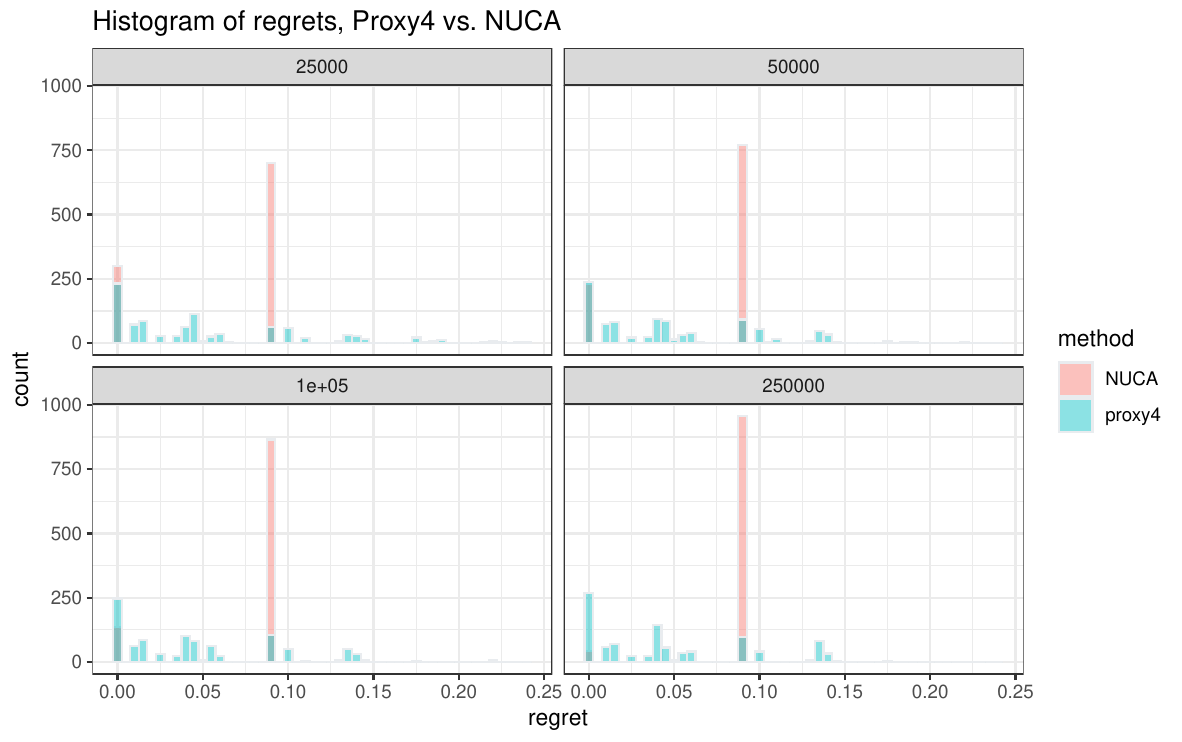}  
  \label{fig: sub4}
\end{subfigure}
\begin{subfigure}{.5\textwidth}
  \centering
  \includegraphics[width=1\linewidth]{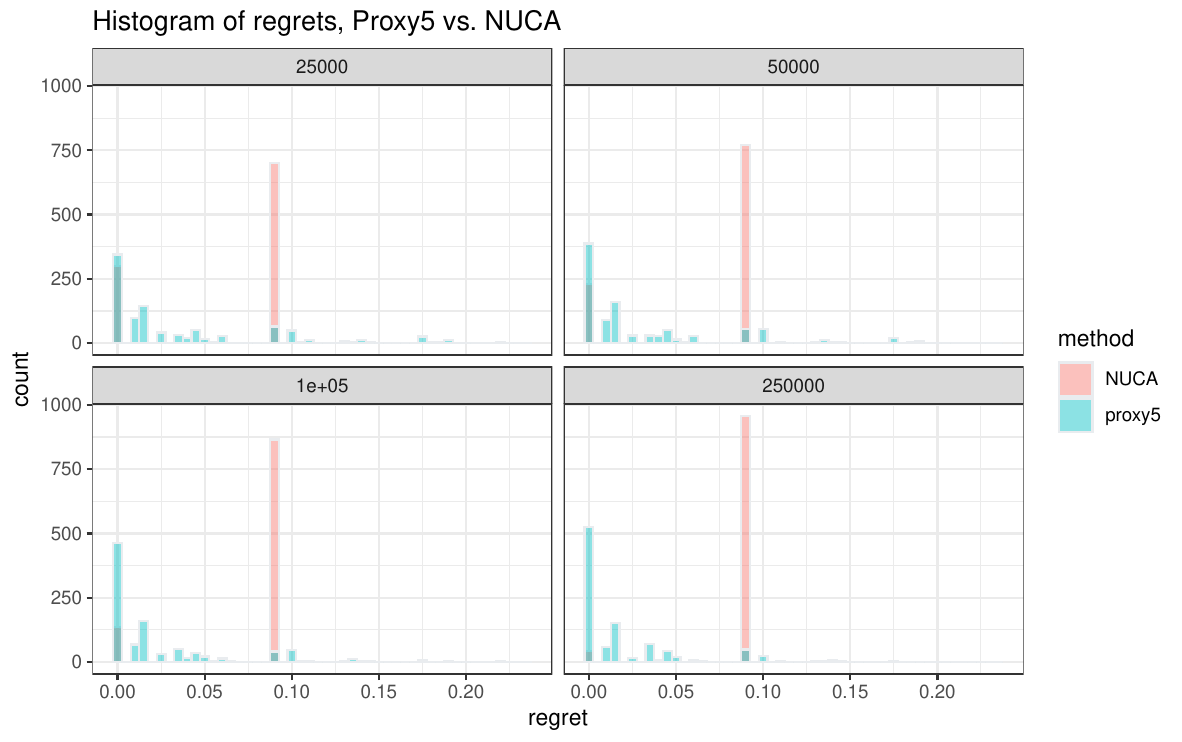}  
  \label{fig: sub5}
\end{subfigure}

\begin{subfigure}{.5\textwidth}
  \centering
  \includegraphics[width=1\linewidth]{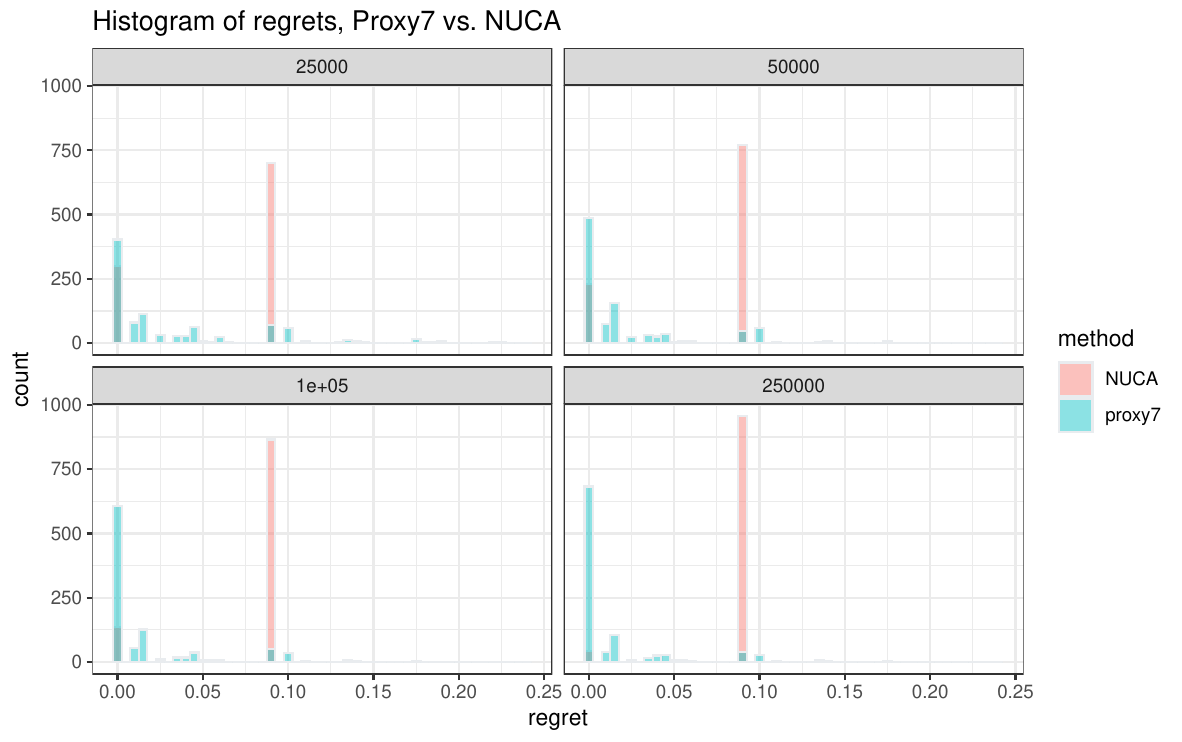}  
  \label{fig: sub7}
\end{subfigure}
\begin{subfigure}{.5\textwidth}
  \centering
  \includegraphics[width=1\linewidth]{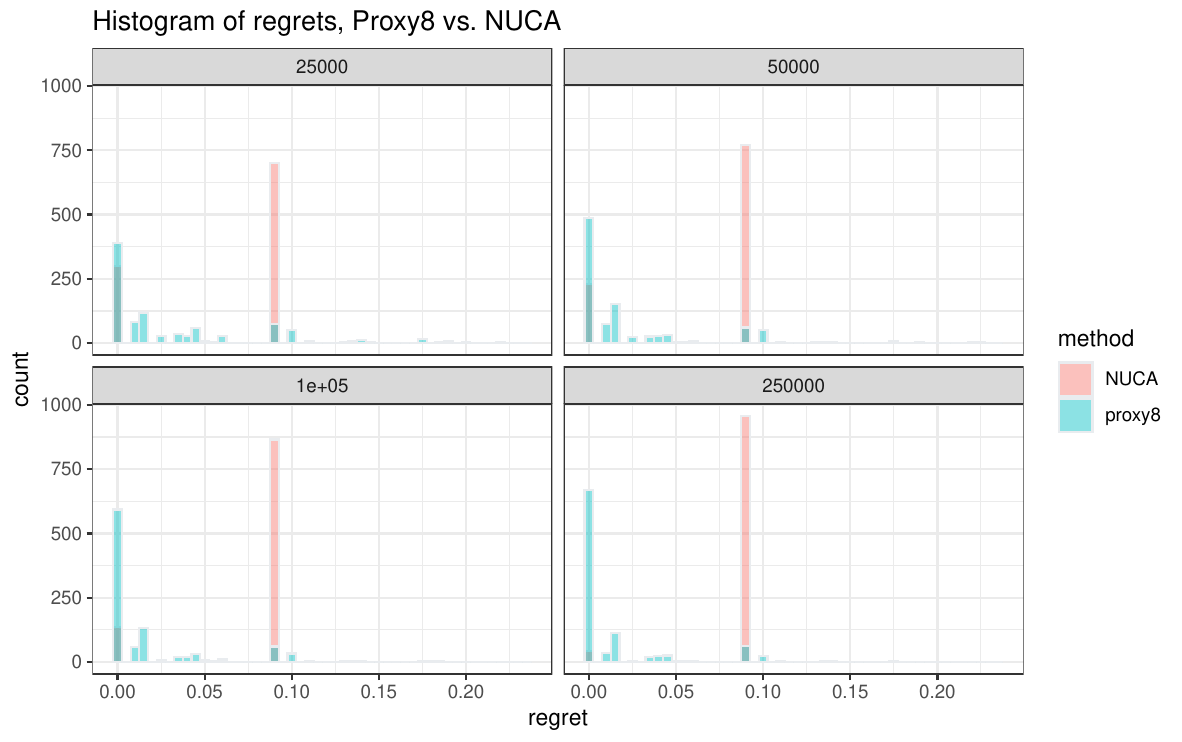}  
  \label{fig: sub8}
\end{subfigure}
\caption{Regret comparisons between the NUCA estimator and the proxy estimators fit with different orders of interactions. From clockwise, 4,5, 7, and 8.}
\label{fig: 4578}
\end{figure}

\end{document}